\theoremstyle{plain}
\newtheorem{theorem}{Theorem}
\newtheorem{proposition}[theorem]{Proposition}
\newtheorem{lemma}[theorem]{Lemma}
\newtheorem{corollary}[theorem]{Corollary}
\theoremstyle{definition}
\newtheorem{definition}[theorem]{Definition}
\newtheorem{remark}[theorem]{Remark}
\newcommand{\set}[2]{\left\{#1\ \middle\vert\ #2\right\}}
\newcommand{\leb}{\mathrm{LEB}}
\newcommand{\gleb}{\mathrm{G_\leb}}
\newcommand{\C}{\mathbb{C}}
\newcommand{\R}{\mathbb{R}}
\newcommand{\N}{\mathbb{N}}
\newcommand{\Z}{\mathbb{Z}}
\newcommand{\HH}{\mathbb{H}}
\newcommand{\absolute}[1] {\left|{#1}\right|}
\newcommand{\norm}[1]{\left\|{#1}\right\|}
\newcommand{\inpro}[2]{\langle{#1},{#2}\rangle}
\newcommand{\we}{w_{\text{even}}}
\newcommand{\wo}{w_{\text{odd}}}
\newcommand{\mRe}{\mathrm{Re}}
\newcommand{\mIm}{\mathrm{Im}}
\newcommand{\mnull}{\mathrm{N}}
\newcommand{\I}{\text{I}}
\newcommand{\II}{\text{II}}
\newcommand{\III}{\text{III}}
\newcommand{\IV}{\text{IV}}
\newcommand{\V}{\text{V}}
\newcommand{\VI}{\text{VI}}
\newcommand{\Rd}{R_{\IV \cup \V \cup \VI}}
\newcommand{\cA}{\mathcal{A}}
\newcommand{\cD}{\mathcal{D}}
\title{On the stability, complexity, and distribution of similarity classes of the longest edge bisection process for triangles}
\author{Daniel Kalmanovich}
\address{Department of Computer Science, Ariel University, Israel} 
\email{danielk@ariel.ac.il}
\urladdr{https://danielkalmanovich.wixsite.com/website}
\author{Yaar Solomon}
\address{Department of Mathematics, Ben-Gurion University of the Negev, Israel} 
\email{yaars@bgu.ac.il}
\urladdr{https://www.math.bgu.ac.il/~yaars/}
\begin{document}

\begin{abstract}
    The Longest Edge Bisection of a triangle is performed by joining the midpoint of its longest edge to the opposite vertex. Applying this procedure iteratively produces an infinite family of triangles. Surprisingly, a classical result of Stynes (1980) shows that for any initial triangle, the elements of this infinite family fall into finitely many similarity classes.

    While the set of classes is finite, it turns out that a far smaller, periodic subset of ``fat'' triangles effectively dominates the final mesh structure. This subset is comprised of periodic orbits of length four, which we refer to as {\bf terminal quadruples}. We prove the following asymptotic area distribution result: for every initial triangle, the portion of area occupied by these terminal quadruples tends to one, with the convergence occurring at an exponential rate. In fact, we provide the precise distribution of triangles in every step. We introduce the {\bf bisection graph} and use spectral methods to prove this result.

    Given this dominance, we provide a complete characterization of triangles possessing a single terminal quadruple, while conversely exhibiting a sequence of triangles with an unbounded number of terminal quadruples. Furthermore, we reveal several fundamental geometric properties of the points of a terminal quadruple, laying the groundwork for studying the geometric distribution of the entire orbit.
\end{abstract}

\maketitle

\section{Introduction}

\subsection{Background}

The Longest Edge Bisection ($\leb$) process is a straightforward iterative procedure, formally defined and studied by Rosenberg and Stenger in 1975~\cite{RoseS75}. In each step, every existing triangle is bisected by the median to its own longest edge, splitting it into two daughter triangles. Rosenberg and Stenger proved the critical property of non-degeneracy: for an initial triangle $\Delta_0$ with smallest angle $\alpha_0$, the angles of all triangles generated by the infinite process are bounded from below by $\frac{\alpha_0}{2}$. Another requirement for a useful triangulation process is the convergence of the mesh diameter to zero. Kearfott~\cite{Kear78} first proved this for $\leb$ in arbitrary dimensions; later, Stynes~\cite{Styn79, Styn80} and Adler~\cite{Adle83} provided sharpened bounds on the convergence rates.

A landmark result, known as the {\bf Finite Similarity Classes (FSC) theorem}, was established by Stynes~\cite[Corollary 1]{Styn80}. This theorem states that, although iteratively bisecting a single initial triangle produces infinitely many triangles, they belong to only finitely many similarity classes. This result was later reproved by Adler~\cite{Adle83} using an elegant inductive argument, and subsequently by Perdomo and Plaza~\cite{PerdP14}, who recast the analysis within the framework of hyperbolic geometry, a setting that proved to be particularly natural.

In their paper, Rosenberg and Stenger identified a specific region where triangles enter a periodic orbit of length $4$ under the $\leb$ process~\cite[Lemma 4]{RoseS75}, forming what we call a terminal quadruple. Building on this work and the FSC theorem, Stynes~\cite[Corollary 2]{Styn80} demonstrated that the proportion of the area occupied by triangles in these terminal quadruples converges to $1$.

These results pave the way for a deeper, quantitative analysis of the $\leb$ process as a dynamical system. While it is known that the variety of shapes is finite and that terminal quadruples asymptotically exhaust the area of the initial triangle, this does not describe the rate of convergence or the precise distribution of triangle shapes. Partial results in these directions were previously noted in~\cite{GutiGR07}.

\subsection{Preliminaries and main results}\label{subsec:prelim}

To analyze the $\leb$ process as a dynamical system, we employ the framework introduced by Perdomo and Plaza~\cite{PerdP14}, which utilizes the machinery of hyperbolic dynamics in a moduli space of triangle shapes modeled in the hyperbolic upper half-plane $\HH$. Given an arbitrary triangle, we normalize it by scaling its longest edge to length 1 and mapping this edge to the interval $[0, 1]$ on the real axis. By placing the third vertex in the upper half-plane $\HH$ such that the shortest edge lies on the left, we bijectively identify every similarity class of triangles with a unique point $z$ in the domain $\cD\subseteq\HH$ defined by (see Figure \ref{fig:normalized_region}): 
\[
\cD = \set{z\in \C}{\mRe(z)\le\frac12,\:  \mIm(z)> 0,\: \absolute{z-1}\le 1}.
\]
In light of this bijection, we will refer to elements $z \in \cD$ interchangeably as points or triangles. For such $z \in \cD$, bisecting the triangle by the median to its longest edge defines two maps $L, R: \cD \to \cD$, where $L(z)$ is the {\bf left triangle} and $R(z)$ is the {\bf right triangle}. As explained in~\cite{PerdP14}, the domain $\cD$ is subdivided into $6$ subregions by the geodesics $\mRe(z)=\frac14$, $\absolute{z} = \frac12$ and $\absolute{z-\frac12}=\frac12$ (see~Figure \ref{fig:normalized_subdivision}), where the functions $L$ and $R$ are defined in a piecewise manner, by a M\"obius or anti-M\"obius transformation on each one of these subregions. We consider the action of the semigroup $\inpro{L}{R}$ on $\cD$ and refer to words in this semigroup as the functions that they correspond to. We also adopt standard terminology from dynamical systems, such as the orbit of a point. The explicit formulas and geometric interpretations of these transformations are summarized in Table~\ref{tab:L_and_R}. An illustrative application presenting the geometry of the various transformations is available at~\url{https://danielkalmanovich.wixsite.com/website/leb}.
The reader is referred to \cite[\S 3]{Ahlfors} for more details on the action by M\"obius transformations on the upper half-plane.

\begin{figure}[h!]
    \centering
    \begin{minipage}[b]{.31\textwidth}
    \begin{tikzpicture}[scale=5.5] 
        
        \coordinate (A) at (0,0);
        \coordinate (B) at (0.5,0);
        \coordinate (C) at (0.5, {sqrt(3)/2});
        
        \filldraw[fill=green!20, draw=green!60!black, line width=0.8pt, line join=round]
          (A) -- (B) -- (C) arc (120:180:1) -- cycle;
        
        \draw[->, line width=0.5pt] (-0.2,0) -- (1.2,0) node[right] {$\mRe$};
        \draw[->, line width=0.5pt] (0,-0.2) -- (0,1.2) node[above] {$\mIm$};
        
        \foreach \x in {0.5, 1}
          \draw (\x, 1.5pt) -- (\x, -1.5pt) node[below=2pt] {\small$\x$};
        
        \foreach \y in {0.5, 1}
          \draw (1.5pt, \y) -- (-1.5pt, \y) node[left=2pt] {\small$\y i$};
          
        \node[below left=2pt] at (0,0) {\small$0$};
        
    \end{tikzpicture}
    \subcaption{The normalized region $\cD$}
    \label{fig:normalized_region}
    \end{minipage}
    \hfill
    \begin{minipage}[b]{.45\textwidth}
    \begin{tikzpicture}[scale=8.6],
        \coordinate (A) at (0,0.5);
        \coordinate (B) at (0.5,0.5);
        \coordinate (C) at (0.5, {0.5+sqrt(3)/2}); 
        
        \fill[green!15] (A) -- (B) -- (C) arc (120:180:1) -- cycle;
        
        
        \coordinate (D) at (0.25, 0.5);
        \coordinate (E) at (0.25, {0.5+sqrt(7)/4}); 
        \draw[blue!70!black, line width=0.6pt] (D) -- (E);
        
        \draw[blue!70!black, line width=0.6pt] (B) arc (0:{acos(0.25)}:0.5);
        
        \draw[blue!70!black, line width=0.6pt] (A) arc (180:90:0.5);
        
        
        \node at (0.375, 1.1) {I};
        \node at (0.21, 1.025) {II};
        \node at (0.15, 0.925) {III};
        \node at (0.41, 0.91) {IV};
        \node at (0.125, 0.6) {V};
        \node at (0.375, 0.6) {VI};

        \node[blue!70!black] at (0.17, 1.22) {$\mRe(z)=\frac14$};
        \node[blue!70!black] at (0.04, 1) {$\absolute{z}=\frac12$};
        \node[blue!70!black] at (0.625, 1) {$\absolute{z-\frac12}=\frac12$};
        
        \draw[green!60!black, line width=0.8pt, line join=round]
          (A) -- (B) -- (C) arc (120:180:1) -- cycle;
    \end{tikzpicture}
    \subcaption{The subdivision of $\cD$ into $6$ subregions}
    \label{fig:normalized_subdivision}
    \end{minipage}
\caption{The normalized region $\cD$ and its subdivision.}
\end{figure}
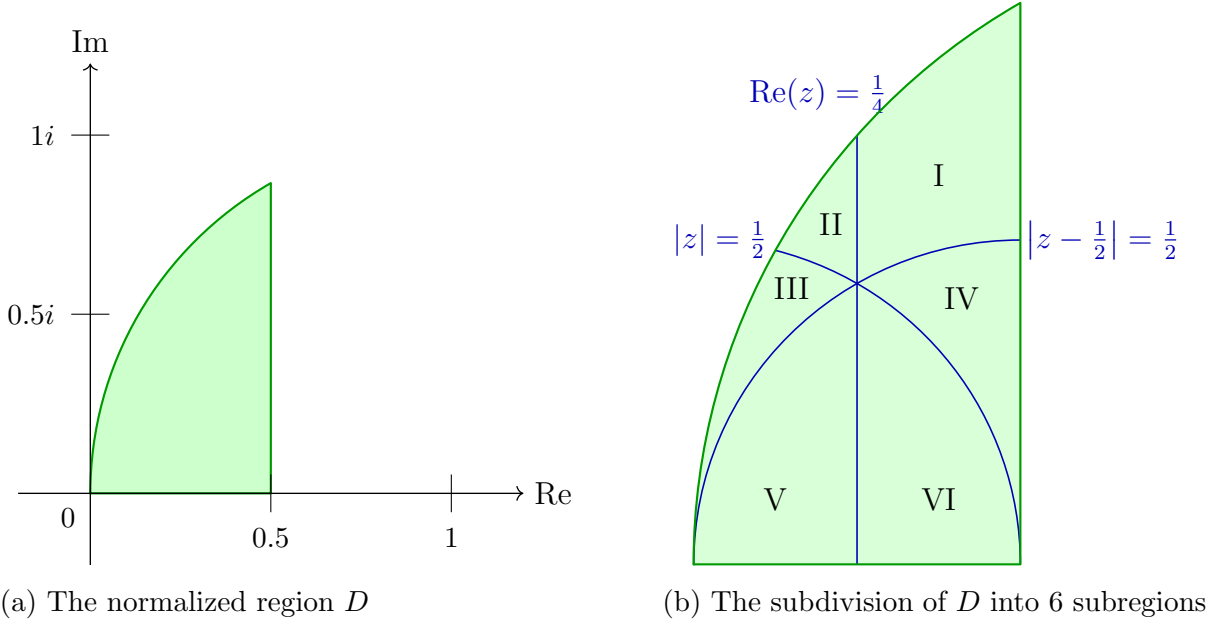

{\small \begin{table}[h!]
\begin{tabular}{c l >{\raggedright\arraybackslash}p{10cm}}
\toprule
\textbf{Region} & \textbf{Function} & \textbf{Geometric interpretation in $\cD$} \\
\midrule
I   & $L(z) = \frac{1}{2\overline{z}}$ &
An inversion with respect to the circle $|z|=\frac{\sqrt2}{2}$. 
\\
\addlinespace
II  & $L(z) = \frac{-1}{2z-1}$      & 
An elliptic M\"{o}bius map with a fixed point at $\frac{1}{4} + \frac{\sqrt(7)}{4} i\in \cD$. Points rotate counterclockwise by an angle of $\arccos\left(-\frac{3}{4}\right)$ about the fixed point. 
\\
\addlinespace
III & $L(z) = \frac{2\bar{z}}{2\bar{z}-1}$ & 
A hyperbolic anti-M\"{o}bius map. Points spiral from the repeller at $z=0$ to the attractor at $z=\frac32$. 
\\
\addlinespace
IV  & $L(z) = \frac{2z-1}{2z}$        & 
An elliptic M\"{o}bius map with a fixed point at $\frac12 + \frac12 i\in \cD$. Points rotate clockwise by an angle of $\frac{\pi}{2}$ about the fixed point.
\\
\addlinespace
V  & $L(z) = 2z$               & 
A hyperbolic M\"{o}bius map that dilates by $2$.
\\
\addlinespace
VI   & $L(z) = 1 - 2\bar{z}$      & 
A hyperbolic anti-M\"{o}bius map. A reflection across $\mRe(z)=\frac13$ combined with a dilation by a factor of $2$.
\\
\addlinespace
\midrule
\addlinespace
$\I\cup\II\cup\III$ & $R(z) = \frac{-1}{2z-2}$ & 
An elliptic M\"{o}bius map with a fixed point at $\frac12 + \frac12 i\in \cD$. Points rotate counterclockwise by an angle of $\frac{\pi}{2}$ about the fixed point.
\\
\addlinespace
$\IV\cup\V\cup\VI$ & $R(z) = \frac{2\bar{z}-1}{2\bar{z}-2}$ & 
An inversion with respect to the circle $|z-1|=\frac{\sqrt2}{2}$. \\
\bottomrule
\end{tabular}
\caption{The piecewise functions $L(z)$ and $R(z)$.}
\label{tab:L_and_R}
\end{table}}

In view of the fact that the $\leb$ process eventually produces predominantly terminal quadruples, it seems beneficial to move beyond this qualitative result and achieve a comprehensive quantitative and geometric understanding of the process. We are led to the following questions:

\begin{enumerate}
\item 
How many distinct terminal quadruples can arise from a given initial triangle $z$?
\item 
What proportion of the total area of the initial triangle $z$ do terminal quadruple triangles occupy after $k$ bisection steps?
\item 
Which geometric properties does the orbit of a triangle possess?
\end{enumerate}

To answer these questions, we introduce the {\bf bisection graph} of a triangle (see \S\ref{section:graph}) and analyze it using spectral methods. Furthermore, we adopt and refine the methods used in~\cite{PerdP14} to study the geometric properties of the orbit of $z$ under iterative $\leb$. We denote by:
\begin{itemize}
    \item 
    $\leb(z)$ the orbit of $z$,
    \item
    $q(z)$ the number of terminal quadruples that arise from $z$,
    \item
    $l(z)=\absolute{\leb(z)}$. 
\end{itemize}  

Our main results are the following.

\begin{theorem}\label{thm:occupying_area}
     For a triangle $z$ and $j\in\N$, let $w_j\in \R^{l(z)}$ be the probability vector describing the partition of the area of $z$ into triangles of different similarity classes, after $j$ bisection steps. Then for every triangle $z$ we have:
    \begin{enumerate}
        \item 
        The limits 
        $w_{even} = \displaystyle{\lim_{j\to\infty}w_{2j}}$  and 
        $w_{odd} = \displaystyle{\lim_{j\to\infty}w_{2j+1}}$ exist, and their non-zero entries correspond to elements in terminal quadruples.
        \item 
        There exists $\xi\in (0,1)$ such that for every $j\in \N$, the area of $z$ that is occupied by triangles not belonging to terminal quadruples after $j$ steps is $O(\xi^j)$.
    \end{enumerate}
\end{theorem}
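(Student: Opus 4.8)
The plan is to turn the evolution of the area distribution into a linear recursion governed by the bisection graph, and then read off the asymptotics from the combinatorial and spectral structure of the corresponding transition operator. Since bisecting by the median to the longest edge splits a triangle into two daughters, each of exactly half the area and belonging to the similarity classes $L(c)$ and $R(c)$, the area fraction occupying class $d$ after $j+1$ steps equals $\tfrac12\sum_{c:\,L(c)=d}(w_j)_c+\tfrac12\sum_{c:\,R(c)=d}(w_j)_c$. Thus $w_{j+1}=Mw_j$, where $M\in\R^{l(z)\times l(z)}$ is the column-stochastic matrix $M_{d,c}=\tfrac12\bigl(\mathds{1}[L(c)=d]+\mathds{1}[R(c)=d]\bigr)$; equivalently, $M$ is the transition matrix of the random walk on the bisection graph that follows each of the two outgoing edges with probability $\tfrac12$. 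As the process starts from the single triangle $z$, the vector $w_0$ is a standard basis vector and $\mathds{1}^{\!\top}w_j=1$ for every $j$.

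Next I would exploit the structure of the bisection graph: its vertex set $\leb(z)$ decomposes into the transient classes and the terminal quadruples, where a terminal quadruple is precisely a terminal strongly connected component — a sink from which no edge leaves — and, by the classical finiteness analysis (Rosenberg–Stenger, Stynes) together with the geometric description of terminal quadruples developed earlier in the paper, every class reaches one of the $q(z)$ terminal quadruples. Listing the recurrent classes first puts $M$ in block form $M=\left(\begin{smallmatrix}M_{\mathrm{rec}}&C\\0&M_{\mathrm{tr}}\end{smallmatrix}\right)$ with $M_{\mathrm{rec}}$ block-diagonal over the individual terminal quadruples. This already yields part (2): from each transient class there is a directed path of length at most $l(z)$ into some terminal quadruple, so every column of $M_{\mathrm{tr}}^{\,l(z)}$ has sum at most $1-2^{-l(z)}$; submultiplicativity of the $\ell^1$ operator norm gives $\norm{M_{\mathrm{tr}}^{\,j}}_1\le(1-2^{-l(z)})^{\floor{j/l(z)}}$, and the area not contained in terminal quadruples after $j$ steps is exactly $\norm{w_j^{\mathrm{tr}}}_1=\norm{M_{\mathrm{tr}}^{\,j}w_0^{\mathrm{tr}}}_1=O(\xi^j)$ for $\xi=(1-2^{-l(z)})^{1/l(z)}\in(0,1)$ (and the statement is vacuous when $z$ itself lies in a terminal quadruple).

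For part (1), observe first that $M$, being column-stochastic, is power-bounded in the $\ell^1$ norm, so every unit-modulus eigenvalue is semisimple. From the block decomposition, $w_j^{\mathrm{tr}}=M_{\mathrm{tr}}^{\,j}w_0^{\mathrm{tr}}\to 0$ geometrically and $w_j^{\mathrm{rec}}=M_{\mathrm{rec}}^{\,j}w_0^{\mathrm{rec}}+\sum_{m=1}^{j}M_{\mathrm{rec}}^{\,j-m}\,C\,w_{m-1}^{\mathrm{tr}}$. The key input — and the step I expect to be the real obstacle — is that the bisection subgraph induced on each terminal quadruple has period at most $2$; this must be extracted from the explicit (anti-)Möbius transformations of Table~\ref{tab:L_and_R} restricted to the terminal subregion of $\I\cup\IV$ and from the structural description of terminal quadruples established earlier. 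Granting it, each diagonal block of $M_{\mathrm{rec}}$ is an irreducible stochastic matrix of period $1$ or $2$, hence its square is aperiodic on each cyclic class, so $M_{\mathrm{rec}}^{\,2n}$ and $M_{\mathrm{rec}}^{\,2n+1}$ both converge. Because $\sum_{m>T}\norm{C\,w_{m-1}^{\mathrm{tr}}}_1\to 0$ as $T\to\infty$, uniformly in $j$, by the geometric decay of part (2), the convolution term in $w_j^{\mathrm{rec}}$ is, up to a uniformly negligible tail, a fixed finite sum that converges along even and along odd indices; hence $\we=\lim_n w_{2n}$ and $\wo=\lim_n w_{2n+1}$ exist.

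It remains to identify the supports. Since $w_j^{\mathrm{tr}}\to 0$, both $\we$ and $\wo$ vanish on every transient class. Conversely, fix a terminal quadruple $Q$; it is reachable from $z$, so a strictly positive amount of area enters $Q$ at some finite step, stays thereafter inside $Q$, and is redistributed over $Q$ by the irreducibility of the restricted walk. Consequently, on each cyclic class of $Q$ the corresponding coordinates of $\we$ (when that class is visited at even times) or of $\wo$ (when it is visited at odd times) are strictly positive; either way, every coordinate indexed by a triangle of $Q$ is carried by $\we$ or by $\wo$. Thus the non-zero coordinates of $\we$, and likewise of $\wo$, lie among the terminal-quadruple coordinates, which the two limits jointly exhaust, completing the proof.
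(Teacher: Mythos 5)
Your argument is correct, and it reaches the theorem by a genuinely different route from the one in the paper. The paper works with the (unnormalized) adjacency matrix $A$ of the bisection graph and argues spectrally: Perron--Frobenius considerations (Lemma~\ref{lem:non-final_irreducible_blocks}) together with a theorem of Berman--Plemmons (Theorem~\ref{thm:basis_of_eigenvectors}) show that $\pm 2$ are semisimple eigenvalues of multiplicity $q(z)$, that all other eigenvalues have modulus strictly less than $2$, and that generalized eigenvectors for $\mu\neq 2$ lie in $\mathds{1}^\perp$ (Lemma~\ref{lem:perp_to_1}); both the existence of the even/odd limits and the exponential rate then fall out of the Jordan decomposition of $A^2$ applied to $e_1$. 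You instead treat $M=A/2$ as an absorbing Markov chain: the block-triangular split into transient and recurrent parts, a path-counting bound showing each column of $M_{\mathrm{tr}}^{\,l(z)}$ sums to at most $1-2^{-l(z)}$, and a Duhamel-type convolution formula combined with the period-$2$ structure of each terminal block. Both proofs ultimately rest on the same two structural facts --- every vertex reaches a terminal quadruple, and each terminal quadruple is a closed directed $K_{2,2}$ (so each recurrent block is irreducible of period at most $2$, with the degenerate non-generic cases of Corollary~\ref{cor:triangles_with_l(z)<4} collapsing to smaller irreducible blocks that your ``period $1$ or $2$'' phrasing already covers) --- and the fact you flag as the ``real obstacle'' is indeed supplied by the paper's description of terminal quadruples as $\{z,L(z)\}\subseteq\I$ and $\{R(z),RR(z)\}\subseteq R(\I)$ with all edges crossing. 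What the two approaches buy is different: yours is more elementary, avoids the generalized-eigenspace machinery entirely, and produces an explicit (if crude) rate $\xi=(1-2^{-l(z)})^{1/l(z)}$; the paper's yields the sharper rate $\xi>\sqrt{|\lambda_2|}/2$ in terms of the subdominant eigenvalue and identifies $\we=v+\tilde v$, $\wo=v-\tilde v$ with $v\in E_2$, $\tilde v\in E_{-2}$, which is the form that feeds into the open problems on the spectrum of $\gleb(z)$. Your closing discussion of the supports is also slightly finer than the paper's, which only records that the limits are supported on terminal-quadruple coordinates.
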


This theorem immediately implies Stynes' result, stating that the area of any initial triangle $z$ gets filled with triangles that belong to terminal quadruples (see~\cite[Corollary 2]{Styn80}). Furthermore, Theorem~\ref{thm:occupying_area} provides the precise distribution vector at every step and the rate of convergence. 

It follows that the number of triangle shapes that one sees is essentially $4q(z)$ rather than $l(z)$. For instance, taking an acute isosceles triangle with apex angle $\alpha$ we have $l(z)=\Theta\left(\log\left(\frac{1}{\alpha}\right)\right)$ (see ~\cite[Lemma~2]{GutiGR07} for the upper bound, and the lower bound is obtained by inspecting the function $L$ in region $\III$), while according to Theorem~\ref{thm:q(z)=1} below we have $q(z)=1$. In general, the following theorems shed light on the quantity $q(z)$, for $z$ in the different regions.  

\begin{theorem}\label{thm:q(z)=1}
    For every $z\in \I\cup \II\cup\III\cup\IV$ we have $q(z)=1$. 
\end{theorem}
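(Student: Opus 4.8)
Since Theorem~\ref{thm:occupying_area} already forces $q(z)\ge 1$ for \emph{every} triangle $z$ --- were $\leb(z)$ to contain no terminal quadruple, then after every step the whole area of $z$ would be carried by triangles outside terminal quadruples, contradicting the $O(\xi^j)$ bound --- the content of the statement is the inequality $q(z)\le 1$ for $z\in\I\cup\II\cup\III\cup\IV$. Equivalently, in the language of \S\ref{section:graph}, the bisection graph of such a $z$ has a unique terminal (sink) strongly connected component. The plan is to exhibit one terminal quadruple $Q$ and show that \emph{every} branch of the bisection tree rooted at $z$ eventually enters $Q$; then any terminal quadruple in $\leb(z)$, being reached by some branch, must coincide with $Q$, so $q(z)=1$.

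The first ingredient is a description of the terminal region $T\subseteq\I\cup\IV$ surrounding the fixed point $p=\tfrac12+\tfrac12 i$ (the right isosceles triangle). On region~$\I$ both $L(z)=\tfrac{1}{2\bar z}$ and $R(z)=\tfrac{-1}{2z-2}$ are hyperbolic isometries fixing $p$: the former is the reflection in the geodesic $|z|=\tfrac{\sqrt2}{2}$ (a half-circle centered on $\R$ and passing through $p$), the latter the elliptic rotation by $\tfrac\pi2$ about $p$; likewise on region~$\IV$ the map $L(z)=\tfrac{2z-1}{2z}$ is the rotation by $-\tfrac\pi2$ about $p$ and $R(z)=\tfrac{2\bar z-1}{2\bar z-2}$ is the reflection in the geodesic $|z-1|=\tfrac{\sqrt2}{2}$, again through $p$. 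These two geodesics meet orthogonally at $p$, so the four maps generate a finite (dihedral) group of isometries fixing $p$; consequently any bisection orbit confined to $\I\cup\IV$ lies on a single hyperbolic circle about $p$, is finite, and forms a single sink component --- a terminal quadruple --- so that $q\equiv 1$ on $T$. It therefore remains to show that every $z\in\I\cup\II\cup\III\cup\IV$ is absorbed into such a quadruple along all branches.

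For this I would prove, by a finite case analysis driven by Table~\ref{tab:L_and_R}: (i) $\I\cup\II\cup\III\cup\IV$ is forward invariant under $L$ and $R$ --- one tracks the images of the three dividing geodesics $\mRe(z)=\tfrac14$, $|z|=\tfrac12$, $|z-\tfrac12|=\tfrac12$ under each (anti-)M\"obius piece and checks that $\V\cup\VI$ is never entered; (ii) regions $\II$ and $\III$ are transient --- the elliptic $L$ on $\II$, the hyperbolic $L$ on $\III$, and the rotational $R$ common to $\I\cup\II\cup\III$ push every orbit out of $\II\cup\III$ in boundedly many steps, so the orbit settles into $\I\cup\IV$ and then into $T$; and (iii), the crux and the expected main obstacle, that all branches are absorbed into the \emph{same} quadruple. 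For (iii) I anticipate a monotonicity argument: along any orbit inside $\I\cup\II\cup\III\cup\IV$ a suitable ``fatness'' functional --- the smallest angle, or a comparable quantity that the isometric action on $\I\cup\IV$ preserves, akin to $d_\HH(\cdot,p)$ --- is nondecreasing and is strictly monotone off $T$, so the orbit is eventually pinned to one hyperbolic circle about $p$; this circle together with the finite-group structure on $\I\cup\IV$ determines the absorbing quadruple independently of the branch. The delicate part is controlling this functional across the transient excursions through $\II$ and $\III$, where the maps are neither isometries nor fix $p$, and across the acute/obtuse interface between $\I$ and $\IV$ --- and it is exactly this control that breaks down for $\V\cup\VI$, which is what makes the theorem a genuine dichotomy rather than a statement about all of $D$.
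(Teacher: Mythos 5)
Your step (i) is false, and it matters: $\I\cup\II\cup\III\cup\IV$ is \emph{not} forward invariant under $L$ and $R$. For $z\in\II$ one has $R(z)\in\IV$ and then $RR(z)\in\VI$, and for $z\in\III$ the orbit likewise passes through $\V$ (both facts are used explicitly in the paper's Lemma~\ref{lem:L=LRR_in_II_n_III}). So the dichotomy between the hypothesis region and $\V\cup\VI$ cannot be ``orbits never enter $\V\cup\VI$'': orbits starting in $\II\cup\III$ do enter those regions and still satisfy $q(z)=1$. More seriously, your step (iii) --- the only step that would actually prove uniqueness --- is not an argument but a hope. A functional that is monotone along each branch only pins each branch to \emph{some} hyperbolic circle about $p$; nothing you propose forces two different branches onto the \emph{same} circle, hence into the same quadruple. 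And no soft monotonicity argument of the kind you sketch (smallest angle, $d_{\HH}(\cdot,p)$) can work, because it would apply just as well to $\V\cup\VI$, where Theorem~\ref{thm:arbitrarily_many_quadruples} shows $q$ is unbounded.

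The mechanism the paper uses is entirely different and is the idea your proposal is missing: exact algebraic identities showing that the two branches of the bisection tree \emph{collapse into each other}. Concretely, $LR(z)=z$ for $z\in\I\cup\II\cup\III$; $LRR(z)=L(z)$ and $RRR(z)=R(z)$ for $z\in\II\cup\III$; and $RR(z)=z$, $LL(z)=LR(z)$ for $z\in\IV$ (Lemmas~\ref{lem:R_maps_I,II,III_to_IV}, \ref{lem:L=LRR_in_II_n_III}, \ref{lem:LL=LR_in_IV}). Substituting these into the recursion $\leb(z)=\{z\}\cup\leb(L(z))\cup\leb(LR(z))\cup\leb(RR(z))\cup\{R(z)\}$ shows that the $R$-branch contributes only finitely many isolated, non-terminal points plus a copy of the $L$-branch, so the whole orbit reduces to a finite tail together with $\leb(L^k(z))$ for some $k$ with $L^k(z)\in\I\cup\II$, which yields a single quadruple. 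It is precisely the failure of these collapsing identities on $\V\cup\VI$ (where $h=L\circ R$ is parabolic and generates long non-repeating itineraries) that produces many quadruples there. Your second paragraph --- the dihedral description of the dynamics on $\I\cup R(\I)$ about $p=\frac12+\frac12 i$ --- is correct and matches Proposition~\ref{prop:terminal_sets_are_on_a_circle}, and your reduction of the theorem to $q(z)\le 1$ is fine; but (i) and (iii) must be replaced by a region-by-region verification of the identities above for the proof to go through.
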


\begin{theorem}\label{thm:arbitrarily_many_quadruples}
    For every $n\in\N$, there are triangles $z$ in regions $\V$ and $\VI$ for which $q(z) > n$. 
\end{theorem}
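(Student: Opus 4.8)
The plan is to exploit the two \emph{expanding} branch maps --- $L(z)=2z$ on region $\V$ and $L(z)=1-2\bar z$ on region $\VI$ --- which dilate away from the degenerate vertices $0$ and $\tfrac13$, together with the fact (Theorem~\ref{thm:occupying_area}(1)) that every right branch peeled off along the process spawns a subtree containing at least one terminal quadruple. Fix $n$ and take $z=x+iy\in\V$ very close to the cusp at $0$. Since region $\V$ near $0$ is the parabolic cusp $\{\,0<y^{2}<x-x^{2},\ x\le\tfrac14\,\}$, choosing $y$ small relative to $x$ (say $y=x^{3/4}$) keeps the left spine $z\to2z\to4z\to\cdots$ inside $\V$ for $i=0,\dots,m(z)$ with $m(z)\sim\tfrac12\log_{2}\tfrac1x\to\infty$ as $x\to0$; in particular $m(z)>n$ once $x$ is small enough. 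At the $i$-th spine node $2^{i}z$ take the right branch $R(2^{i}z)=\dfrac{2^{i+1}\bar z-1}{2^{i+1}\bar z-2}=\tfrac12-2^{i-1}\bar z+O(4^{i}|z|^{2})$; these points leave $\V$, march geometrically away from the degenerate vertex $\tfrac12$ into region $\VI$, and each spawns a subtree that (by Theorem~\ref{thm:occupying_area}(1)) reaches some terminal quadruple $T_{i}$. Hence $q(z)\ge\bigl|\{T_{0},\dots,T_{m(z)}\}\bigr|$, and it remains only to force this set to have more than $n$ members. (That such a cascade must live in $\V$ or $\VI$ is consistent with Theorem~\ref{thm:q(z)=1}.)

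To count the $T_{i}$ I would use the following structural fact about the dynamics near the degenerate vertex $\tfrac12$, to be extracted by refining the hyperbolic--geometric analysis of Perdomo and Plaza: a near-degenerate triangle $w$ close to $\tfrac12$ carries a small shape parameter $\varepsilon(w)\to0$, and the word in $\{L,R\}$ that drives $w$ out of the degenerate regime is, up to a bounded discrepancy, the Euclidean-algorithm / continued-fraction expansion of $\varepsilon(w)$ --- the same mechanism responsible for the logarithmic bound on $l(z)$ for near-isosceles triangles recalled in the introduction --- and the terminal quadruple at which this word terminates depends non-trivially (indeed non-eventually-constantly) on $\varepsilon(w)$. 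Granting this, one can select well-separated scales $\varepsilon_{0}>\cdots>\varepsilon_{n}$ whose driving words end at $n+1$ pairwise distinct terminal quadruples, and then choose $\arg z$ so that the arc $\{R(2^{i}z)\}$ --- which lies along a fixed ray into $\tfrac12$ at the geometrically growing distances $2^{i-1}|z|$ --- passes (up to the negligible $O(4^{i}|z|^{2})$ errors) through triangles realizing $\varepsilon_{0},\dots,\varepsilon_{n}$; this is possible once $|z|$ is small, since then the entire geometric ladder of scales down to $\varepsilon_{n}$ is available. With this choice $T_{0},\dots,T_{n}$ are pairwise distinct, so $q(z)>n$, giving the required triangles in $\V$.

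For region $\VI$ one argues symmetrically. On $\VI$ the map $L(z)=1-2\bar z$ reflects-and-dilates about $\tfrac13$, so $L^{2}$ is expanding about the degenerate vertex $\tfrac13$ and yields, for a triangle $w\in\VI$ within $\varepsilon$ of the real axis near $\tfrac13$, a left spine of length $\Theta(\log\tfrac1\varepsilon)$ whose right branches cascade out to terminal quadruples exactly as above. Alternatively, since $R$ restricted to $\V$ is the \emph{involutive} inversion in the circle $|z-1|=\tfrac{\sqrt2}{2}$ and carries a piece of $\V$ onto a piece of $\VI$, one may pull a region-$\V$ example back one bisection step to obtain a point of $\VI$ with the same (or larger) $q$, and prepend one further cascade step if equality threatens. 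Either way, for every $n$ we obtain a triangle in $\V$ and a triangle in $\VI$ with $q>n$.

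The main obstacle is the structural fact of the second paragraph: proving that the terminal quadruple produced at the end of the driving word genuinely keeps changing as the shape parameter degenerates, rather than stabilizing onto a bounded list of quadruples. What must be excluded is a global coincidence whereby all sufficiently degenerate triangles funnel into finitely many quadruples --- and this is exactly the point where a pigeonhole argument on the $T_{i}$ is useless and one must instead track, inside $\HH$, how the hyperbolic position of the exit triangle depends on the coding word and establish injectivity across infinitely many scales. I expect the cleanest execution is to exhibit explicit parameters $\varepsilon_{k}\to0$ --- with prescribed continued-fraction (equivalently dyadic-type) expansions --- whose driving words terminate at $k$ pairwise distinct terminal quadruples, verified by a finite computation in $\HH$ using the transformations of Table~\ref{tab:L_and_R}, and then to realize the corresponding near-degenerate triangles along one spine by tuning $\arg z$; checking that the chosen $\varepsilon_{k}$ actually occur as the shape parameters of the points $R(2^{i}z)$ is the remaining (routine) bookkeeping.
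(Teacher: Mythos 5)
Your argument has a genuine gap at exactly the point you flag as "the main obstacle": nothing in the proposal establishes that the terminal quadruples $T_0,\dots,T_{m(z)}$ spawned by the right branches $R(2^iz)$ are pairwise distinct. The fact that each right branch eventually reaches \emph{some} terminal quadruple is just the FSC theorem and by itself only yields $q(z)\ge 1$; the entire content of the theorem lies in ruling out the "global coincidence" you describe, whereby all the near-degenerate triangles accumulating at $\tfrac12$ funnel into a bounded list of quadruples. The proposed remedy --- a continued-fraction/Euclidean-algorithm coding of the shape parameter $\varepsilon(w)$ near the degenerate vertex, together with an injectivity claim across scales --- is stated as a structural fact "to be extracted" but is neither proved nor reduced to anything checkable; tracking the full $\{L,R\}$-itinerary of a near-degenerate triangle in $\VI$ until it exits the degenerate regime is precisely the hard dynamics the proof needs to control, and the $O(4^i|z|^2)$ error bookkeeping does not substitute for it. As written, the argument proves nothing beyond what is already known.

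The paper sidesteps this difficulty entirely by a different choice of orbit to follow. Instead of the left spine $z\mapsto 2^iz$ with right branches escaping toward $\tfrac12$, it iterates the \emph{composite} map $h=L\circ R(z)=\frac{-z}{z-1}$ on the subregion $\tilde\V$, a parabolic transformation whose orbit moves along a circle tangent to the real axis at $0$. For the explicit points $z_m=\zeta/2^m$ it shows by induction (Lemma~\ref{lem:2^m_points_in_B}) that $2^{m-1}+1$ distinct points of the $h$-orbit remain inside $B_m=\frac{1}{2^m}\I$, a scaled copy of region $\I$; applying a power of $L$ (which is just multiplication by $2$ on $\V$) then lands $2^{m+1}+1$ \emph{distinct} points of $\leb(z)$ inside region $\I$ itself. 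Distinctness of quadruples is then immediate from a pigeonhole in the opposite direction to the one you were trying to avoid: each terminal quadruple contains at most two points of region $\I$, so $q(z)\ge\frac{2^{m+1}+1}{2}>n$. Your reduction of the region-$\VI$ case to the region-$\V$ case via the involution $R$ on $\V\cup\VI$ is essentially the paper's argument and is fine, but it is moot until the $\V$ case is actually established.
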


The rest of the paper is structured as follows: in \S\ref{section:tq} we provide a classification of the non-generic triangles, a short discussion of geometric properties of terminal quadruples, a proof of Theorem~\ref{thm:q(z)=1}, and a proof of Theorem~\ref{thm:arbitrarily_many_quadruples}. In \S\ref{section:graph} we introduce the bisection graph and prove our main result, Theorem~\ref{thm:occupying_area}, which motivates our study of the terminal quadruples in \S\ref{section:tq}. Concluding remarks and open problems are presented in~\S\ref{sec:concluding}. 

\section{Terminal quadruples}\label{section:tq}

We start this section with the definition of the generic triangles with respect to the $\leb$ process and their classification.

\subsection{Generic triangles and triangles with orbits of size less than $4$}\label{subsec:non_gen_terminal}

It can happen that $L(z)$ is similar to $R(z)$, or that one of them is similar to $z$, thus we define:

\begin{definition}\label{def:generic}
    A triangle $z\in \cD$ is {\bf generic} if for each $w\in\leb(z)$, the points $w, L(w)$ and $R(w)$ are all distinct.    
\end{definition}  

The following proposition is a characterization of the non-generic triangles (see Figure~\ref{fig:non-generic}).

\begin{proposition}\label{prop:non-generic}
    A triangle $z\in \cD$ is generic if and only if  $\leb(z)$ does not intersect the following $3$ geodesics: 
    \begin{equation}\label{eq:non-generic}
    \mRe(w)=\frac{1}{2},~~~ 
    \absolute{w}=\frac{\sqrt2}{2},~~~    
    \absolute{w-1}=\frac{\sqrt2}{2}.
    \end{equation}
\end{proposition}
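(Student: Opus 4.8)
The plan is to reduce the statement to a pointwise fact about a single triangle and its two children, and then lift it to the whole orbit. The pointwise fact I would prove is: for every $w\in D$, the three points $w$, $L(w)$, $R(w)$ fail to be pairwise distinct if and only if $w$ lies on one of the three geodesics in~\eqref{eq:non-generic}. Granting this, Proposition~\ref{prop:non-generic} is immediate, since $z$ is generic precisely when every $w\in\leb(z)$ has $w$, $L(w)$, $R(w)$ distinct, which is exactly the condition that no point of $\leb(z)$ lies on any of those three geodesics.

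To prove the pointwise fact I would first locate the three geodesics within the subdivision of $D$. Using $|w-\tfrac12|^2=|w|^2-\mRe(w)+\tfrac14$ together with $|w-1|^2=|w|^2-2\mRe(w)+1$, one checks that $\{\,|w|=\tfrac{\sqrt{2}}{2}\,\}\cap D\subseteq\overline{\I}$, that $\{\,|w-1|=\tfrac{\sqrt{2}}{2}\,\}\cap D\subseteq\overline{\IV\cup\V\cup\VI}$, and that $\{\,\mRe(w)=\tfrac12\,\}\cap D\subseteq\overline{\I}\cup\overline{\IV}$, the two pieces of the last being separated by the sign of $\mIm(w)-\tfrac12$. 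On each of these regions the relevant transformation from Table~\ref{tab:L_and_R} has a transparent fixed locus: on $\I$ the map $L(w)=\tfrac{1}{2\bar w}$ satisfies $L(w)=w$ exactly when $2|w|^2=1$; on $\IV\cup\V\cup\VI$ the map $R$ is the inversion in $\{\,|z-1|=\tfrac{\sqrt{2}}{2}\,\}$, whose fixed set is exactly that circle; and on both $\I$ (where $L=\tfrac{1}{2\bar w}$, $R=\tfrac{-1}{2w-2}$) and $\IV$ (where $L=\tfrac{2w-1}{2w}$, $R=\tfrac{2\bar w-1}{2\bar w-2}$) a one-line computation gives $L(w)=R(w)\iff w+\bar w=1$. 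This establishes the ``if'' direction.

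For the converse one runs through the six regions and checks that no other coincidence is possible: on $\V$, $L(w)=2w\ne w$; on $\III$ and $\VI$, $L$ is anti-M\"obius and $L(w)=w$ forces $\mIm(w)=0$; on $\II$ and $\IV$, $L$ is elliptic with fixed point $\tfrac14+\tfrac{\sqrt{7}}{4}i$ and $\tfrac12+\tfrac12 i$ respectively, each of which already lies on a geodesic of~\eqref{eq:non-generic}; the elliptic $R$ on $\I\cup\II\cup\III$ has fixed point $\tfrac12+\tfrac12 i\in\{\,\mRe=\tfrac12\,\}$; and $L(w)=R(w)$ has no solution in $D$ on $\II$, $\III$, $\V$, $\VI$ (on $\II$ it reduces to $-1=-2$, and on $\III$, $\V$, $\VI$ it forces $\mIm(w)=0$). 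Since $L$ and $R$ are genuine continuous functions on $D$, the piecewise formulas agree on shared boundaries, so a point lying in several closed regions may be treated with any one of the corresponding formulas, and the six-region check is therefore exhaustive. The main obstacle is bookkeeping rather than depth: one must track which formula is in force at each point --- in particular at the triple intersection $\tfrac14+\tfrac{\sqrt{3}}{4}i$ of the subdividing geodesics $\mRe(z)=\tfrac14$, $|z|=\tfrac12$, $|z-\tfrac12|=\tfrac12$, and all along the region boundaries --- and confirm at the outset that each geodesic of~\eqref{eq:non-generic} is confined to the closure of the regions whose transformation exhibits it as a fixed locus. I would carry out the (anti-)M\"obius computations explicitly in each region and leave the elementary algebra, of the type $2(w-1)=-2\bar w\Leftrightarrow w+\bar w=1$, to the reader.
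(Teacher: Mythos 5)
Your proposal is correct, and its skeleton matches the paper's: reduce the proposition to the pointwise claim that $w$, $L(w)$, $R(w)$ fail to be pairwise distinct exactly when $w$ lies on one of the three geodesics of \eqref{eq:non-generic}, and then analyze the three possible coincidences. Your treatment of $L(w)=w$ and $R(w)=w$ is essentially identical to the paper's (the two inversions fix their respective geodesics pointwise, the elliptic branches $L_\II$, $L_\IV$, $R_{\I\cup\II\cup\III}$ have their unique fixed points already sitting on those geodesics, and the remaining branches are fixed-point-free in $\HH$). Where you genuinely diverge is the coincidence $L(w)=R(w)$: you establish $L(w)=R(w)\iff\mRe(w)=\tfrac12$ by an algebraic check over all six regions, finding $w+\bar w=1$ on $\I$ and $\IV$ and no solutions with $\mIm(w)>0$ elsewhere, whereas the paper argues once, geometrically, for all regions at the same time --- $R(w)$ is never acute, an obtuse $L(w)$ has a strictly smaller obtuse angle than $R(w)$, and $L(w)=R(w)$ forces $w$ to be isosceles with apex at least $\pi/3$, i.e.\ $\mRe(w)=\tfrac12$. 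Your route trades one geometric observation for six short computations plus the boundary bookkeeping you correctly identify as the main burden (locating each geodesic of \eqref{eq:non-generic} inside the closures of the right regions, and noting that the piecewise formulas agree along the subdividing geodesics, which is worth verifying explicitly rather than asserting); the paper's route avoids that bookkeeping entirely. Both are valid.
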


\begin{proof}
    To help the reader follow the computations in the proof, we use subscripts to indicate which piece of the piecewise maps $L$ and $R$ is being applied. 

    First, we address points $w\in \cD$ that are fixed under either $L$ or $R$. Since $L_\I$ is an inversion with respect to the geodesic $\absolute{w}=\frac{\sqrt2}{2}$, we have $L_\I(w)=w$ for points lying on this geodesic. Similarly, $R_{\IV\cup\V\cup\VI}$ is an inversion with respect to the geodesic $\absolute{w-1}=\frac{\sqrt2}{2}$ and so $R_{\IV\cup\V\cup\VI}(w)=w$ there. Each one of $L_\II, L_\IV$ and $R_{\I\cup\II\cup\III}$ has a unique fixed point, which is on one of these two geodesics, and $L_\III, L_\V$ and $L_\VI$ have no fixed points in $\cD$. Thus $w=R(w)$ or $w=L(w)$ if and only if $\absolute{w-1}=\frac{\sqrt2}{2}$ or $\absolute{w}=\frac{\sqrt2}{2}$.  

    Next, we show that $L(w)=R(w)$ if and only if $\mRe(w)=\frac12$. Notice that for every $w\in \cD$, the triangle $R(w)$ is not acute. In case $L(w)$ is obtuse, one can easily see that $L(w)\neq R(w)$ because the obtuse angle of $L(w)$ is strictly smaller than that of $R(w)$. So we may assume that $L(w)$ is not obtuse. Thus $L(w)=R(w)$ if and only if $L(w)$ and $R(w)$ are the same right-angled triangle. Observe that if $L(w)=R(w)$ then $w$ is isosceles, and if the apex of an isosceles triangle $w$ is $<\frac{\pi}{3}$, then its longest edge is its leg, thus $L(w)\neq R(w)$ in that case. Hence, we conclude that $L(w)=R(w)$ if and only if $w$ is an isosceles triangle with apex $\ge\frac{\pi}{3}$. Thus $L(w)=R(w)$ if and only if $\mRe(w)=\frac12$.
\end{proof}

\begin{figure}[h!]
\centering
\begin{minipage}[b]{.475\textwidth}

\begin{tikzpicture}[scale=9] 

    \coordinate (A) at (0,0);
    \coordinate (B) at (0.5,0);
    \coordinate (C) at (0.5, {sqrt(3)/2});

    \fill[green!15] (A) -- (B) -- (C) arc (120:180:1) -- cycle;

    \draw[black!30, line width=0.6pt, opacity=0.6]
        (0.25, 0) -- (0.25, {sqrt(7)/4}) 
        (0.5,0) arc (0:{acos(0.25)}:0.5) 
        (0,0) arc (180:90:0.5); 
     
    \draw[black!30, line width=0.6pt, line join=round]
      (A) -- (B) -- (C) arc (120:180:1) -- cycle;

    
    \draw[green!60!black, line width=2pt] (0.5, 0.001) -- (0.5, {sqrt(3)/2});

    \node[green!60!black, text width=2cm] at (0.65, 0.35) {$\mRe(z)=\frac12$};

    \draw[red, line width=2pt] 
    ({acos(sqrt(2)/4)}:{sqrt(2)/2}) arc ({acos(sqrt(2)/4)}:45:{sqrt(2)/2});

    \node[red] at (0.15, 0.7) {$\absolute{z}=\frac{\sqrt{2}}{2}$};


    \draw[red, line width=2pt] 
        (1,0) ++(135:{sqrt(2)/2}) arc (135:179.9:{sqrt(2)/2});
    \node[red] at (0.15,0.1) {$\absolute{z-1}=\frac{\sqrt{2}}{2}$};

    \fill[blue] (0.5, 0.5) circle (0.2pt) node[right, xshift=3pt] {$\zeta=\frac12+\frac12i$};

\end{tikzpicture}
\subcaption{The non-generic triangles in $\cD$: The triangles on the upper {\color{red}red} curve satisfy $z=L(z)$, the triangles on the lower {\color{red}red} curve satisfy $z=R(z)$. The triangles on the {\color{green!60!black}green} curve satisfy $L(z)=R(z)$.}
\label{fig:non-generic}
\end{minipage}
\hfill
\begin{minipage}[b]{.475\textwidth}

\begin{tikzpicture}[scale=9] 

    \coordinate (A) at (0,0);
    \coordinate (B) at (0.5,0);
    \coordinate (C) at (0.5, {sqrt(3)/2});
    \coordinate (F) at (0.25, {sqrt(3)/4});

    \fill[green!15] (A) -- (B) -- (C) arc (120:180:1) -- cycle;

    \fill[blue!15] 
            (C) arc (120:{acos(-0.75)}:1) -- (F) arc (60:{acos(0.75)}:0.5) arc ({acos(1/8)}:60:1/3) -- cycle;

    \draw[black!30, line width=0.6pt, opacity=0.6]
        (0.25, 0) -- (0.25, {sqrt(7)/4}) 
        (0.5,0) arc (0:{acos(0.25)}:0.5) 
        (0,0) arc (180:90:0.5); 
     
    \draw[black!30, line width=0.6pt, line join=round]
      (A) -- (B) -- (C) arc (120:180:1) -- cycle;


    \draw[red, line width=2pt] 
    ({acos(sqrt(2)/4)}:{sqrt(2)/2}) arc ({acos(sqrt(2)/4)}:45:{sqrt(2)/2});

    \node[red] at (0.15, 0.7) {$\absolute{z}=\frac{\sqrt{2}}{2}$};

    \draw[red, line width=2pt] 
        (1,0) ++(135:{sqrt(2)/2}) arc (135:{acos(-5*sqrt(2)/8)}:{sqrt(2)/2});

    \node[red] at (0.35, 0.3) {$\absolute{z-1}=\frac{\sqrt{2}}{2}$};
    
    \draw[green!60!black, line width=2pt] (0.5, 0.5) -- (0.5, {sqrt(3)/2});


    \draw[green!60!black, line width=2pt] 
        (0.5, 0) ++(90:0.5) arc (90:120:0.5);

    \node[green!60!black] at (0.1, 0.45) {$\absolute{z-\frac12}=\frac12$};

    \draw[green!60!black, line width=2pt] 
        (0.5, 0.5) -- (0.5, {sqrt(3)/6});

    \node[green!60!black, text width=2cm] at (0.65, 0.35) {$\mRe(z)=\frac12$
    };
    
    \fill[blue] (0.5, 0.5) circle (0.4pt) node[right, xshift=3pt] {$\zeta=\frac12+\frac12i$};

\end{tikzpicture}
\subcaption{The non-generic triangles in the shaded terminal region $\cA$: the {\color{blue}blue} $\zeta$ generates $1$ triangle. The triangles on the {\color{red}red} curves generate $2$ triangles. The triangles on the {\color{green!60!black}green} curves generate $3$ triangles.}
\label{fig:non-generic_terminal}
\end{minipage}
\caption{The non-generic locus.}
\end{figure}
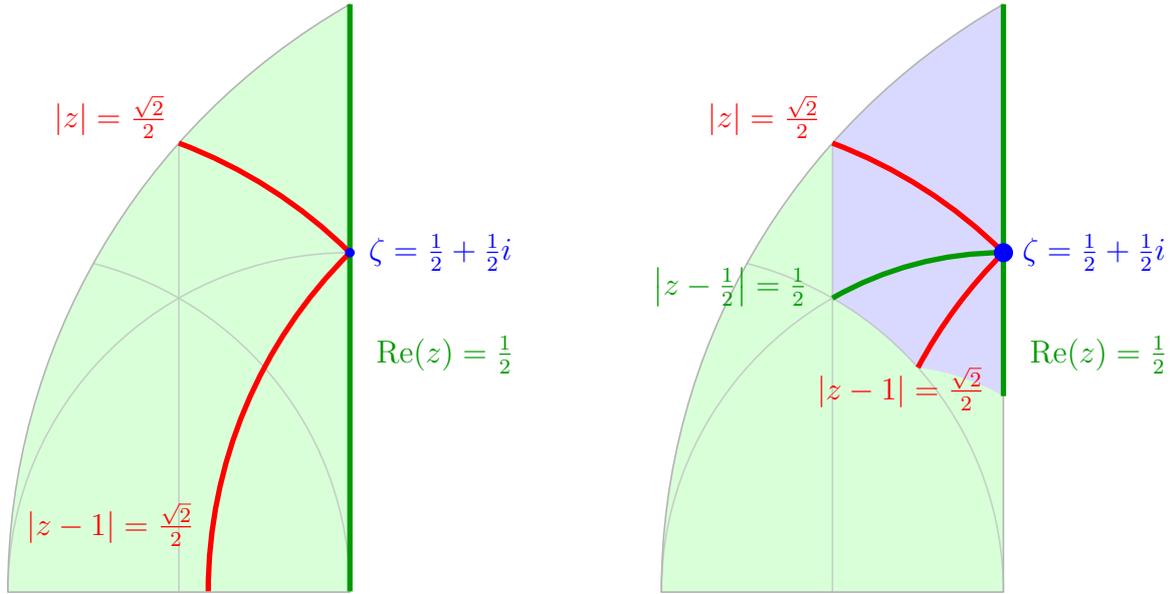

\begin{remark}
    In view of Proposition~\ref{prop:non-generic}, the set of non-generic triangles is contained in a countable union of arcs in $\cD$, i.e., the inverse images of the arcs described in~\eqref{eq:non-generic} with respect to iterative applications of $L$ and $R$, thus has measure zero. In particular, the set of generic triangles is of full measure.
\end{remark}

It is well-known that any region $\I$ triangle generates at most $4$ similarity classes (see e.g. \cite{PerdP14}). Generic triangles produce orbits of $4$ distinct elements, while non-generic triangles produce orbits of $4$ elements, some of which are repeated. Moreover, this orbit exhibits a specific {\bf periodic} structure under the bisection maps, as in the following definition.

\begin{definition}
An orbit $\{z_1, z_2, z_3, z_4\}$ (with not necessarily distinct elements) under the $\leb$ process is called a {\bf terminal quadruple} if $L$ and $R$ map $\{z_1, z_4\}$ to $\{z_2, z_3\}$, and vice versa.
\end{definition}

In fact, the locus of triangles exhibiting this behavior is larger. Since $\I$ is invariant under $L$, it follows that any triangle $z$ in the region 
\[
\cA=\I \cup R(\I) \subseteq \I \cup \IV
\] 
is such a triangle (see~\cite[Lemma~4]{RoseS75}, where this was already established). The fact that $\cA$ is precisely the locus of terminal quadruples is shown in Corollary~\ref{cor:A_is_terminal} below.

The precise characterization of the non-generic terminal quadruple triangles is provided in the following corollary.

\begin{corollary}
\label{cor:triangles_with_l(z)<4}
    Let $z\in \cD$ and set $\zeta=\frac12+\frac12i$, then:
    \begin{enumerate}[(a)]
        \item\label{propitem:q(z)=1} 
        $l(z)=1$ if and only if $z=\zeta$. 
        \item\label{propitem:q(z)=2} 
        $l(z)=2$ if and only if $z\in\set{w\in \cA}{\absolute{w}=\frac{\sqrt{2}}{2} ~~\text{ or }~~ |w-1|=\frac{\sqrt{2}}{2}}\setminus\{\zeta\}$. 
        \item\label{propitem:q(z)=3} 
        $l(z)=3$ if and only if $z\in\set{w\in \cA}{\absolute{w-\frac12}=\frac{1}{2} ~~\text{ or }~~ \mRe(w)=\frac{1}{2}}\setminus\{\zeta\}$.
    \end{enumerate}
\end{corollary}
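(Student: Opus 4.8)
The plan is to reduce everything to one explicit computation of $l(w)$ for $w$ in region $\I$, and then transport the answer back to a general $z\in A$. Since $l(z)\le 3\le 4$ forces $z\in A$ (as noted just before the statement), and $A=\I\cup R(\I)$ with $R(\I)\subseteq\IV$, every such $z$ satisfies $\leb(z)=\leb(w)$ for some $w\in\I$: if $z\in\I$ take $w=z$, and if $z=R(w)$ with $w\in\I$ then $w=L(z)\in\leb(z)$ because the piece $L_\IV$ inverts $R_{\I\cup\II\cup\III}$, i.e. $L_\IV(R_{\I\cup\II\cup\III}(w))=w$. As $l$ is constant along an orbit, it is enough to compute $l(w)$ for $w\in\I$ and then rephrase the resulting conditions in terms of $z$.

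For $w\in\I$ put $z_0=w$, $z_1=L(w)=\frac{1}{2\overline{w}}$, $z_2=R(w)=\frac{-1}{2w-2}$ and $z_3=R(L(w))=\frac{\overline{w}}{2\overline{w}-1}$; here $z_0,z_1\in\I$ and $z_2,z_3\in\IV$. Using the formulas of Table~\ref{tab:L_and_R} one verifies the identities showing that $\{z_0,z_1,z_2,z_3\}$ is closed under $L$ and $R$, with $L$ acting by $z_0\leftrightarrow z_1$, $z_2\mapsto z_0$, $z_3\mapsto z_1$ and $R$ acting by $z_0\mapsto z_2$, $z_1\mapsto z_3$, $z_2\leftrightarrow z_3$. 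Hence $\leb(w)=\{z_0,z_1,z_2,z_3\}$, so $l(w)=|\{z_0,z_1,z_2,z_3\}|\le 4$, and by Definition~\ref{def:generic} together with these transition rules any coincidence among the $z_i$ makes $w$ non-generic (so a generic $w\in\I$ has $l(w)=4$). Thus the corollary reduces to counting coincidences among $z_0,z_1,z_2,z_3$ for non-generic $w\in\I$.

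A few short computations then pin down the six pairwise coincidence loci: $z_0=z_1\iff z_2=z_3\iff|w|=\frac{\sqrt2}{2}$ (the fixed locus of the inversion $L_\I$); $z_1=z_2\iff\mRe(w)=\frac12$ (this is $L(w)=R(w)$, as in Proposition~\ref{prop:non-generic}); $z_0=z_3\iff|w-\frac12|=\frac12$ (the common boundary of $\I$ and $\IV$); and $z_0=z_2\iff z_1=z_3\iff w=\zeta$ (the unique fixed point in $D$ of $R_{\I\cup\II\cup\III}$). The three geodesics $\mRe(w)=\frac12$, $|w|=\frac{\sqrt2}{2}$, $|w-\frac12|=\frac12$ all pass through $\zeta$ and, being pairwise distinct, meet only there, so off $\zeta$ at most one of the three families of coincidences occurs. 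Reading off the cases: $w=\zeta$ collapses all four $z_i$, so $l(w)=1$; a point of $|w|=\frac{\sqrt2}{2}$ other than $\zeta$ gives the orbit $\{z_0=z_1,\ z_2=z_3\}$, so $l(w)=2$; a point of $\mRe(w)=\frac12$ or of $|w-\frac12|=\frac12$ other than $\zeta$ collapses exactly one pair, so $l(w)=3$; every other $w\in\I$ is generic with $l(w)=4$.

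Finally I would translate the conditions on the representative $w$ into conditions on $z\in A$. The key is that the relevant pieces of $L$ and $R$ permute the geodesics above: $L_\I$ fixes $\{|w|=\frac{\sqrt2}{2}\}$ pointwise and $R_{\I\cup\II\cup\III}$ carries it onto $\{|w-1|=\frac{\sqrt2}{2}\}$ (check the images of the two endpoints on $\R$), while both $L_\I$ and $R_{\I\cup\II\cup\III}$ interchange $\{\mRe(w)=\frac12\}$ with $\{|w-\frac12|=\frac12\}$. Tracing the orbit $\{z_0,z_1,z_2,z_3\}$ through these rules shows that, for $z\in A$, the orbit $\leb(z)$ meets $\{|w|=\frac{\sqrt2}{2}\}\cup\{|w-1|=\frac{\sqrt2}{2}\}$ if and only if $z$ itself lies there, and likewise $\leb(z)$ meets $\{\mRe(w)=\frac12\}\cup\{|w-\frac12|=\frac12\}$ if and only if $z$ itself lies there; here Proposition~\ref{prop:non-generic} may be invoked to identify the non-generic $z$ if one prefers. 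Combining with the previous paragraph yields (a), (b), and (c). I expect the main obstacle to be exactly this last step: the bookkeeping of which piece of the piecewise maps applies at each of $z_0,z_1,z_2,z_3$ (several of which may land on region boundaries) and the careful verification of the geodesic-permutation claims — the coincidence computations themselves being routine.
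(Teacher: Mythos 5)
Your proposal is correct and follows essentially the same route as the paper: the paper deduces the corollary directly from Proposition~\ref{prop:non-generic} together with the observation that the geodesic $\absolute{w-\frac12}=\frac12$ in $A$ is the image of $\mRe(w)=\frac12$ under $L$ and $R$, which is exactly the coincidence-locus analysis and geodesic bookkeeping you carry out explicitly. Your version simply re-derives the non-generic loci by direct computation on the orbit $\{z_0,z_1,z_2,z_3\}$ instead of citing the proposition wholesale, and the computations check out.
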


\begin{proof}
    Since the geodesic $\absolute{w-\frac12}=\frac{1}{2}$ in $\cA$ is the image of the geodesic $\mRe(w)=\frac{1}{2}$ in $\cA$ under $L$ and $R$, this is a direct consequence of Proposition \ref{prop:non-generic} (see Figure \ref{fig:non-generic_terminal}). 
\end{proof}

\subsection{The geometric structure of the terminal quadruples}\label{subsec:geometric_structure}

While running simulations, we have observed many remarkable phenomena regarding the geometric position of the points of $\leb(z)$ in $\cD$. The following result, which describes the geometric distribution of the points of a terminal quadruple, is a forerunner of a complete geometric distribution analysis of $\leb(z)$ (see \S\ref{subsec:leb_geo_dist}).  

\begin{proposition}\label{prop:terminal_sets_are_on_a_circle}
    For every $z_1\in \cA$, the set $\{z_1,z_2,z_3,z_4\}$ that is generated from $z_1$ lies on a hyperbolic circle centered at $\zeta = \frac12+\frac12i$. In particular, every terminal quadruple lies on an Euclidean circle centered at a point on the line $\mRe(z)=\frac12$. 
\end{proposition}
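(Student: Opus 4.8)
The plan is to exploit that the point $\zeta=\tfrac12+\tfrac12 i$ is a common fixed point of \emph{every} branch of $L$ and $R$ that can act on a triangle of the terminal region $A$, together with the fact that the (anti-)M\"obius self-maps of $\HH$ underlying these branches are isometries of the hyperbolic metric (see~\cite[\S 3]{Ahlfors}). Every point of $\leb(z_1)$ is the image of $z_1$ under a finite composition of such branches; each one fixes $\zeta$ and therefore preserves the hyperbolic distance to $\zeta$, so the whole quadruple $\{z_1,z_2,z_3,z_4\}$ ends up at one and the same hyperbolic distance from $\zeta$, i.e. on a hyperbolic circle centered at $\zeta$.

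The first step is to confine the orbit. If $w\in\leb(z_1)$ then $\leb(w)\subseteq\leb(z_1)$, hence $l(w)\le l(z_1)\le 4$, so $w\in A=\I\cup R(\I)\subseteq\I\cup\IV$; thus $\leb(z_1)\subseteq\I\cup\IV$. Consequently the only branches of $L$ and $R$ that ever act along the orbit are $L_\I(z)=\tfrac{1}{2\overline z}$ and $R_{\I\cup\II\cup\III}(z)=\tfrac{-1}{2z-2}$ (on the part of the orbit lying in $\I$) and $L_\IV(z)=\tfrac{2z-1}{2z}$ and $R_{\IV\cup\V\cup\VI}(z)=\tfrac{2\overline z-1}{2\overline z-2}$ (on the part lying in $\IV$); on the boundary shared by $\I$ and $\IV$ these two descriptions of $L$ and of $R$ agree, so no ambiguity arises.

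The second step is to check that each of these four branches fixes $\zeta$. By Table~\ref{tab:L_and_R}, $L_\IV$ and $R_{\I\cup\II\cup\III}$ are elliptic with fixed point $\zeta$. For the other two I would use that $\absolute{\zeta}=\absolute{\zeta-1}=\tfrac{\sqrt2}{2}$, so that $\zeta$ lies on both circles of inversion $\absolute{z}=\tfrac{\sqrt2}{2}$ and $\absolute{z-1}=\tfrac{\sqrt2}{2}$; since an inversion fixes every point of its circle, $L_\I(\zeta)=\zeta$ and $R_{\IV\cup\V\cup\VI}(\zeta)=\zeta$ (equivalently, $\tfrac{1}{2\overline\zeta}=\tfrac{2\overline\zeta-1}{2\overline\zeta-2}=\zeta$ by a one-line computation). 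Being hyperbolic isometries fixing $\zeta$, all four branches preserve $d_\HH(\,\cdot\,,\zeta)$, and writing each $z_j$ as a composition of these branches applied to $z_1$ gives $d_\HH(z_j,\zeta)=d_\HH(z_1,\zeta)$ for $j=1,\dots,4$; thus $\{z_1,z_2,z_3,z_4\}$ lies on the hyperbolic circle of radius $d_\HH(z_1,\zeta)$ about $\zeta$ (degenerating to the single point $\zeta$ precisely when $z_1=\zeta$). For the final sentence I would invoke the classical fact that a hyperbolic circle in $\HH$ is also a Euclidean circle, and observe that the reflection $z\mapsto 1-\overline z$ across the line $\mRe(z)=\tfrac12$ is a hyperbolic isometry fixing $\zeta$ (indeed $1-\overline\zeta=\zeta$); it therefore carries this hyperbolic circle to itself, and a Euclidean circle invariant under a reflection in a line must be centered on that line.

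The only step requiring genuine care is the confinement: the conclusion really is special to $A$, since already $L_\II(\zeta)=i\ne\zeta$, so it is essential that the orbit of a terminal triangle never leaves $\I\cup\IV$. This, however, is exactly the already-established identification of $A$ as the locus $\set{z\in D}{l(z)\le 4}$ together with the defining piecewise description of $L$ and $R$; everything else is elementary.
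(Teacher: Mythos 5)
Your proof is correct and follows essentially the same route as the paper: the relevant branches of $L$ and $R$ on $\I\cup\IV$ are hyperbolic isometries fixing $\zeta$ (the two inversions because $\zeta$ lies on both circles $\absolute{z}=\frac{\sqrt2}{2}$ and $\absolute{z-1}=\frac{\sqrt2}{2}$, the two elliptic maps by definition), hence preserve the hyperbolic distance to $\zeta$. The only difference is cosmetic: the paper reduces to $z_1\in\I$ without loss of generality and so only needs two of the four branches, whereas you verify all four and also spell out the hyperbolic-to-Euclidean circle conversion.
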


\begin{proof}
    For $z\in \I$, the point $L(z)$, is obtained by an inversion with respect to the geodesic $\set{w}{|w|=\frac{\sqrt{2}}{2}}$, that contains $\zeta=\frac12+\frac12i$. Since $\{z_1,z_2,z_3,z_4\}\cap\I\neq\emptyset$, we may assume that $z_1\in \I$. The points $z_1$ and $L(z_1)$ are both in region $\I$ and have the same hyperbolic distance from $\zeta$. Furthermore, for $z\in \I$, the point $R(z)\in \IV$ is obtained by a hyperbolic rotation of $\frac{\pi}{2}$ around $\zeta$. Thus $R(z_1)$ and $RL(z_1)$ are also at the same hyperbolic distance from $\zeta$ as $z_1$. The fact that a hyperbolic circle is a Euclidean circle is straightforward to verify using the two metrics. 
 \end{proof}

\begin{figure}[h!]
    \centering
    \includegraphics[scale=0.619]{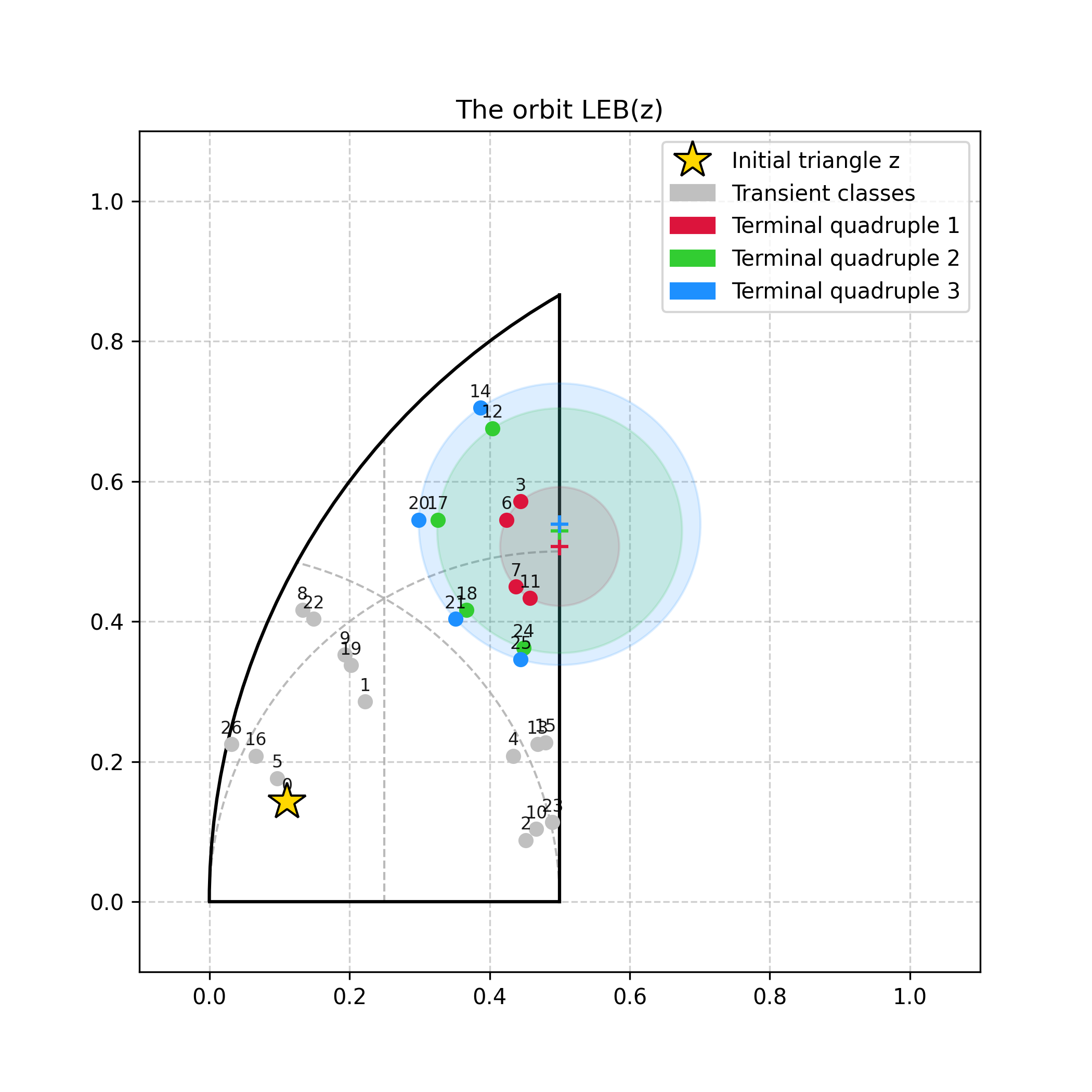}
    \caption{The points of $\leb(z)$ for $z=\frac{1}{9}+\frac{1}{7}i$. There are $3$ terminal quadruples, and the circles that they lie on are shaded.}\label{fig:final_points_plot_1:9+i:7}
\end{figure}

Among other geometric properties that one might notice in Figure~\ref{fig:final_points_plot_1:9+i:7} is the fact that the points labeled $3, 12, 14$ lie on a specific arc of a circle that is tangent to the real axis at $0$. We explain this fact in~\S\ref{subsec:leb_geo_dist}. 

\subsection{The flow towards $\cA$}\label{subsec:flow_to_A}
In Proposition~\ref{prop:flow_to-A} below, we show that the orbit of every point eventually reaches $\cA$. Since we did not find this fact explicitly in the literature, we provide a proof here. The proof relies on Lemma \ref{lem:L_closer_to_zeta}, showing that points get closer to $\cA$ under repeated applications of $L$. 

\begin{lemma}\label{lem:L_closer_to_zeta}
    Let $d:\HH\times\HH\to[0,\infty)$ denote the hyperbolic metric. For every $z\in \cD\setminus (\I\cup \IV)$ we have $d(L(z),\zeta)<d(z,\zeta)$. 
\end{lemma}

\begin{proof}
    Each piece $\tilde{L}$ of the piecewise function $L$ is an isometry of $\HH$, hence for every $z\in \cD$ we have $d(\tilde{L}(z),\zeta)=d(z,\tilde{L}^{-1}(\zeta))$. We proceed by case analysis for points $z$ in the different subregions of $\cD$, where in each case we compute the perpendicular bisector of $\zeta$ and $\tilde{L}^{-1}(\zeta)$ and use it to deduce that $z$ is closer to $\tilde{L}^{-1}(\zeta)$.
    \begin{itemize}
        \item 
        If $z\in\II\setminus \I$, we have $\tilde{L}(z)=\frac{-1}{2z-1}$. Then $\tilde{L}^{-1}(\zeta)=\frac{\zeta-1}{2\zeta} = \frac{i}{2}$ and the perpendicular bisector of $\zeta$ and $\tilde{L}^{-1}(\zeta)$ is the geodesic $\mRe(w)=\frac14$. 
        \item 
        If $z\in\III\setminus\I$, we have $\tilde{L}(z)=\frac{2\overline{z}}{2\overline{z}-1}$, and again $\tilde{L}^{-1}(\zeta)= \frac{\overline{\zeta}}{2\overline{\zeta}-2}=\frac{i}{2}$, and the perpendicular bisector is as in the previous case. 
        \item 
        If $z\in \V\setminus\I$,
        we have $\tilde{L}(z)=2z$. Then $\tilde{L}^{-1}(\zeta)=\frac{\zeta}{2}$ and the perpendicular bisector is the geodesic $\absolute{w}=\frac12$.
        \item 
        If $z\in \VI\setminus\IV$, 
        we have $\tilde{L}(z)=1-2\overline{z}$. Then $\tilde{L}^{-1}(\zeta)=\frac{1-\overline{\zeta}}{2} = \frac{\zeta}{2}$, and we proceed as in the previous case.
    \end{itemize}
    In all of these cases, clearly $z$ is closer to $\tilde{L}^{-1}(\zeta)$ than to $z$, which completes the proof. 
\end{proof}

\begin{proposition}\label{prop:flow_to-A}
    For every $z\in \cD$ there exists $n\in\N$ such that $L^n(z)\in\cA$. In particular, $\leb(z)\cap \cA\neq\emptyset$.
\end{proposition}

\begin{proof}
    Let $z\in\cD$ and assume for contradiction that $L^n(z)\notin\cA$ for every $n\in\N$. Then by the FSC theorem, the set $\set{L^n(z)}{n\in\N}$ is a finite subset of $\cD\setminus \cA$. So there exists some $n_0\in\N$ for which $L^{n_0}(z)$ minimizes the distance from $\zeta$. But Lemma~\ref{lem:L_closer_to_zeta} asserts that $L^{n_0+1}(z)$ is closer. A contradiction.
\end{proof}

Another consequence is that $\cA$ is exactly the locus of terminal quadruples. 

\begin{corollary}\label{cor:A_is_terminal}
    For every $z\in \cD$, if $\leb(z)$ is a terminal quadruple then $z\in\cA$.    
\end{corollary}

\begin{proof}
    Suppose that $\leb(z)$ is a terminal quadruple. By Proposition~\ref{prop:terminal_sets_are_on_a_circle}, the elements of $\leb(z)$ are at the same distance from $\zeta$. For contradiction, assume that $z\in\cD\setminus\cA$, and consider two cases. If $z\notin\IV$, by Lemma~\ref{lem:L_closer_to_zeta} we have $d(L(z),\zeta)<d(z,\zeta)$, a contradiction. If $z\in\IV$, then $L(z)\in\II$ and then $d(LL(z),\zeta)< d(z,\zeta)$, which is again a contradiction.   
\end{proof}

\subsection{Regions with $q(z)=1$}
We begin with some observations regarding compositions of $L$ and $R$ in certain regions. 
\begin{lemma}\label{lem:R_maps_I,II,III_to_IV}
For every $z\in \I \cup \II \cup \III$ we have $LR(z)=z$ and $RR(z)=\frac{\bar{z}}{2\bar{z}-1}$, which is a reflection with respect to the geodesic $\absolute{w-\frac12}=\frac12$.    
\end{lemma}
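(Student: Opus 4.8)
The plan is to reduce both identities to direct computations with the piecewise formulas of Table~\ref{tab:L_and_R}, the only genuine preliminary being to pin down which subregion contains $R(z)$. On $\I\cup\II\cup\III$ the map is $R(z)=\tfrac{-1}{2z-2}$, the elliptic rotation by $\tfrac{\pi}{2}$ counterclockwise about $\zeta=\tfrac12+\tfrac12 i$, so I would first establish the key geometric claim that $R(\I\cup\II\cup\III)\subseteq\IV$. For this I would use that $\I\cup\II\cup\III$ is exactly the part of $D$ lying outside the geodesic $|z-\tfrac12|=\tfrac12$, which is the hyperbolic triangle with vertices $0$, $P=\tfrac12+\tfrac{\sqrt3}{2}i$, and $\zeta$, bounded by arcs of $|z-1|=1$, of $\mRe(z)=\tfrac12$, and of $|z-\tfrac12|=\tfrac12$. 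Since $R$ sends geodesics to geodesics, $R(\I\cup\II\cup\III)$ is the hyperbolic triangle on the vertices $R(0)=\tfrac12$, $R(P)=\tfrac14+\tfrac{\sqrt3}{4}i$, $R(\zeta)=\zeta$, and tracking the three bounding geodesics (the geodesic $|z-1|=1$ goes to $|z|=\tfrac12$ because $R(0)=\tfrac12,\ R(2)=-\tfrac12$; the line $\mRe(z)=\tfrac12$ goes to $|z-\tfrac12|=\tfrac12$ because $R(\infty)=0,\ R(\tfrac12)=1$; and $|z-\tfrac12|=\tfrac12$ goes to $\mRe(z)=\tfrac12$ because $R(0)=\tfrac12,\ R(1)=\infty$) shows this triangle is precisely region $\IV$.

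Once $R(z)\in\IV$ is known, the identity $LR(z)=z$ is essentially automatic: the branch of $L$ on $\IV$ is $L_\IV(w)=\tfrac{2w-1}{2w}$, the rotation by $\tfrac{\pi}{2}$ clockwise about $\zeta$, which is the inverse of $R|_{\I\cup\II\cup\III}$. If one prefers to see it directly, substituting $w=\tfrac{-1}{2z-2}$ gives $2w-1=\tfrac{-2z}{2z-2}$ and $2w=\tfrac{-2}{2z-2}$, whence $\tfrac{2w-1}{2w}=z$.

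For $RR(z)$, since $\IV\subseteq\IV\cup\V\cup\VI$, the second application of $R$ is the inversion $R_{\IV\cup\V\cup\VI}(w)=\tfrac{2\bar w-1}{2\bar w-2}$. I would plug in $w=\tfrac{-1}{2z-2}$, hence $\bar w=\tfrac{-1}{2\bar z-2}$, and simplify $2\bar w-1=\tfrac{-2\bar z}{2\bar z-2}$ and $2\bar w-2=\tfrac{2-4\bar z}{2\bar z-2}$, which yields $RR(z)=\tfrac{-2\bar z}{2-4\bar z}=\tfrac{\bar z}{2\bar z-1}$. To recognize this map, recall that the inversion in a circle $|w-c|=r$ with $c\in\R$ is $w\mapsto c+\tfrac{r^2}{\bar w-c}$; taking $c=\tfrac12,\ r=\tfrac12$ gives $\tfrac12+\tfrac{1}{4\bar w-2}=\tfrac{\bar w}{2\bar w-1}$, so $RR$ is exactly the reflection across the geodesic $|w-\tfrac12|=\tfrac12$, as claimed.

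The only step that is not a mechanical simplification is locating $R(\I\cup\II\cup\III)$ inside $\IV$ in the first paragraph; everything after that is routine rational-function algebra. So the main obstacle is getting the description of the six subregions and the action of the elliptic map $R$ on their bounding geodesics exactly right — in particular, noting that the three dividing geodesics $\mRe(z)=\tfrac14$, $|z|=\tfrac12$, $|z-\tfrac12|=\tfrac12$ all pass through $\tfrac14+\tfrac{\sqrt3}{4}i$, so that region $\IV$ is genuinely the hyperbolic triangle with vertices $\tfrac12$, $\tfrac14+\tfrac{\sqrt3}{4}i$, and $\zeta$, which is the image of $\I\cup\II\cup\III$ under $R$.
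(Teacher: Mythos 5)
Your proposal is correct and follows essentially the same route as the paper: first show $R(\I\cup\II\cup\III)=\IV$ by tracking the images of the three bounding geodesics under the elliptic map $R(z)=\frac{-1}{2z-2}$, then obtain both identities by substituting into the region-$\IV$ formulas for $L$ and $R$. Your extra details (the endpoint computations for the geodesics and the explicit identification of $\frac{\bar w}{2\bar w-1}$ as the inversion in $|w-\frac12|=\frac12$) are all accurate and simply flesh out what the paper leaves as ``one easily checks.''
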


\begin{proof}
    Recall that for $z\in \I \cup \II \cup \III$ we have $R(z) = \frac{-1}{2z-2}$. One easily checks that $R$ maps the geodesics $\mRe(z)=\frac12$, $\absolute{z-1}=1$ and $\absolute{z-\frac12}=\frac12$ to the geodesics $\absolute{z-\frac12}=\frac12$, $\absolute{z}=\frac12$ and $\mRe(z)=\frac12$, respectively, and hence maps the region $\I \cup \II \cup \III$ to region $\IV$. Since $L(z)=\frac{2z-1}{2z}$ and $R(z)=\frac{2\bar{z}-1}{2\bar{z}-2}$ for $z\in\IV$, the assertion follows by a direct computation.  
\end{proof}

\begin{lemma}\label{lem:L=LRR_in_II_n_III}
For every $z\in \II \cup \III$ we have $L(z)=LRR(z)$.
\end{lemma}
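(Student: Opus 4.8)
The plan is to reduce the identity, in each of the regions $\II$ and $\III$ separately, to a one-line substitution; the only genuine work is to decide which of the six pieces of the piecewise map $L$ is applied at the point $RR(z)$.

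By Lemma~\ref{lem:R_maps_I,II,III_to_IV} we already know that $RR(z)=\frac{\bar z}{2\bar z-1}$ on $\I\cup\II\cup\III$ and that this map is the hyperbolic reflection across the geodesic $\absolute{w-\frac12}=\frac12$. So the first step is to locate $RR(\II)$ and $RR(\III)$ among the six subregions. For this I would use that the three dividing geodesics $\mRe(w)=\frac14$, $\absolute{w}=\frac12$ and $\absolute{w-\frac12}=\frac12$ are concurrent at the point $\frac14+\frac{\sqrt3}{4}i$, so each subregion is cut out by a single choice of side for each of the three geodesics; reading off Figure~\ref{fig:normalized_subdivision} gives in particular
\[
\II=\set{w\in D}{\mRe(w)\le\frac14,\ \absolute{w}\ge\frac12},\qquad
\VI=\set{w\in D}{\mRe(w)\ge\frac14,\ \absolute{w}\le\frac12},
\]
\[
\III=\set{w\in D}{\mRe(w)\le\frac14,\ \absolute{w}\le\frac12,\ \absolute{w-\frac12}\ge\frac12},\qquad
\V=\set{w\in D}{\mRe(w)\le\frac14,\ \absolute{w}\le\frac12,\ \absolute{w-\frac12}\le\frac12}.
\]
Now $RR$ fixes the geodesic $\absolute{w-\frac12}=\frac12$ and interchanges its two sides — so it carries $\I\cup\II\cup\III=D\cap\{\absolute{w-\frac12}\ge\frac12\}$ onto $\IV\cup\V\cup\VI=D\cap\{\absolute{w-\frac12}\le\frac12\}$ — and, since it sends $\frac14\mapsto-\frac12$ and $\infty\mapsto\frac12$ on the ideal boundary, it also interchanges the geodesic $\mRe(w)=\frac14$ with the geodesic $\absolute{w}=\frac12$, hence the half-plane $\{\mRe(w)<\frac14\}$ with the disk $\{\absolute{w}<\frac12\}$ (one sample point fixes the orientation). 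Tracking the three defining inequalities through $RR$ then yields $RR(\II)=\VI$ and $RR(\III)=\V$.

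Once the branches are pinned down, the computation is immediate from Table~\ref{tab:L_and_R}. For $z\in\II$ we have $RR(z)\in\VI$, where $L(w)=1-2\bar w$, so
\[
LRR(z)=1-2\,\overline{RR(z)}=1-\frac{2z}{2z-1}=\frac{-1}{2z-1}=L(z).
\]
For $z\in\III$ we have $RR(z)\in\V$, where $L(w)=2w$, so
\[
LRR(z)=2\,RR(z)=\frac{2\bar z}{2\bar z-1}=L(z).
\]

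The step I expect to be the main obstacle — modest though it is — is the region bookkeeping: being sure that $RR$ really sends $\II$ into $\VI$ and $\III$ into $\V$ (and not, say, into $\IV$), so that the correct piece of $L$ is substituted. Everything after that drops out of the explicit formulas together with Lemma~\ref{lem:R_maps_I,II,III_to_IV}, with no further calculation.
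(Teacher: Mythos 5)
Your proof is correct and follows essentially the same route as the paper: apply Lemma~\ref{lem:R_maps_I,II,III_to_IV} to get the formula for $RR$, verify that $RR(\II)\subseteq\VI$ and $RR(\III)\subseteq\V$, and then substitute into the appropriate branch of $L$. The only difference is that you spell out the region bookkeeping (via the concurrence of the three geodesics at $\tfrac14+\tfrac{\sqrt3}{4}i$ and the fact that the reflection $RR$ interchanges $\mRe(w)=\tfrac14$ with $\absolute{w}=\tfrac12$), which the paper leaves implicit in its citation of Lemma~\ref{lem:R_maps_I,II,III_to_IV}.
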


\begin{proof}
   By Lemma \ref{lem:R_maps_I,II,III_to_IV}, for every $z\in \II$ we have $RR(z)\in \VI$ and thus 
   \[
   LRR(z) = 1-2\overline{\left(\frac{\bar{z}}{2\bar{z}-1}\right)} = 
   \frac{-1}{2z-1} = L(z).
   \]
   For $z\in\III$, notice that $L(z) = \frac{2\bar{z}}{2\bar{z}-1}= 2RR(z)$. Since $RR(z) \in \V$, where $L(z)=2z$, the proof is complete. 
\end{proof}

\begin{lemma}\label{lem:LL=LR_in_IV}
    For every $z\in \IV$ we have $LL(z) = LR(z)$.
\end{lemma}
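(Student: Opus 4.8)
The plan is to reduce the statement to the identity $L(w)=LRR(w)$ for $w\in\I\cup\II\cup\III$, most of which is already Lemma~\ref{lem:L=LRR_in_II_n_III}, by exploiting the fact that $R$ identifies $\IV$ with $\I\cup\II\cup\III$. Concretely, I would first fix $z\in\IV$. By Lemma~\ref{lem:R_maps_I,II,III_to_IV}, $R$ maps $\I\cup\II\cup\III$ onto $\IV$ and $LR$ is the identity on $\I\cup\II\cup\III$; hence there is a (unique) $w\in\I\cup\II\cup\III$ with $z=R(w)$, and $L(z)=LR(w)=w$. Applying $L$ once more gives $LL(z)=L(w)$.

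Next I would unwind the other side. Since $z=R(w)$, we have $R(z)=R(R(w))=RR(w)=\frac{\bar w}{2\bar w-1}$, again by Lemma~\ref{lem:R_maps_I,II,III_to_IV}. Therefore $LR(z)=L(RR(w))=LRR(w)$, and the lemma becomes equivalent to the assertion that $L(w)=LRR(w)$ for every $w\in\I\cup\II\cup\III$.

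For $w\in\II\cup\III$ this is precisely Lemma~\ref{lem:L=LRR_in_II_n_III}, so only the case $w\in\I$ remains. Here I would first locate $RR(w)$: the map $RR$ restricted to $\I\cup\II\cup\III$ is the reflection across the geodesic $\absolute{w-\frac12}=\frac12$, an involution interchanging the two sides of this geodesic in $D$, and the proof of Lemma~\ref{lem:L=LRR_in_II_n_III} already records that it sends $\II$ to $\VI$ and $\III$ to $\V$; hence it sends $\I$ to $\IV$. (Alternatively, one checks directly, as in the proof of Proposition~\ref{prop:non-generic}, that $\frac{\bar w}{2\bar w-1}\in\IV$ whenever $w\in\I$.) On $\IV$ the map $L$ is $v\mapsto\frac{2v-1}{2v}=1-\frac{1}{2v}$, so substituting $v=\frac{\bar w}{2\bar w-1}$ yields $LRR(w)=1-\frac{2\bar w-1}{2\bar w}=\frac{1}{2\bar w}$, which equals $L(w)$ since $w\in\I$. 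Combining, $LL(z)=L(w)=LRR(w)=LR(z)$.

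The only genuine obstacle is the bookkeeping of which branch of the piecewise maps $L$ and $R$ applies at each stage — that is, correctly placing $w$, $R(w)$, and $RR(w)$ in the subdivision of $D$ — since once this is pinned down every computation is a single line. Routing the argument through Lemmas~\ref{lem:R_maps_I,II,III_to_IV} and~\ref{lem:L=LRR_in_II_n_III} keeps this to a minimum, leaving essentially just the single-region verification for $w\in\I$; a fully direct computation (compute $L(z)$ and $R(z)$ from the $\IV$-formulas, identify their regions, apply $L$ again, and compare) is possible but involves the same case analysis with more algebra.
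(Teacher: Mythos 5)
Your proof is correct, but it takes a genuinely different route from the paper's. The paper works entirely inside region $\IV$: it subdivides $\IV$ into three pieces by the geodesics $\absolute{z-\frac13}=\frac13$ and $\absolute{z-\frac23}=\frac13$, determines in each piece which branches of $L$ and $R$ apply to $L(z)$ and $R(z)$, and verifies $LL(z)=LR(z)$ by three explicit computations. You instead conjugate by $R$: writing $z=R(w)$ with $w\in\I\cup\II\cup\III$ and using $LR=\mathrm{id}$ there (Lemma~\ref{lem:R_maps_I,II,III_to_IV}), the claim becomes $L(w)=LRR(w)$ on $\I\cup\II\cup\III$, of which the $\II\cup\III$ part is exactly Lemma~\ref{lem:L=LRR_in_II_n_III}, leaving only the one-line check $LRR(w)=\frac{1}{2\bar w}=L(w)$ for $w\in\I$. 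This is a real economy --- your three implicit cases $w\in\I,\II,\III$ correspond precisely (via $R$) to the paper's three subregions of $\IV$, and two of them are already done --- and it makes visible that the identity on $\IV$ is literally the $R$-conjugate of the identity $L=LRR$ on $\I\cup\II\cup\III$. What the paper's version buys in exchange is self-containedness: it never needs $R$ to be \emph{onto} $\IV$, whereas your opening step does. That surjectivity is implicit in the proof of Lemma~\ref{lem:R_maps_I,II,III_to_IV} (the M\"obius map $R_{\I\cup\II\cup\III}$ carries the three bounding geodesics of $\I\cup\II\cup\III$ to the three bounding geodesics of $\IV$), but it deserves a sentence; alternatively you can sidestep it by setting $w:=L(z)$, noting $L(\IV)\subseteq\I\cup\II\cup\III$ and that $R_{\I\cup\II\cup\III}$ and $L_\IV$ are mutually inverse M\"obius maps, so $R(w)=z$. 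Likewise, your inference ``$RR$ sends $\II$ to $\VI$ and $\III$ to $\V$, hence $\I$ to $\IV$'' is a little quick on its own, but the direct verification you offer in parentheses (that $\frac{\bar w}{2\bar w-1}\in\IV$ for $w\in\I$, e.g.\ by tracking the images of the four bounding geodesics of $\I$ under the reflection across $\absolute{z-\frac12}=\frac12$) closes that gap.
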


\begin{proof}
    As before, we use subscripts to indicate which piece of the piecewise maps $L$ and $R$ is being applied.
    
    Since both $L$ and $R$ can map $z\in \IV$ to different regions, we distinguish between $3$ cases, determined by the geodesics $\absolute{z - \frac{1}{3}}= \frac{1}{3}$ and  $\absolute{z - \frac{2}{3}}= \frac{1}{3}$ in region $\IV$. 
    
    \subsection*{Case $1$: $\absolute{z - \frac{1}{3}} \ge \frac{1}{3}$.}
    One easily verifies that the inverse image of the geodesic $\absolute{z}=\frac{1}{2}$ under $R_{\IV\cup\V\cup\VI}$ is the geodesic $\absolute{z-\frac{1}{3}} = \frac{1}{3}$, which is also the inverse image of $\mRe(z) = \frac{1}{4}$ under $L_\IV$. Thus, for $z \in \IV$ that satisfies  $\absolute{z-\frac{1}{3}}\ge \frac{1}{3}$ we have $L(z)\in \I$ and $R(z)\in\IV$. Therefore, in this case, we have 
    \[
    L_\I(L_\IV(z)) = \frac{\overline{z}}{2\overline{z}-1} = L_\IV(\Rd(z)).
    \]

    \subsection*{Case $2$: $\absolute{z - \frac{1}{3}} < \frac{1}{3}$ and $\absolute{z - \frac{2}{3}} \ge \frac{1}{3}$.}
    Here, the inverse image of the geodesic $\mRe(z)=\frac{1}{4}$ under $R_{\IV\cup\V\cup\VI}$ is the geodesic $\absolute{z-\frac{2}{3}} = \frac{1}{3}$, which is also the inverse image of $\absolute{z} = \frac{1}{2}$ under $L_\IV$. Thus, for $z \in \IV$ that satisfies  $\absolute{z-\frac{2}{3}}\ge \frac{1}{3}$ we have $L(z)\in \II$ and $R(z)\in\VI$. Therefore
    \[
    L_\II(L_\IV(z)) = \frac{-z}{z-1} = L_\VI(\Rd(z)).
    \]

    \subsection*{Case $3$: $\absolute{z - \frac{1}{3}} < \frac{1}{3}$ and $\absolute{z - \frac{2}{3}} < \frac{1}{3}$. }
    Finally, by the reasoning in case $2$, for $z \in \IV$ that satisfies  $\absolute{z-\frac{1}{3}}< \frac{1}{3}$ we have $L(z)\in \III$ and $R(z)\in\V$. Hence 
    \[
    L_\III(L_\IV(z)) = \frac{2\overline{z}-1}{\overline{z}-1} = L_\V(\Rd(z)).
    \]
\end{proof}

With these identities at hand we are ready to prove Theorem~\ref{thm:q(z)=1}.

\begin{proof}[Proof of Theorem~\ref{thm:q(z)=1}]
    We proceed by case analysis. Notice that for every triangle $z$, the orbit of $z$ can be written as follows:  
    \begin{equation}\label{eq:orbit_partition}
        \begin{split}
            \leb(z) &= \{z\} \cup \leb(L(z)) \cup \leb(R(z)) \\
            &= \{z\} \cup \leb(L(z)) \cup\leb(LR(z)) \cup \leb(RR(z)) \cup \{R(z)\}.
        \end{split}
    \end{equation}
    
    \subsection*{Assume $z\in \I$} Then $z$ belongs to the terminal quadruple $$\leb(z)=\{z,L(z),R(z), RR(z)\}$$ and clearly $q(z)=1$ (see e.g.~\cite[Lemma 4.4, Figure 8]{PerdP14}).

    \subsection*{Assume $z\in \II$} Then $RR(z)\in \VI$ by Lemma \ref{lem:R_maps_I,II,III_to_IV}. Since $R$ maps the region $\IV\cup \V \cup \VI$ to itself, by an inversion with respect to $\absolute{z-1}=\frac{\sqrt2}{2}$, we have $RRR(z)=R(z)$. By Lemma \ref{lem:R_maps_I,II,III_to_IV} we have $LR(z)=z$ and by Lemma \ref{lem:L=LRR_in_II_n_III} we have $ LRR(z)=L(z)$. Applying these identities to \eqref{eq:orbit_partition} yields
    \begin{equation}\label{eq:orbit_partition_II_n_III}
        \begin{split}
            \leb(z) &= \{z\} \cup \leb(L(z)) \cup \leb(LRR(z)) \cup \{R(z)\} \cup \{RR(z)\}  \\
            &= \{z\}\cup \{R(z)\} \cup \{RR(z)\} \cup \leb(L(z)).
        \end{split}
    \end{equation}
    By Corollary~\ref{cor:A_is_terminal}, the triangles $z, R(z), RR(z)$ do not belong to a terminal quadruple and $L(z)\in \I$, it follows that $q(z)=1$. 
    
    \subsection*{Assume $z\in \III$} Then using the same identities as in the previous case, \eqref{eq:orbit_partition_II_n_III} holds.
    As before, the triangles $z, R(z), RR(z)$ do not belong to a terminal quadruple. Thus, terminal quadruples can only arise from $\leb(L(z))$. Notice that $L(z)\in \I \cup \II \cup \III$. If $L(z)\in \I \cup \II$ then we are done, by the previous cases. Otherwise, $L(z)\in \III$, and we may repeat the above argument. Namely, the singletons $\{L(z)\}, \{RL(z)\}$ and $\{RRL(z)\}$, which do not belong to a terminal quadruple, are added to $\leb(z)$, and terminal quadruples can only arise from $\leb(LL(z))$. By Proposition \ref{prop:flow_to-A}, there exists some $k\in\N$ for which $L^k(z)\in \cA$, and we are done. 
        
    \subsection*{Finally, assume $z\in \IV$.} We have $RR(z)=z$, and so~\eqref{eq:orbit_partition} in this case reads:
    \[
        \leb(z) = \{z\} \cup \leb(L(z)) \cup\leb(LR(z)) \cup \{R(z)\}.
    \]
    Since $L(z) \in\I \cup \II \cup \III$ for $z\in\IV$, by the previous cases, $\leb(L(z))$ contains a single terminal quadruple. By Lemma \ref{lem:LL=LR_in_IV}, we have $\leb(LR(z)) = \leb(LL(z))\subset \leb(L(z))$ and the proof is complete.   
\end{proof}

\subsection{Arbitrarily large $q(z)$}
In this subsection, we prove Theorem \ref{thm:arbitrarily_many_quadruples}, showing that the number of terminal quadruples that arise from an initial triangle can be arbitrarily large. For the proof, we focus on a subregion of $\V$ defined by 
\[
\tilde \V = \set{z\in \V}{ \absolute{z-\frac13}\le \frac13}.
\]
The geodesic $\absolute{z-\frac13} = \frac13$ is the inverse image of the geodesic $\absolute{z}=\frac12$ under the function $R$. The relevance of this region will be made clear in the proof of Lemma \ref{lem:geometry_of_h}. 

For a triangle $z\in \tilde \V$ we denote by 
\[
h(z)=LR\left(z\right)
\]
and begin with some straightforward properties of the function $h$. 

\begin{lemma}\label{lem:geometry_of_h}
For $z\in \tilde\V$ we have 
\begin{enumerate}[(a)]
    \item\label{lemitem_a:geometry_of_h}
$h(z)=\frac{-z}{z-1}$. 
    \item\label{lemitem_b:geometry_of_h} 
The function $h$ is a parabolic M\"obius transformation, with a source and sink at the fixed point $w=0$. Moreover, for every $k\in\N$, the point $h^{k+1}(z)$ is obtained from $h^k(z)$ by a counterclockwise rotation along the circle $C_z$ through $z$, that is tangent to the real axis at $0$. 
    \item\label{lemitem_c:geometry_of_h}
If $\mRe(z)=\mIm(z)$ then the tangent line to $C_z$ at $z$ is vertical.      
\end{enumerate}
\end{lemma}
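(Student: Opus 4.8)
The plan is to treat the three parts in order, with essentially all of the genuine work concentrated in part (a); parts (b) and (c) will then follow quickly from it together with standard facts about parabolic M\"obius transformations.

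For part (a) the only real task is to decide, for $z\in\tilde\V$, which of the six subregions of $D$ contains $R(z)$, so that the right branch of the piecewise map $L$ can be applied. Since $\tilde\V\subseteq\V$, here $R$ is the inversion $\Rd$ in $\absolute{w-1}=\frac{\sqrt2}{2}$, and I would first record how $\Rd$ permutes geodesics: it fixes $\R$, interchanges $\absolute{w-\frac12}=\frac12$ with $\mRe(w)=\frac12$, interchanges $\absolute{w}=\frac12$ with $\absolute{w-\frac13}=\frac13$, and carries $\mRe(w)=\frac14$ onto $\absolute{w-\frac23}=\frac13$ --- all one-line inversion computations. Feeding in the descriptions $\V=\set{w\in\HH}{\mRe(w)\le\frac14,\ \absolute{w-\frac12}\le\frac12}$ and $\tilde\V=\set{w\in\V}{\absolute{w-\frac13}\le\frac13}$ then yields
\[
R(\tilde\V)\ \subseteq\ \set{w}{\absolute{w-\tfrac23}\le\tfrac13}\cap\set{w}{\mRe(w)\le\tfrac12}\cap\set{w}{\absolute{w}\le\tfrac12},
\]
the last inclusion being precisely where the defining inequality of $\tilde\V$ enters. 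Two elementary observations confine the right-hand side to $\VI$: the circle $\absolute{w-\frac23}=\frac13$ is internally tangent to $\absolute{w-\frac12}=\frac12$ at $w=1$ and hence lies inside it, so $\absolute{w-\frac23}\le\frac13$ already forces $\absolute{w-\frac12}\le\frac12$; and $\absolute{w-\frac23}\le\frac13$ forces $\mRe(w)\ge\frac13>\frac14$. Thus $R(\tilde\V)\subseteq\VI$, where $L(w)=1-2\bar w$, so with $R(z)=\frac{2\bar z-1}{2\bar z-2}$ this gives $h(z)=1-2\,\overline{R(z)}=1-2\cdot\frac{2z-1}{2z-2}=\frac{-z}{z-1}$.

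For part (b) I would represent $h(z)=\frac{-z}{z-1}$ by $\left(\begin{smallmatrix}-1&0\\ 1&-1\end{smallmatrix}\right)$; its trace is $-2$, so $h$ is parabolic, and $h(w)=w$ has the single (double) root $w=0$, which --- being the unique fixed point of a parabolic map --- is simultaneously the forward and backward attractor, i.e.\ a source and a sink. For the orbit geometry, conjugate by the involution $\phi(w)=\frac1w$: a short computation gives $\phi\circ h\circ\phi^{-1}(w)=w-1$, so $\phi$ carries the $h$-orbit of $z$ to the arithmetic progression $\set{\tfrac1z-k}{k\in\N}$, which lies on the horizontal line $\ell=\set{w}{\mIm(w)=\mIm(1/z)}$, with $\mIm(1/z)<0$. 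Since $\phi$ takes $\ell$ to a circle through $0=\phi(\infty)$, tangent to $\R$ there, one computes that the orbit lies on the circle $C_z$ with center $i\rho$ and radius $\rho=\frac{\absolute{z}^2}{2\mIm(z)}$; this circle passes through $z=h^0(z)$, as claimed. Finally, increasing $k$ by one decreases $\mRe\!\left(\tfrac1z-k\right)$, i.e.\ moves along $\ell$ in the direction that keeps the half-plane $\set{w}{\mIm(w)<\mIm(1/z)}$ on the left; since $\phi$ is orientation-preserving and maps that half-plane to the interior of $C_z$, the corresponding motion along $C_z$ is counterclockwise.

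Part (c) is then immediate from the formula for $C_z$: the tangent to $C_z$ at $z$ is vertical exactly when the radius $z-i\rho$ is horizontal, i.e.\ when $\mIm(z)=\rho=\frac{\absolute{z}^2}{2\mIm(z)}$, which rearranges to $2\mIm(z)^2=\mRe(z)^2+\mIm(z)^2$, i.e.\ to $\mRe(z)=\mIm(z)$ (both being positive on $D$). The one step that demands care is the region bookkeeping in part (a) --- checking that $R$ sends all of $\tilde\V$ into the single subregion $\VI$ rather than spilling into $\V$ or $\IV$; this is exactly the role played by the defining inequality $\absolute{z-\frac13}\le\frac13$ of $\tilde\V$. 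Everything after the identity $h(z)=\frac{-z}{z-1}$ is a routine application of the conjugation $\phi$ and the elementary geometry of circles tangent to the real axis.
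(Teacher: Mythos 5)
Your proposal is correct and follows essentially the same route as the paper: establish $R(\tilde\V)\subseteq\VI$ so that $h(z)=1-2\overline{R(z)}=\frac{-z}{z-1}$, identify $h$ as parabolic with fixed point $0$ and read off the orbit circle $C_z$ (the paper delegates this to a reference, you carry out the conjugation $w\mapsto 1/w$ explicitly), and finish (c) by noting that when $\mRe(z)=\mIm(z)$ the radius from the center $i\,\mIm(z)$ to $z$ is horizontal. The only difference is that you supply in full the region bookkeeping and the rotation-direction check that the paper leaves to the reader; no gaps.
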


\begin{proof}
    First note that for every $z\in \tilde \V$ we have $R(z) \in \VI$, thus property (\ref{lemitem_a:geometry_of_h}) follows directly from the formulas of $R$ in region $\V$ and $L$ in region $\VI$. In view of (\ref{lemitem_a:geometry_of_h}), property (\ref{lemitem_b:geometry_of_h}) follows by standard analysis, see e.g. \cite[p.88]{Ahlfors}. For property (\ref{lemitem_c:geometry_of_h}), let $o_z$ denote the center of $C_z$. Since $C_z$ is tangent to the real axis at $0$, the point $o_z$ is on the imaginary axis. Since $C_z$ passes through $z$, which satisfies $\mRe(z)=\mIm(z)$, one easily shows that the Euclidean distance from $o_z$ to $z$ is $\mIm(z)$ and from $o_z$ to $0$ is $\mRe(z)$, which implies the assertion.
\end{proof}

\begin{lemma}\label{lem:h_commuting_diagram}
For every $z$ for which $z,h(z), L(z)\in \tilde\V$ we have $L\circ h^2(z) = h\circ L(z)$. 
\end{lemma}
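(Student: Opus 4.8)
The plan is to verify the identity $L\circ h^2(z) = h\circ L(z)$ by direct computation, carefully tracking which pieces of the piecewise maps $L$ and $R$ apply at each stage. Since we are assuming $z, h(z), L(z)\in\tilde\V$, Lemma~\ref{lem:geometry_of_h}\eqref{lemitem_a:geometry_of_h} immediately gives us a clean closed form $h(w)=\frac{-w}{w-1}$ valid on all of $\tilde\V$, so in particular $h(z)$ and $h^2(z)$ can be computed as honest M\"obius images without worrying about region-dependence — the only subtlety is that applying $L$ to $h^2(z)$ and to $L(z)$ requires knowing which subregion of $D$ those points land in.

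First I would compute the right-hand side: $L(z) = 2z$ since $z\in\V$, and then $h(L(z)) = h(2z) = \frac{-2z}{2z-1}$, using that $L(z)\in\tilde\V$ so that the formula from Lemma~\ref{lem:geometry_of_h}\eqref{lemitem_a:geometry_of_h} applies. Next I would compute the left-hand side: $h(z) = \frac{-z}{z-1}$ and $h^2(z) = h\!\left(\frac{-z}{z-1}\right) = \frac{-\,\frac{-z}{z-1}}{\frac{-z}{z-1}-1} = \frac{z}{2z-1}$ (again legitimate since $h(z)\in\tilde\V$). Then I must apply $L$ to the point $w_0 := h^2(z) = \frac{z}{2z-1}$. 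Here is where I need to identify the region of $w_0$: the hypothesis should force $w_0$ into a region — plausibly $\V$ — where $L$ acts as $w\mapsto 2w$, giving $L(w_0) = \frac{2z}{2z-1}$, and comparing with the right-hand side $\frac{-2z}{2z-1}$ I see these differ by sign in the numerator, so in fact I expect $w_0$ to lie in region $\VI$, where $L(w)=1-2\bar w$. A quick sanity check: since $h$ rotates counterclockwise around $0$ along the circle $C_z$ tangent to the real axis at $0$ (Lemma~\ref{lem:geometry_of_h}\eqref{lemitem_b:geometry_of_h}), two applications of $h$ move $z$ a fair way around that circle, which geometrically pushes the point across the $\mRe=\tfrac14$ geodesic into region $\VI$; one then checks $L_\VI(w_0) = 1 - 2\overline{\left(\frac{z}{2z-1}\right)}$ and verifies this equals $\frac{-2z}{2z-1}$ by clearing denominators — note $w_0$ turns out to be real-free of conjugation issues precisely when the algebra works out, so I should double-check whether $w_0$ is genuinely the holomorphic image or whether an intermediate anti-M\"obius step (e.g. $R$ acting on $\V$ maps into $\VI$, which involves a conjugate) has already been absorbed into the Lemma~\ref{lem:geometry_of_h}\eqref{lemitem_a:geometry_of_h} formula. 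Since that lemma records $h(w)=\frac{-w}{w-1}$ as a genuine (holomorphic) M\"obius map on $\tilde\V$, the conjugations cancel in pairs, so $w_0$ is holomorphic in $z$ and the final verification is a polynomial identity.

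The main obstacle I anticipate is not the algebra — clearing denominators in a two-fold composition of M\"obius maps is routine — but rather correctly pinning down the region of the intermediate point $h^2(z)$, and in particular confirming that the hypotheses ``$z, h(z), L(z)\in\tilde\V$'' are exactly what guarantees $h^2(z)\in\VI$ (or whichever region makes $L$ act correctly). I would handle this by the same technique used in Lemma~\ref{lem:LL=LR_in_IV}: identify the relevant separating geodesics (here the images under $R$ and $L$ of $\mRe(z)=\tfrac14$, $\absolute{z}=\tfrac12$, $\absolute{z-\tfrac12}=\tfrac12$ restricted to $\tilde\V$), check where $h^2(z)$ falls relative to them, and observe that the assumption $L(z)=2z\in\tilde\V$ (equivalently $2z$ lies inside $\absolute{w-\tfrac13}\le\tfrac13$, i.e.\ $z$ lies in a suitable subdisk) is precisely the constraint that forces the landing region. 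Once the region is fixed, applying the correct formula for $L$ and simplifying yields $\frac{-2z}{2z-1}$ on both sides, completing the proof.
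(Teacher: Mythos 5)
Your overall strategy---compute both sides as M\"obius images and pin down which piece of $L$ applies to the intermediate point $h^2(z)$---is exactly the paper's, but a sign error in the very first composition derails the execution. From $h(w)=\frac{-w}{w-1}$ one gets $h(z)-1=\frac{-z}{z-1}-1=\frac{1-2z}{z-1}$, hence
\[
h^2(z)=\frac{z/(z-1)}{(1-2z)/(z-1)}=\frac{z}{1-2z}=\frac{-z}{2z-1},
\]
not $\frac{z}{2z-1}$ as you wrote. With the correct value the minus sign you were missing is already present, so no anti-M\"obius step is needed: the hypotheses force $h^2(z)\in\V$, where $L$ is simply multiplication by $2$, and
\[
L\bigl(h^2(z)\bigr)=\frac{-2z}{2z-1}=h(2z)=h(L(z))
\]
is immediate. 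This is precisely the paper's one-line argument.

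The downstream consequences of the sign error are not repairable as stated. First, your proposed final verification would fail: with your $w_0=\frac{z}{2z-1}$ one computes $1-2\overline{w_0}=\frac{-1}{2\bar z-1}$, which equals $\frac{-2z}{2z-1}$ only when $z$ is real, so ``clearing denominators'' does not close the argument. Second, the geometric heuristic is backwards: $h$ rotates points counterclockwise on the circle $C_z$ tangent to the real axis at $0$, which for points of $\tilde\V$ near the diagonal \emph{decreases} the real part (e.g.\ $\tfrac18+\tfrac18 i\mapsto \tfrac{3}{25}+\tfrac{4}{25}i\mapsto \tfrac1{10}+\tfrac15 i$), so iterating $h$ keeps the orbit to the left of $\mRe(w)=\tfrac14$, inside $\V$; it does not push it across into $\VI$. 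The same example $z=\tfrac18+\tfrac18 i$ (which does satisfy $z,h(z),L(z)\in\tilde\V$) confirms the corrected identity: $L(h^2(z))=2\left(\tfrac1{10}+\tfrac15 i\right)=\tfrac15+\tfrac25 i=h\!\left(\tfrac14+\tfrac14 i\right)=h(L(z))$. Redo the composition with the correct $h^2(z)$, justify $h^2(z)\in\V$ (the counterclockwise-rotation picture you already invoke does this), and the proof collapses to the paper's.
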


\begin{proof}
    By the previous lemma, and since $z,h(z), L(z)\in \tilde\V$, one easily verifies that $h^2(z)\in\V$ and that
    \[
    L\circ h^2(z) = \frac{-2z}{2z-1} = h\circ L(z),
    \]
    implying the assertion.
\end{proof}

For $k\in \N$ we consider the region $B_k = \frac{1}{2^k} \I$, and set 
\[
B=\bigcup_{k=1}^{\infty} B_k=\bigcup_{k=1}^{\infty} \frac{1}{2^k}\I.
\]
Notice that $B$ is comprised of shrinking copies of region $\I$ that approach the origin (see Figure~\ref{fig:B}), in particular, $B\subseteq \V$. Furthermore, since $\I\subseteq \set{z}{\absolute{z-1}\le 1}$, for each $k\ge 1$ we have $B_k\subseteq \set{z}{\absolute{z-\frac{1}{2^k}}\le \frac{1}{2^k}}$, and in particular, $B_k\subseteq \tilde \V$ for every $k\ge 2$. 

\begin{lemma}\label{lem:2^m_points_in_B}
    For $\zeta = \frac12+\frac12 i$, and every $m\in \N$ the point $z_m=\frac{\zeta}{2^m}\in B_m$ satisfies 
    \[
    \set{ h^j(z_m)}{j\in \{0,\ldots,2^{m-1}\} } \subseteq B_m.
    \]
\end{lemma}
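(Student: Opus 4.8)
The plan is to track the orbit $z_m, h(z_m), h^2(z_m), \dots$ explicitly using the conjugation of the parabolic map $h$ to a translation. By Lemma~\ref{lem:geometry_of_h}(\ref{lemitem_a:geometry_of_h}) we have $h(z) = \frac{-z}{z-1}$, a parabolic M\"obius map fixing $0$; the standard move is to pass to the coordinate $\varphi(z) = \frac{1}{z}$ (sending the fixed point $0$ to $\infty$), under which $h$ becomes the translation $u \mapsto u + 1$. Indeed $\frac{1}{h(z)} = \frac{1-z}{-z} = \frac{1}{z} - 1$, so in fact $\varphi \circ h = (u \mapsto u-1)\circ \varphi$. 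Hence $\frac{1}{h^j(z_m)} = \frac{1}{z_m} - j$, giving the closed form
\[
h^j(z_m) = \frac{z_m}{1 - j\,z_m} = \frac{\zeta}{2^m - j\zeta},
\]
using $z_m = \zeta/2^m$ with $\zeta = \frac12+\frac12 i$. So the whole lemma reduces to a concrete membership question about finitely many explicit points.

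Next I would translate ``$w \in B_m$'' into inequalities. By definition $B_m = \frac{1}{2^m}\I$, so $h^j(z_m) \in B_m$ iff $2^m h^j(z_m) \in \I$, i.e. iff the point $\zeta/(1 - j\zeta/2^m)$ lies in region $\I$. Region $\I$ is the sub-triangle of $D$ cut out by $\mRe(z) \le \tfrac14$, $|z|\ge \tfrac12$, together with the boundary arcs of $D$ (namely $\mRe(z)\le\tfrac12$, $|z-1|\le 1$, $\mIm(z)>0$); one checks these reduce, on the relevant range, to the two conditions $\mRe\le\frac14$ and $|z|\ge\frac12$ since the $D$-boundary constraints are implied. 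So the task is: for $w_j := \zeta/(1 - j\zeta/2^m)$ and $j \in \{0,1,\dots,2^{m-1}\}$, verify $\mRe(w_j) \le \tfrac14$ and $|w_j| \ge \tfrac12$. Writing $1 - j\zeta/2^m = 1 - \frac{j}{2^{m+1}} - \frac{j}{2^{m+1}}i =: a - bi$ with $a = 1 - \frac{j}{2^{m+1}}$, $b = \frac{j}{2^{m+1}}$, and $\zeta = \frac12(1+i)$, a short computation gives $w_j = \frac{\zeta(a+bi)}{a^2+b^2} = \frac{(a-b) + (a+b)i}{2(a^2+b^2)}$. Two elementary facts then finish it: $\mRe(w_j) = \frac{a-b}{2(a^2+b^2)}$ and $|w_j|^2 = \frac{(a-b)^2+(a+b)^2}{4(a^2+b^2)^2} = \frac{1}{2(a^2+b^2)}$. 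Note $|w_j|$ depends only on $a^2+b^2$, and since $j$ ranges over $\{0,\dots,2^{m-1}\}$ we have $a \in [\tfrac34,1]$ and $b\in[0,\tfrac14]$, so $a^2+b^2 \le 1 + \tfrac{1}{16} < 2$, hence $|w_j|^2 > \tfrac14$, i.e. $|w_j|>\tfrac12$. For the real part, $\mRe(w_j)\le\tfrac14$ is equivalent to $2(a-b) \le a^2+b^2$; substituting $a = 1-b$ this becomes $2(1-2b) \le (1-b)^2 + b^2 = 1 - 2b + 2b^2$, i.e. $0 \le 2b^2 + 2b - 1$... which is \emph{false} for small $b$ --- so here one must also use the ambient-$D$ boundary: for these $j$ the relevant binding constraint of region $\I$ turns out to be $|z-1|\le1$ or $|z-\tfrac12|\le\tfrac12$ rather than $\mRe\le\tfrac14$, and one verifies $w_j \in \I$ directly via the three geodesics bounding $\I$ inside $D$ as listed in Figure~\ref{fig:normalized_subdivision}: $\mRe(z)=\tfrac14$, $|z|=\tfrac12$, $|z-\tfrac12|=\tfrac12$.

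The cleanest route, which I would actually write up, avoids coordinates on the boundary of $\I$ altogether: use Lemma~\ref{lem:geometry_of_h}(\ref{lemitem_b:geometry_of_h}), which says the $h$-orbit of $z_m$ lies on the Euclidean circle $C_{z_m}$ tangent to $\R$ at $0$ and passing through $z_m$, traversed counterclockwise. Since $z_m = \zeta/2^m$ lies on the line $\mRe(z)=\mIm(z)$, part~(\ref{lemitem_c:geometry_of_h}) says the tangent to $C_{z_m}$ at $z_m$ is vertical, so $z_m$ is the topmost point of $C_{z_m}$; moreover $C_{z_m} = \frac{1}{2^m} C_\zeta$ where $C_\zeta$ is the unit-diameter circle through $\zeta$ tangent to $\R$ at $0$, which one identifies as $|z - \frac{i}{2}|=\frac12$ --- the right-hand boundary arc $|z-\tfrac12|=\tfrac12$ of region $\I$ rotated... more precisely $C_\zeta$ is exactly the boundary geodesic of $\I$ given by $|z|=\tfrac12$? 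One checks $|\zeta| = \frac{\sqrt2}{2}\ne\frac12$, so instead $C_\zeta$ is the circle $|2z - i| = 1$. The key geometric claim is that the counterclockwise arc of $C_{z_m}$ from $z_m$ of angular length corresponding to $2^{m-1}$ steps stays inside $\frac{1}{2^m}\I$; since one full loop of $C_{z_m}$ corresponds to $j\to\infty$ (the map is parabolic, never returning), I expect each step to advance by a controlled amount and the bound $j \le 2^{m-1}$ to be exactly what keeps $\mRe(h^j(z_m))\le 0 \le \frac14 \cdot \frac{1}{1}$... The \textbf{main obstacle} is pinning down precisely how far counterclockwise along $C_{z_m}$ the index $j=2^{m-1}$ reaches and checking that this arc --- which starts at the top of the small circle and sweeps toward $0$ along the left side --- remains within region $\frac{1}{2^m}\I$; I expect this to follow by showing $\mRe(h^j(z_m)) \le \mIm(h^j(z_m))$ and $\mRe(h^j(z_m))\ge 0$ throughout (so the orbit stays in the ``left half'' of the small circle, which is contained in $\frac{1}{2^m}\I$), both of which reduce via the closed form $h^j(z_m) = \frac{(a-b)+(a+b)i}{2(a^2+b^2)}$ with $a=1-\frac{j}{2^{m+1}}$, $b = \frac{j}{2^{m+1}}$ to the transparent inequalities $0 \le a - b \le a+b$, i.e. $0 \le 1 - \frac{j}{2^m} $ and $0\le \frac{j}{2^{m+1}}$, the first of which is exactly $j \le 2^m$ --- comfortably satisfied for $j \le 2^{m-1}$, with the tighter bound $2^{m-1}$ presumably needed only to keep the image inside the arc of $\I$ cut out by $|z-\tfrac12|=\tfrac12$ or $|z|=\tfrac12$ rather than for these two inequalities, so the final write-up will verify that remaining boundary condition as well.
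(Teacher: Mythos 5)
Your starting point is sound and genuinely different from the paper's: conjugating the parabolic map $h$ by $z\mapsto 1/z$ to the translation $u\mapsto u-1$ gives the closed form $h^j(z_m)=\frac{\zeta}{2^m-j\zeta}$, and the lemma does reduce to checking that finitely many explicit points $2^m h^j(z_m)$ lie in region $\I$ (the paper instead argues by induction on $m$ via the semiconjugacy $L\circ h^2=h\circ L$ of Lemma~\ref{lem:h_commuting_diagram}). But the execution has a genuine gap: you misidentify region $\I$. From Figure~\ref{fig:normalized_subdivision}, $\I$ is the \emph{upper-right} region $\set{z\in D}{\mRe(z)\ge\frac14,\ \absolute{z-\frac12}\ge\frac12}$; the set you describe ($\mRe(z)\le\frac14$, $\absolute{z}\ge\frac12$) is region $\II$. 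This is why your inequality $\mRe(w_j)\le\frac14$ ``fails for small $b$'' --- it is not one of the defining conditions (indeed $w_0=\zeta$ has real part $\frac12$), and the subsequent hedging never recovers. The conditions you ultimately propose to verify, $0\le\mRe(h^j(z_m))\le\mIm(h^j(z_m))$, are not sufficient: they are satisfied by $h^{2^m}(z_m)=\frac{i}{2^m}$, which is not even in $\frac{1}{2^m}D$ since $\absolute{i-1}>1$. The supporting geometry is also off: $z_m$ is the \emph{rightmost} point of $C_{z_m}$, not the topmost (the center lies on the imaginary axis and the tangent at $z_m$ is vertical), and the ``left half'' of $C_{z_m}$ is certainly not contained in $B_m$. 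So the actual binding constraints are never identified or verified.

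The good news is that your closed form does finish the job once the correct description of $\I$ is used. Writing $b=\frac{j}{2^{m+1}}\in[0,\frac14]$, your computation gives $2^m h^j(z_m)=\frac{(1-2b)+i}{2(1-2b+2b^2)}$, and the conditions $\mRe\ge\frac14$, $\mRe\le\frac12$, $\absolute{z-\frac12}\ge\frac12$, $\absolute{z-1}\le1$ reduce respectively to $2b^2+2b-1\le0$, $2b^2\ge0$, $4b(1-2b+2b^2)\ge0$, and $2-12b+20b^2-16b^3\ge0$, all of which hold on $[0,\frac14]$ (the last is decreasing there and vanishes exactly at $b=\frac14$, matching the paper's observation that the endpoint $h^{2^{m-1}}(z_m)$ lands on the top boundary arc of $B_m$). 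With that correction your argument becomes a valid, self-contained alternative to the paper's inductive proof; as written, it does not establish the lemma.
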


\begin{proof}
    We argue by induction on $m$. For $m=1$, note that $z_1 = \frac14+\frac14 i \in \tilde \V$, then $h(z_1) = \frac15+\frac25 i \in B_1$ (since $|h(z_1)-\frac12|=\frac12$ and the geodesic $\absolute{w-\frac12}=\frac12$ bounds $B_1$ from above, see Figure \ref{fig:B}), as required. 
    For $z_m\in B_m$ suppose that  $\set{h^j(z_m)}{j\in \{0,\ldots,2^{m-1}\} } \subseteq B_m$. For the induction step, since $B_k\subseteq \tilde\V$ for every $k\ge 2$, by Lemma \ref{lem:h_commuting_diagram}, we have 
    \begin{equation}\label{eq:B_m}
    h^{2^{m}}(z_{m+1})=(L^{-1}\circ h\circ L)^{2^{m-1}}(z_{m+1}) = L^{-1}\circ h^{2^{m-1}}\circ L(z_{m+1}) = \frac12 h^{2^{m-1}}(z_m) \in B_{m+1}, 
   \end{equation}
    where the relation on the right-hand side follows from $h^{2^{m-1}}(z_m)\in B_m$, which is obtained from the induction hypothesis. In particular, $h^{2^{m}}(z_{m+1})\in B_{m+1}$. Moreover, iterating \eqref{eq:B_m} yields 
    \[
    h^{2^{m}}(z_{m+1}) = \frac{1}{2^{m}} h(z_1).
    \]
    Therefore, $\absolute{h^{2^{m}}(z_{m+1}) - \frac{1}{2^{m+1}}} = \frac{1}{2^m}\absolute{h(z_1)-\frac12} = \frac{1}{2^{m+1}}$, which shows that $h^{2^{m}}(z_{m+1})$ is on the top boundary of $B_{m+1}$, see Figure \ref{fig:B}.
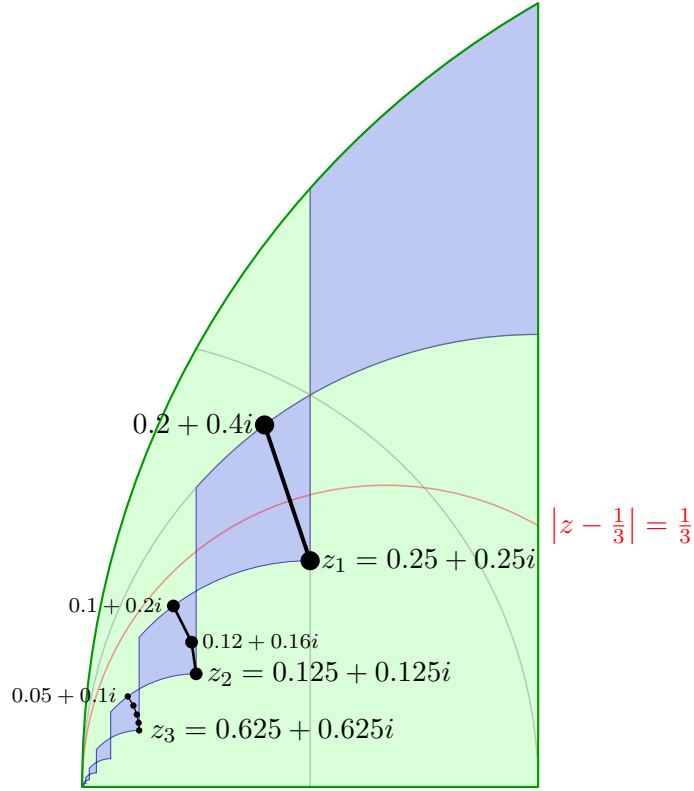
\begin{figure}[h!]
\centering
\begin{tikzpicture}[scale=12] 

    \coordinate (A) at (0,0);
    \coordinate (B) at (0.5,0);
    \coordinate (C) at (0.5, {sqrt(3)/2});

    \fill[green!15] (A) -- (B) -- (C) arc (120:180:1) -- cycle;

    \foreach \n in {0,1,2,3,4,5,6,7,8} {
        \pgfmathsetmacro{\s}{1/(2^\n)}
        \pgfmathsetmacro{\xVone}{\s * 0.25}
        \pgfmathsetmacro{\yVone}{\s * sqrt(7)/4}
        \pgfmathsetmacro{\xVtwo}{\s * 0.5}
        \pgfmathsetmacro{\yVtwo}{\s * sqrt(3)/2}
        \pgfmathsetmacro{\xVthree}{\s * 0.5}
        \pgfmathsetmacro{\yVthree}{\s * 0.5}
        \pgfmathsetmacro{\xVfour}{\s * 0.25}
        \pgfmathsetmacro{\yVfour}{\s * sqrt(3)/4}
        \coordinate (sV1) at (\xVone, \yVone);
        \coordinate (sV2) at (\xVtwo, \yVtwo);
        \coordinate (sV3) at (\xVthree, \yVthree);
        \coordinate (sV4) at (\xVfour, \yVfour);
        \pgfmathsetmacro{\angleArcOneStart}{atan2(\yVtwo, \xVtwo - \s * 1)}
        \pgfmathsetmacro{\angleArcOneEnd}{atan2(\yVone, \xVone - \s * 1)}
        \pgfmathsetmacro{\angleArcTwoStart}{atan2(\yVfour, \xVfour - \s * 0.5)}
        \pgfmathsetmacro{\angleArcTwoEnd}{90}
        \pgfmathsetmacro{\radiusOne}{\s}
        \pgfmathsetmacro{\radiusTwo}{\s/2}
        \path[fill=blue!30, opacity=0.7, draw=blue!60!black, line width=0.4pt]
            (sV3) -- (sV2)
            arc (\angleArcOneStart:\angleArcOneEnd:{\radiusOne})
            -- (sV4)
            arc (\angleArcTwoStart:\angleArcTwoEnd:{\radiusTwo})
            -- cycle;
    }

    \draw[black!50, line width=0.6pt, opacity=0.4]
        (0.25, 0) -- (0.25, {sqrt(7)/4}) 
        (0.5,0) arc (0:{acos(0.25)}:0.5) 
        (0,0) arc (180:90:0.5); 

    \draw[red, line width=0.6pt, opacity=0.4] 
        (0,0) arc (180:60:{1/3}) 
        node[pos=1, right, font=\small, text=red, text opacity=1] {$\absolute{z-\frac13}=\frac13$};
     
    \draw[green!60!black, line width=0.8pt, line join=round]
      (A) -- (B) -- (C) arc (120:180:1) -- cycle;

    \coordinate (Z0) at (0.25, 0.25);
    \coordinate (Z1) at (0.2, 0.4);
    
    \draw[black, line width=1.5pt, -] (Z0) -- (Z1);

    \fill[black] (Z0) circle (0.3pt) node[anchor=west, font=\small] {$z_1=0.25+0.25i$};
    \fill[black] (Z1) circle (0.3pt) node[anchor=east, font=\small] {$0.2+0.4i$};
    
    \coordinate (Z0) at (0.125, 0.125);
    \coordinate (Z1) at (0.12, 0.16);
    \coordinate (Z2) at (0.1, 0.2);
    
    \draw[black, line width=1pt, -] (Z0) -- (Z1) -- (Z2);

    \fill[black] (Z0) circle (0.2pt) node[anchor=west, font=\small] {$z_2=0.125+0.125i$};
    \fill[black] (Z1) circle (0.2pt) node[anchor=west, font=\tiny] {$0.12+0.16i$};
    \fill[black] (Z2) circle (0.2pt) node[anchor=east, font=\tiny] {$0.1+0.2i$};

    \coordinate (Z0) at (0.0625, 0.0625);
    \coordinate (Z1) at (0.0619, 0.0708);
    \coordinate (Z2) at (0.06, 0.08);
    \coordinate (Z3) at (0.0562, 0.0899);
    \coordinate (Z4) at (0.05, 0.1);
    
    \draw[black, line width=0.5pt, -] (Z0) -- (Z1) -- (Z2) -- (Z3) -- (Z4);

    \fill[black] (Z0) circle (0.1pt) node[anchor=west, font=\small] {$z_3=0.625+0.625i$};
    \fill[black] (Z1) circle (0.1pt);
    \fill[black] (Z2) circle (0.1pt);
    \fill[black] (Z3) circle (0.1pt);
    \fill[black] (Z4) circle (0.1pt) node[anchor=east, font=\tiny] {$0.05+0.1i$};
\end{tikzpicture}
\centering
\caption{The region $B$ is shaded in blue. The black points are the parts of the orbits of $z_1 = \frac{1}{4}+\frac{1}{4}i$, $z_2 = \frac{1}{8}+\frac{1}{8}i$ and $z_3 = \frac{1}{16}+\frac{1}{16}i$ that belong to $B$.}
\label{fig:B}
\end{figure}
    Finally, denote by $C_{m+1}$ the circle through $0$ and $z_{m+1}$ that contains the $h$-orbit of $z_{m+1}$. By \emph{\eqref{lemitem_c:geometry_of_h}} of Lemma \ref{lem:geometry_of_h}, the tangent to $C_{m+1}$ at $z_{m+1}$ is vertical. Since $h^{2^{m}}(z_{m+1})$ is on the top boundary of $B_{m+1}$, the region $B_{m+1}$ contains the entire arc of $C_{m+1}$ that connects $z_{m+1}$ and $h^{2^m}(z_{m+1})$. Since, by \emph{\eqref{lemitem_b:geometry_of_h}} of Lemma \ref{lem:geometry_of_h}, $h$ acts by moving a point counterclockwise on the circle $C_{m+1}$, we have $\set{h^j(z_{m+1})}{ j\in \{0,\ldots,2^{m}\}} \subseteq B_{m+1}$, which completes the proof.
\end{proof}

We are now ready to deduce Theorem \ref{thm:arbitrarily_many_quadruples}.
\begin{proof}[Proof of Theorem \ref{thm:arbitrarily_many_quadruples}]
   We begin with finding such points $z$ in region $\V$. Given $n\in\N$ we choose $m\in\N$ that satisfies $2^{m}\ge n$, and set $z= z_{m+2}$ from Lemma \ref{lem:2^m_points_in_B}. Then 
   \[
   \set{h^j(z)}{j\in \{0,\ldots,2^{m+1}\}}\subseteq B_{m+2}\subseteq B.
   \] 
   Note that by \emph{\eqref{lemitem_b:geometry_of_h}} of Lemma \ref{lem:geometry_of_h}, the set $\set{h^j(z)}{ j\in \{0,\ldots,2^{m+1}\}}$ contains $2^{m+1}+1$ distinct points. Also, notice that for every $z\in B$ there exists a minimal $k\in \N$ for which $L^k(z)\in \I$, and in particular, for every $j\in \{0,\ldots,2^{m+1}\}$ we have $L^{m+2}(h^j(z))\in \I$. These are $2^{m+1}+1$ distinct triangles in region $\I$ that are in $\leb(z)$. Each of these triangles in $\I$ has an orbit of length at most $4$, which contains at most $2$ elements from region $\I$. It follows that $q(z)\ge \frac{2^{m+1}+1}{2}>2^m\ge n$, as required.
   
   To find such $z$ in region $\VI$, since $R^2=\mathrm{Id}$ in regions $\V$ and $\VI$, it suffices to show that $R(z_m) = R\left(\frac{\zeta}{2^m}\right)\in \VI$, for $m$ large enough. 
   Since
   \begin{equation*}
       \begin{split}
            & R\left(\frac{\zeta}{2^m}\right) = 
            1+\frac12\frac{1}{\frac{\overline\zeta}{2^m}-1} = 
            \left(\frac12 - \frac{2^{m+1}-2}{2^{2m+2}-2^{m+2}}\right) + i\left(\frac{1}{2^{m+1}-2}\right),
   \end{split}
   \end{equation*}
   we clearly have $\absolute{R\left(\frac{\zeta}{2^m}\right) - \frac12} < \frac12$ and $\mRe\left(R\left(\frac{\zeta}{2^m}\right)\right)>\frac14$, for every $m\ge 2$, and hence $R\left(\frac{\zeta}{2^m}\right)\in\VI$ and the proof is complete. 
\end{proof}

\section{The bisection graph and its spectral properties}\label{section:graph}
In this section, we prove Theorem~\ref{thm:occupying_area}. The proof relies on known results for nonnegative matrices. We state them here for the completeness of the text. 
We use the notation $\mnull(A)$ for the null space of a matrix $A$, and $[n] = \left\{1,2,\ldots,n\right\}$. Recall that the {\bf spectral radius} $\rho(A)$ of a matrix $A$ is the maximal absolute value of an eigenvalue of $A$. For a graph $G$, we use $\rho(G)$ for the spectral radius of its adjacency matrix. By the {\bf generalized eigenspace of $A$ that belongs to an eigenvalue $\lambda$} we refer to the space $\mnull(A-\lambda I)^{d(\lambda)}$, where $d(\lambda)$ is the the smallest natural number $k$ that satisfies $\mnull(A-\lambda I)^{k} = \mnull(A-\lambda I)^{k+1}$. The non-zero vectors of the generalized eigenspace are called {\bf generalized eigenvectors}.   

\begin{theorem}\label{thm:bounds_on_rho(A)}\cite[Theorem 8.1.22]{HornJ13}
    Let $A=(a_{ij})$ be a nonnegative $n\times n$ matrix, then 
    \begin{equation*}
        \begin{split}
            \min_{1\le i\le n}\sum_{j=1}^na_{ij} \le  \rho(A) \le  
    \max_{1\le i\le n}\sum_{j=1}^na_{ij} 
    ~~~\text{ and }~~~
    \min_{1\le j\le n}\sum_{i=1}^na_{ij} \le  \rho(A) \le  \max_{1\le j\le n}\sum_{i=1}^na_{ij}.
    \end{split}
    \end{equation*}
\end{theorem}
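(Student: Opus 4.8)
The plan is to prove the four inequalities by two separate mechanisms: the two upper bounds are a soft consequence of the inequality between spectral radius and operator norms, while the two lower bounds require a genuinely Perron--Frobenius-type ingredient. For the upper bounds, recall that $\rho(M)\le\|M\|$ for every square matrix $M$ and every operator norm induced by a vector norm, since $Mv=\lambda v$ with $v\ne 0$ forces $|\lambda|\,\|v\|=\|Mv\|\le\|M\|\,\|v\|$. Taking the norm induced by $\ell_\infty$, whose value is the largest absolute row sum, and the one induced by $\ell_1$, whose value is the largest absolute column sum, and using that $|a_{ij}|=a_{ij}$ because $A$ is nonnegative, yields $\rho(A)\le\max_{1\le i\le n}\sum_j a_{ij}$ and $\rho(A)\le\max_{1\le j\le n}\sum_i a_{ij}$.

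For the lower bound on the row sums, put $r=\min_{1\le i\le n}\sum_j a_{ij}$. If $r=0$ there is nothing to prove, since $\rho(A)\ge 0$ always, so assume $r>0$ and suppose toward a contradiction that $\rho(A)<r$. Then $\rho(A/r)<1$, so $rI-A$ is invertible with inverse given by the Neumann series $(rI-A)^{-1}=\sum_{k\ge 0}r^{-k-1}A^k$, which is an entrywise nonnegative matrix because $A\ge 0$ and $r>0$. On the other hand, writing $\mathbf{1}$ for the all-ones vector, the $i$-th entry of $(rI-A)\mathbf{1}$ equals $r-\sum_j a_{ij}\le 0$, so $(rI-A)\mathbf{1}\le\mathbf{0}$ entrywise; multiplying this inequality on the left by the nonnegative matrix $(rI-A)^{-1}$ gives $\mathbf{1}\le\mathbf{0}$, which is absurd. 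Hence $\rho(A)\ge r$. (Equivalently, one may quote the Collatz--Wielandt bound with the test vector $\mathbf{1}$, or dominate $A$ from below by the nonnegative matrix $B$ obtained by shrinking entries until every row of $B$ sums to exactly $r$, so that $\mathbf{1}$ is an eigenvector of $B$ for the eigenvalue $r$; then $\rho(B)=r$, and monotonicity of $\rho$ with respect to the entrywise order gives $\rho(A)\ge\rho(B)=r$.)

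The two column-sum bounds then follow by applying the row-sum bounds to $A^{T}$, which is again nonnegative, has $\rho(A^{T})=\rho(A)$, and whose row sums are the column sums of $A$. I expect the lower bound to be the only non-routine step, and it is precisely the place where nonnegativity of $A$ enters essentially — through nonnegativity of the resolvent $(rI-A)^{-1}$, equivalently through monotonicity of the spectral radius under the entrywise order — whereas the upper bounds and the passage to $A^{T}$ are purely formal.
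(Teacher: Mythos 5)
Your proof is correct. Note that the paper does not prove this statement at all --- it is quoted verbatim from Horn and Johnson (Theorem 8.1.22) purely for completeness --- so there is no in-paper argument to compare against; your derivation (upper bounds from the induced $\ell_\infty$ and $\ell_1$ operator norms, lower bound from nonnegativity of the resolvent $(rI-A)^{-1}$ via the Neumann series, column case by transposition) is a sound, self-contained version of the standard textbook proof, and your parenthetical alternative via monotonicity of $\rho$ under the entrywise order is essentially the route Horn--Johnson themselves take.
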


The following theorem is a particular case of \cite[Theorem 3.20, p. 43]{BermP94}.
\begin{theorem}\label{thm:basis_of_eigenvectors}
Suppose that $G=\left([n],E\right)$ is a directed graph and $A_G$ is the adjacency matrix of $G$. Let $\alpha_1,\ldots,\alpha_s$ be all the strongly connected components of $G$ with $\rho(\alpha_j)=\rho(A_G)$, for every $j$. Then the dimension of the generalized eigenspace of $\rho(A_G)$ is at most $s$. 
\end{theorem}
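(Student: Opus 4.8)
The statement is purely a fact about nonnegative matrices, so the plan is to work directly with $A:=A_G$ and compute the algebraic multiplicity of $\rho:=\rho(A)$, invoking the standard fact that over $\C$ the dimension of the generalized eigenspace $\mnull(A-\rho I)^{d(\rho)}$ equals the multiplicity of $\rho$ as a root of the characteristic polynomial $\chi_A$. The only substantial ingredient will be the Perron--Frobenius theorem for irreducible nonnegative matrices; everything else is bookkeeping of the block structure induced by the strongly connected components.

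The first step is to put $A$ into Frobenius normal form. The condensation of $G$ obtained by contracting each strongly connected component to a point is acyclic, hence admits a topological order; relabelling the vertices of $G$ accordingly makes $A$ block upper triangular, with diagonal blocks $B_1,\dots,B_r$ equal to the adjacency matrices of the $r$ strongly connected components of $G$. Each $B_p$ is either irreducible or the $1\times1$ zero matrix, and, since an irreducible nonnegative matrix of size $\ge 2$ contains a nontrivial closed walk and hence has positive spectral radius, the blocks with $\rho(B_p)=0$ are exactly the $1\times1$ zero ones. Block triangularity gives $\chi_A=\prod_{p=1}^{r}\chi_{B_p}$, whence $\rho=\max_p\rho(B_p)$ and the algebraic multiplicity of $\rho$ in $A$ is $\sum_{p=1}^{r}m_p$, where $m_p$ is the multiplicity of $\rho$ as a root of $\chi_{B_p}$. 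Then $m_p=0$ when $\rho(B_p)<\rho$; when $\rho(B_p)=\rho$ and $B_p$ is irreducible, the Perron--Frobenius theorem makes $\rho(B_p)$ a simple root, so $m_p=1$; and when $B_p=[0]$ with $\rho=0$ one again has $m_p=1$. Since, by hypothesis, $\alpha_1,\dots,\alpha_s$ are precisely the strongly connected components whose spectral radius equals $\rho$, summing over $p$ gives algebraic multiplicity exactly $s$, and therefore $\dim\mnull(A-\rho I)^{d(\rho)}=s$; in particular it is at most $s$. (The argument in fact delivers equality.)

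The only step carrying genuine content is the evaluation of the local multiplicities $m_p$, i.e., the simplicity of the Perron root of an irreducible nonnegative matrix; I do not anticipate any real obstacle beyond that, although some care is needed with degenerate components. Loopless isolated vertices produce $1\times1$ zero diagonal blocks, which contribute only to the eigenvalue $0$ and are therefore harmless unless $\rho=0$; in that exceptional case $A$ is nilpotent, every strongly connected component is such a block, and the count still returns $s$, namely the dimension of the whole space. One could alternatively run the argument at the level of the reduced (condensation) graph, but the matrix normal form seems the cleanest route.
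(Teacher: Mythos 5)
Your argument is correct, and in fact proves the stronger statement that the dimension of the generalized eigenspace of $\rho(A_G)$ equals $s$ exactly. There is, however, nothing in the paper to compare it against: the paper does not prove this theorem but quotes it as a special case of Berman--Plemmons \cite[Theorem 3.20, p.~43]{BermP94}. Your route --- Frobenius normal form via a topological order on the condensation, multiplicativity of the characteristic polynomial over the block-triangular structure, and simplicity of the Perron root of each irreducible diagonal block --- is the standard textbook derivation of that cited result, and your handling of the degenerate cases (loopless singleton components giving $1\times 1$ zero blocks, and the nilpotent case $\rho=0$) is careful and correct.
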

 
As a consequence of the Perron--Frobenius theorem (see \cite[Theorem 8.4.4]{HornJ13}) we have:

\begin{lemma}\label{lem:non-final_irreducible_blocks}
    Let $A\in M_n(\R)$ be a nonnegative matrix with column sums equal to $\lambda>0$. Let $B\in M_k(\R)$, with $2\le k<n$, be an irreducible principal submatrix of $A$ with at least one column sum which is strictly smaller than $\lambda$. Then $\rho(B)<\lambda = \rho(A)$.  
\end{lemma}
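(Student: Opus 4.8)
The plan is to combine two facts: (i) by Theorem~\ref{thm:bounds_on_rho(A)} applied to the column sums of $A$, since every column of $A$ sums to exactly $\lambda$, we get $\rho(A)=\lambda$; and (ii) the strict inequality $\rho(B)<\lambda$ will come from the Perron--Frobenius theorem applied to the irreducible submatrix $B$. For step (ii), first I would handle the degenerate possibility that $B$ is the zero $k\times k$ matrix — but that cannot happen, because an irreducible matrix of size $k\ge 2$ cannot be nilpotent in that trivial way (its associated digraph is strongly connected, hence has a cycle), so $\rho(B)>0$. Now, since $B$ is nonnegative and irreducible, Perron--Frobenius gives that $\rho(B)$ is a simple eigenvalue with a strictly positive eigenvector $x\in\R^k$, $x>0$, $B x=\rho(B)x$.

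The key step is to compare $\rho(B)$ with $\lambda$ using the column sums of $B$. Let $S\subseteq[n]$ be the index set of the principal submatrix $B$, so $B=(a_{ij})_{i,j\in S}$. For each $j\in S$, the column sum of $B$ in column $j$ is $\sum_{i\in S}a_{ij}\le \sum_{i\in[n]}a_{ij}=\lambda$, because $A$ is nonnegative; moreover, by hypothesis there is at least one index $j_0\in S$ with $\sum_{i\in S}a_{ij_0}<\lambda$ (either because some entry $a_{ij_0}$ with $i\notin S$ is positive, or because the hypothesis directly supplies such a column). Thus $B$ has all column sums $\le\lambda$ and at least one column sum strictly less than $\lambda$. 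To turn this into a strict bound on $\rho(B)$, consider the all-ones vector $\mathbf{1}\in\R^k$ and the positive left eigenvector $y^{\mathsf T}$ of $B$ for $\rho(B)$, i.e.\ $y>0$ and $y^{\mathsf T}B=\rho(B)y^{\mathsf T}$. Then
\[
\rho(B)\,\sum_{i\in S} y_i \;=\; \mathbf{1}^{\mathsf T}(\rho(B) y)\;=\;\mathbf{1}^{\mathsf T}(B^{\mathsf T} y)\;=\;\sum_{j\in S}\Bigl(\sum_{i\in S}a_{ij}\Bigr) y_j \;\le\; \lambda\sum_{j\in S} y_j,
\]
and since $y_{j_0}>0$ and the column-$j_0$ sum is strictly less than $\lambda$, the inequality is strict:
\[
\rho(B)\,\sum_{i\in S} y_i \;<\;\lambda\,\sum_{j\in S} y_j.
\]
As $\sum_{i\in S}y_i>0$, we conclude $\rho(B)<\lambda=\rho(A)$, as claimed.

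The only subtlety — and the one place to be careful — is justifying that $B$ genuinely has a column sum strictly below $\lambda$; this is exactly the hypothesis, so no work is needed there, but one should note that the argument uses nonnegativity of $A$ in an essential way (to get column sums of the submatrix bounded by those of $A$) and irreducibility of $B$ in an essential way (to obtain a strictly positive left eigenvector, which is what converts a weak entrywise bound into a strict spectral bound). I expect the main obstacle, if any, to be purely expository: making sure the indexing of the principal submatrix is stated cleanly so that "column sum of $B$" versus "column sum of $A$" comparison is transparent. Alternatively, one can phrase the final step via the standard fact that for an irreducible nonnegative matrix $B$, $\min_j (\text{colsum}_j B)\le\rho(B)\le\max_j(\text{colsum}_j B)$ with equality on either side forcing all column sums equal; the strict inequality then follows immediately from the existence of a column sum $<\lambda$ together with the existence of a column sum $\le\lambda$ everywhere. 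I would use whichever formulation keeps the proof shortest.
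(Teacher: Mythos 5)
Your overall strategy is exactly the paper's: get $\rho(A)=\lambda$ from the constant column sums, then pair the Perron eigenvector of the irreducible block $B$ against the all-ones vector and use the column-sum deficit together with strict positivity of that eigenvector to force a strict spectral inequality. However, there is a transpose slip in your main display. With a \emph{left} eigenvector $y^{\mathsf T}B=\rho(B)y^{\mathsf T}$ one actually gets
\[
\rho(B)\sum_{i\in S}y_i \;=\; \mathbf{1}^{\mathsf T}\bigl(B^{\mathsf T}y\bigr)\;=\;\sum_{j\in S}\Bigl(\sum_{i\in S}a_{ji}\Bigr)y_j,
\]
i.e.\ the \emph{row} sums of $B$ appear, not the column sums; the equality you wrote, with $\sum_{i\in S}a_{ij}$, is false in general, and the hypothesis gives you no control over the row sums of $B$. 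The fix is immediate and is precisely what the paper does: take the \emph{right} Perron eigenvector $x>0$ with $Bx=\rho(B)x$ and compute $\rho(B)\,\mathbf{1}^{\mathsf T}x=\mathbf{1}^{\mathsf T}Bx=\langle B^{\mathsf T}\mathbf{1},x\rangle=\sum_{j\in S}\bigl(\sum_{i\in S}a_{ij}\bigr)x_j<\lambda\sum_{j\in S}x_j$. Everything else in your write-up is fine: the comparison of the column sums of $B$ with those of $A$ via nonnegativity, the role of positivity of the eigenvector in upgrading a weak entrywise bound to a strict spectral bound, and the alternative ending via the sharpened column-sum bounds for irreducible matrices (equality forces all column sums equal) are all correct and would each close the argument once the eigenvector is taken on the correct side.
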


\begin{proof}
    Since $B$ is irreducible, by the Perron--Frobenius theorem, there is a positive eigenvector $v=(c_1,\ldots,c_k)^t$ (i.e., $c_j>0$ for all $j$) satisfying $Bv=\rho(B)v$. The assumptions on the column sums of $A$ and $B$ imply that the vector $B^t \mathds{1}$ satisfies the inequality $B^t \mathds{1} \le \lambda \mathds{1}$ entry-wise, with strict inequality in at least one component, and since $v$ is positive, we have:
    \[
    \langle B^t \mathds{1}, v \rangle < \langle \lambda \mathds{1}, v \rangle.
    \]
    
    Thus we get:
    \[
    \rho(B) \langle \mathds{1}, v \rangle = \langle \mathds{1}, \rho(B)v \rangle = \langle \mathds{1}, Bv \rangle = \langle B^t  \mathds{1}, v \rangle < \langle \lambda \mathds{1}, v \rangle = \lambda \langle \mathds{1}, v \rangle,
    \]
    dividing by the positive scalar $\langle \mathds{1}, v \rangle$, we conclude that $\rho(B) < \lambda=\rho(A)$.
\end{proof}

We are now ready to introduce the bisection graph $\gleb(z)$. We start with the generic case.

\begin{definition}[generic $\gleb(z)$]
    For a generic initial triangle $z$, the $\leb$ process can be represented by a directed graph $\gleb(z)=(V,E)$ on the vertex set $V=\leb(z)$, with 
    \[
    E=\set{\left(w,L(w)\right)}{w\in\leb(z)} \uplus \set{\left(w,R(w)\right)}{w\in\leb(z)}.
    \]
    We label each edge in $\gleb(z)$ naturally by $L$ or $R$. 
\end{definition}

\begin{figure}[h!]
    \centering
    \includegraphics[scale=0.081]{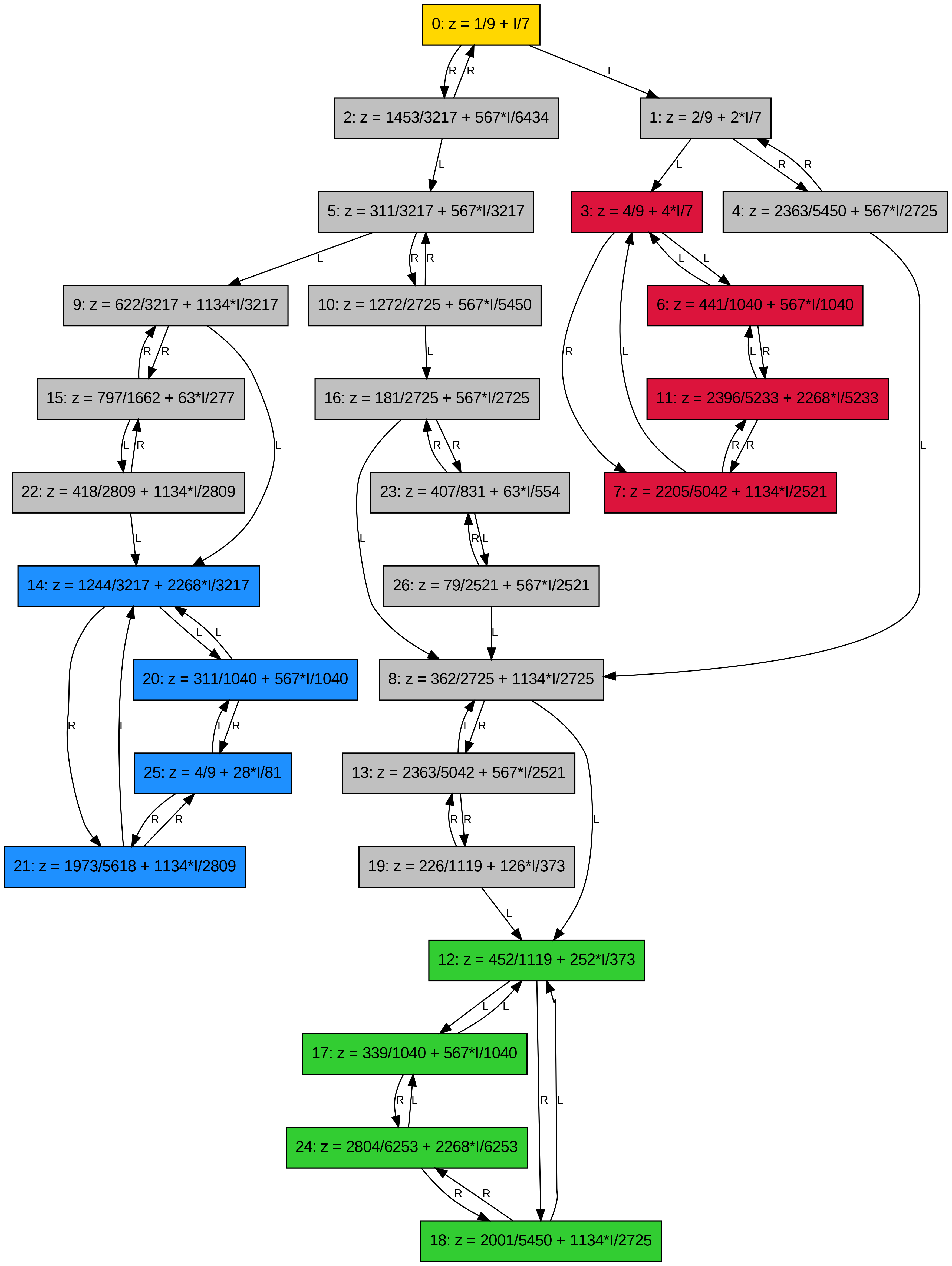}
    \caption{The graph $\gleb(z)$ for $z=\frac{1}{9}+\frac{1}{7}i$. The colors and numbering correspond to what is shown in Figure~\ref{fig:final_points_plot_1:9+i:7}.}
    \label{fig:bisection_graph_1:9+i:7}
\end{figure}

For non-generic $z$, the above graph has loops or double edges. From Proposition~\ref{prop:non-generic}, we know that the set $T=\set{w\in\leb(z)}{\absolute{w}=\frac{\sqrt{2}}{2}\text{ or }\absolute{w-1}=\frac{\sqrt{2}}{2}}$ is the subset of vertices that have loops, and the set $S=\set{w'\in\leb(z)}{\mRe(w')=\frac{1}{2}}$ is the subset of vertices that have (outgoing) double edges. Although the set of non-generic points has measure zero, it is desirable for $\gleb(z)$ to be a simple directed graph even in this case. To this end, we introduce two vertices to represent a ``problematic'' point, having the same outgoing edges, and treat the incoming edges as in the following definition. This procedure formally makes $\leb(z)$ larger.

\begin{definition}[non-generic $\gleb(z)$]
    Let $z \in \cD$ be a non-generic triangle.
    \begin{enumerate}
        \item {\bf Resolving a single loop ($T \setminus A$)}: A point $w$ on this curve has a loop $R(w)=w$. We replace $w$ with two vertices $w_1, w_2$, and the loop is replaced by the $2$-cycle: edges $(w_1, w_2)$ labeled $R$ and $(w_2, w_1)$ labeled $R$. Incoming edges to $w$ are directed to $w_1$. See Figure~\ref{fig:non-generic_1}.
        
        \item {\bf Resolving a pair of loops ($T \cap A$)}: Points in $T \cap A$ come in pairs: $w, w'$, where each possesses a loop $L(w)=w, R(w')=w'$, and $L(w')=w, R(w)=w'$. We apply the splitting procedure (as in the previous item) to both vertices. We then connect the resulting pairs by duplicating the edges between $w$ and $w'$: for $i=1,2$, we add edges $(w_i, w'_i)$ labeled $R$ and $(w'_i, w_i)$ labeled $L$. Incoming edges to $w$ (resp. $w'$) are directed to $w_1$ (resp. $w'_1$). See Figure~\ref{fig:non-generic_2}.
        
        \item {\bf Resolving double edges ($S$)}: For a point $w'\in S$ where $L(w')=R(w')=w$, we split $w$ into two vertices $w_L$ and $w_R$. All incoming edges labeled $L$ are directed to $w_L$, and all incoming edges labeled $R$ are directed to $w_R$. See Figure~\ref{fig:non-generic_3}.
    \end{enumerate}
\end{definition}

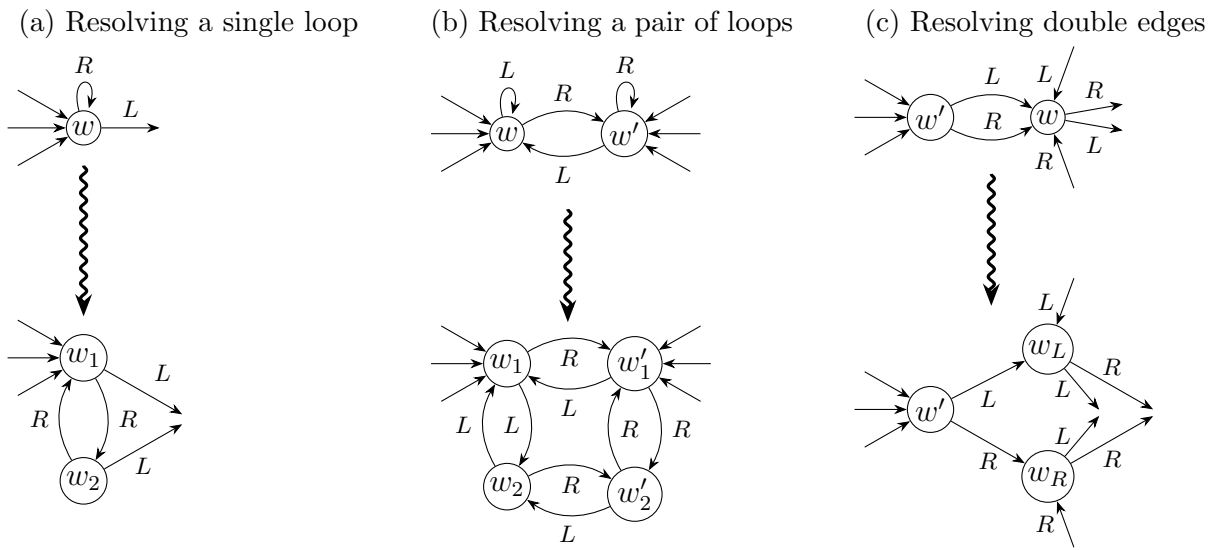
\begin{figure}[h!]
    \centering
    \begin{minipage}[t]{.3\textwidth}
    \subcaption{Resolving a single loop}
    \begin{tikzpicture}[
        baseline=(current bounding box.center),
        node distance = 1cm,
        >=Stealth,
        auto,
        state/.append style={inner sep=1pt, minimum size=0pt} 
        ]
    
        \node[state] (w) {$w$};
    
        \path[->, font=\scriptsize] 
            (w) edge[loop above] node {$R$}
            (w) edge node {$L$} ++(1,0);
        \path[<-, font=\scriptsize]
            (w) edge ++(150:1cm)
            (w) edge ++(180:1cm)
            (w) edge ++(210:1cm);
    
        \draw[->, very thick, decorate, decoration={snake, amplitude=.4mm, segment length=2mm, post length=1mm}] 
            (0,-0.5) -- (0,-2.5);
    
        \node[state, below=2.5cm of w] (w1) {$w_1$};
        \node[state, below=of w1] (w2) {$w_2$};
    
        \path[->, font=\scriptsize] 
            (w1) edge[bend left] node {$R$} (w2)
            (w2) edge[bend left] node {$R$} (w1)
            (w1) edge node {$L$} ++(-30:1.5cm)
            (w2) edge node[below] {$L$} ++(30:1.5cm);
        \path[<-, font=\scriptsize]
            (w1) edge ++(150:1cm)
            (w1) edge ++(180:1cm)
            (w1) edge ++(210:1cm);
    \end{tikzpicture}
    \label{fig:non-generic_1}
    \end{minipage}
    \hfill
    \begin{minipage}[t]{.3\textwidth}
    \subcaption{Resolving a pair of loops}
    \begin{tikzpicture}[
        baseline=(current bounding box.center),
        node distance = 1cm,
        >=Stealth,
        auto,
        state/.append style={inner sep=1pt, minimum size=0pt} 
        ]
    
        \node[state] (w) at (-0.8,0) {$w$};
        \node[state, right=of w] (w') {$w'$};

        \path[->, font=\scriptsize] 
            (w) edge[loop above] node {$L$} (w)
            (w) edge[bend left] node {$R$} (w')
            (w') edge[bend left] node {$L$} (w)
            (w') edge[loop above] node {$R$} (w');
        \path[<-, font=\scriptsize]
            (w) edge ++(150:1cm)
            (w) edge ++(180:1cm)
            (w) edge ++(210:1cm)
            (w') edge ++(150:-1cm)
            (w') edge ++(180:-1cm)
            (w') edge ++(210:-1cm);
        
        \draw[->, very thick, decorate, decoration={snake, amplitude=.4mm, segment length=2mm, post length=1mm}] 
            (0,-1) -- (0,-2.5);
    
        \node[state, below=2.5cm of w] (w1) {$w_1$};
        \node[state, below=of w1] (w2) {$w_2$};
        \node[state, right=of w1] (w'1) {$w'_1$};
        \node[state, below=of w'1] (w'2) {$w'_2$};
    
        \path[->, font=\scriptsize] 
            (w1) edge[bend left] node[left] {$L$} (w2)
            (w2) edge[bend left] node[left] {$L$} (w1)
            (w'1) edge[bend left] node[right] {$R$} (w'2)
            (w'2) edge[bend left] node[right] {$R$} (w'1)
            (w1) edge[bend left] node[below] {$R$} (w'1)
            (w2) edge[bend left] node[below] {$R$} (w'2)
            (w'1) edge[bend left] node[below] {$L$} (w1)
            (w'2) edge[bend left] node[below] {$L$} (w2);
        \path[<-, font=\scriptsize]
            (w1) edge ++(150:1cm)
            (w1) edge ++(180:1cm)
            (w1) edge ++(210:1cm)
            (w'1) edge ++(150:-1cm)
            (w'1) edge ++(180:-1cm)
            (w'1) edge ++(210:-1cm);
    \end{tikzpicture}
    \label{fig:non-generic_2}
    \end{minipage}
    \hfill
    \begin{minipage}[t]{.3\textwidth}
    \subcaption{Resolving double edges}
    \begin{tikzpicture}[
        baseline=(current bounding box.center),
        node distance = 1cm,
        >=Stealth,
        auto,
        state/.append style={inner sep=1pt, minimum size=0pt} 
        ]

        \node[state] (w') at (-0.8,0) {$w'$};
        \node[state, right=of w'] (w) {$w$};
            
        \path[->, font=\scriptsize] 
            (w') edge[bend left] node {$L$} (w)
            (w') edge[bend right] node {$R$} (w)
            (w) edge node[below] {$L$} ++(170:-1cm)
            (w) edge node[above] {$R$} ++(190:-1cm);
        \path[<-, font=\scriptsize]
            (w') edge ++(150:1cm)
            (w') edge ++(180:1cm)
            (w') edge ++(210:1cm)
            (w) edge node[left] {$L$} ++(70:1cm)
            (w) edge node[left] {$R$} ++(-70:1cm);
    
        \draw[->, very thick, decorate, decoration={snake, amplitude=.4mm, segment length=2mm, post length=1mm}] 
            (0,-0.75) -- (0,-2.5);

        \node[state, below=3.25cm of w'] (w'') {$w'$};
        \node[state, below=2.5cm of w] (wL) {$w_L$};
        \node[state, below=of wL] (wR) {$w_R$};
    
        \path[->, font=\scriptsize] 
            (w'') edge node[below] {$L$} (wL)
            (w'') edge node[below] {$R$} (wR)
            (wL) edge node[left] {$L$} ++(-50:1.05cm)
            (wR) edge node[left] {$L$} ++(50:1.05cm)
            (wL) edge node[above] {$R$} ++(-30:1.6cm)
            (wR) edge node[below] {$R$} ++(30:1.6cm);
        \path[<-, font=\scriptsize]
            (w'') edge ++(150:1cm)
            (w'') edge ++(180:1cm)
            (w'') edge ++(210:1cm)
            (wL) edge node[left] {$L$} ++(70:1cm)
            (wR) edge node[left] {$R$} ++(-70:1cm);
    \end{tikzpicture}
    \label{fig:non-generic_3}
    \end{minipage}
    \caption{Adjustments of $\gleb(z)$ for vertices with loops or double edges.}
\end{figure}

Notice that with the above definition, even for non-generic points, a terminal quadruple corresponds to a directed $K_{2,2}$ subgraph of $\gleb(z)$ (this fact, in the generic case, already appeared in~\cite[Lemma 4]{RoseS75}).

For a triangle $z$, we enumerate the similarity classes in $\leb(z)$ by $[n]$, setting the similarity class of $z$ first, and consider the \textbf{adjacency matrix} $A\in M_{n}\left(\R\right)$ of $\gleb(z)$ defined by 
\[
A_{ij}=\begin{cases}
    1, &(j,i)\in E\\
    0, &\text{otherwise}
\end{cases}
\]
Note that our definition of the adjacency matrix is the transpose of the common definition.

We use the vector $v_S\in \Z_+^n$ to encode a collection $S$ of triangles in $\leb(z)$, where the $i$th coordinate of $v_S$ represents the number of triangles of type $i$ in $S$. It is clear that multiplying $A$ by a vector in $\Z_+^n$ models a single step of the $\leb$ process. More precisely, for any collection of triangles $S$, the vector that represents the collection of triangles obtained after applying $\leb$ once to all the triangles in $S$ is $A\cdot v_S$. Furthermore, note that the $\leb$ process partitions all triangles in each step into triangles of the same area. Thus, for every $m\in\N$, the probability vector $\displaystyle{\frac{A^m \cdot e_1}{\norm{A^m \cdot e_1}}}$ represents the distribution of areas occupied by the different classes in $\leb(z)$, after $m$ steps.

\begin{lemma}\label{lem:spectral_radius}
    For every triangle $z$, the matrix $A$ satisfies the following:
    \begin{enumerate}[(a)]
        \item 
        $\lambda = 2$ is an eigenvalue of $A$ and the spectral radius of $A$.
        \item 
        Let $E_2$ and $E_{-2}$ denote the eigenspaces of $\lambda = 2$ and $\lambda=-2$ respectively. Then $\dim\left(E_2\right) = \dim\left(E_{-2}\right) =  q(z)$, the number of terminal quadruples in $\leb(z)$. Furthermore, the eigenvalues $\pm 2$ do not have any non-trivial Jordan blocks. 
    \end{enumerate}
\end{lemma}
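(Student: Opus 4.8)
The plan is to exploit two structural facts about $\gleb(z)$. First, reading off the definition of $\gleb(z)$ — and checking each of the three loop/double-edge resolutions — every vertex emits exactly one $L$-edge and one $R$-edge, so every column of $A$ sums to $2$. Second, the ``recurrent'' part of $\gleb(z)$ consists precisely of the $q(z)$ terminal quadruples, each of which is (after resolution) a directed $K_{2,2}$, i.e.\ a $4$-cycle $z_1-z_2-z_3-z_4-z_1$ with both orientations present on every edge, with bipartition $\{z_1,z_3\}\sqcup\{z_2,z_4\}$, and from which \emph{no} arc escapes. For part (a): since all column sums of $A$ equal $2$, Theorem~\ref{thm:bounds_on_rho(A)} gives $\rho(A)=2$; and since $\mathds{1}^{t}A=2\,\mathds{1}^{t}$, the number $2$ is an eigenvalue of $A^{t}$, hence of $A$.

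For the lower bounds in part (b): with the $K_{2,2}$ description above, each vertex of a terminal quadruple $Q$ has exactly two in-arcs from within $Q$, one from each vertex of the opposite part, and (by the ``no escape'' property) every $Q$-vertex's in-neighbours outside $Q$ are transient. Writing $(Av)_i=\sum_{j\to i}v_j$ (recall the transpose convention), one checks coordinatewise that $v^{+}_{Q}=\mathds{1}_{\{z_1,z_2,z_3,z_4\}}$ satisfies $Av^{+}_{Q}=2v^{+}_{Q}$ and that $v^{-}_{Q}=\mathds{1}_{\{z_1,z_3\}}-\mathds{1}_{\{z_2,z_4\}}$ satisfies $Av^{-}_{Q}=-2v^{-}_{Q}$: at a quadruple vertex only the two in-quadruple in-neighbours contribute, and at every other vertex all in-neighbours (which are transient, again using ``no escape'') carry value $0$. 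Ranging over the $q(z)$ terminal quadruples, whose vertex sets are disjoint, yields $q(z)$ linearly independent $2$-eigenvectors and $q(z)$ linearly independent $(-2)$-eigenvectors, so $\dim E_2\ge q(z)$ and $\dim E_{-2}\ge q(z)$.

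For the upper bound at $\lambda=2$, I would locate the strongly connected components (SCCs) of $\gleb(z)$ attaining the spectral radius. Each terminal quadruple is a single SCC with $\rho=2$. Any other SCC $C$ with $|C|\ge 2$ is not closed: by Theorem~\ref{thm:q(z)=1} and the surrounding analysis (every orbit reaches the terminal region $A$), the only closed SCCs of $\gleb(z)$ are the terminal quadruples, so some vertex of $C$ has an arc leaving $C$, whence the corresponding column of the principal submatrix $A[C]$ sums to at most $1<2$; as $A[C]$ is irreducible, Lemma~\ref{lem:non-final_irreducible_blocks} gives $\rho(A[C])<2$. Single-vertex SCCs have no loop (loops were resolved) and so $\rho=0$. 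Hence the SCCs with $\rho=\rho(A)=2$ are exactly the $q(z)$ terminal quadruples, and Theorem~\ref{thm:basis_of_eigenvectors} bounds the generalized $2$-eigenspace by $q(z)$. Combined with the lower bound, $\dim E_2=q(z)$ equals the dimension of the generalized $2$-eigenspace, so $\lambda=2$ has no nontrivial Jordan block. For $\lambda=-2$, which Theorem~\ref{thm:basis_of_eigenvectors} does not directly reach, I would pass to $A^{2}$: its column sums are all $4$, so $\rho(A^{2})=4$, and its generalized $4$-eigenspace is the direct sum of the generalized $2$- and $(-2)$-eigenspaces of $A$ (the only eigenvalues of $A$ squaring to $4$). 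In the digraph of $A^{2}$, each terminal quadruple breaks into its two parts $\{z_1,z_3\}$ and $\{z_2,z_4\}$, each inducing the $2\times 2$ all-twos block with $\rho=4$; by the same escaping-column argument applied to $A^{2}$ via Lemma~\ref{lem:non-final_irreducible_blocks}, and using once more that orbits never return from a terminal quadruple, these $2q(z)$ parts are the only SCCs of the $A^{2}$-digraph with $\rho=4$. Theorem~\ref{thm:basis_of_eigenvectors} then gives $\dim(\text{generalized }4\text{-eigenspace of }A^{2})\le 2q(z)$, hence $\dim(\text{generalized }(-2)\text{-eigenspace of }A)\le q(z)$, and with the lower bound this forces $\dim E_{-2}=q(z)$ with no nontrivial Jordan block.

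The main obstacle I expect is the bookkeeping surrounding the non-generic resolutions and the precise edge structure of a terminal quadruple: verifying that the resolved graph remains out-regular of degree $2$, that a terminal quadruple is genuinely a bidirected $4$-cycle with the stated bipartition (so that $v^{+}_{Q}$ and $v^{-}_{Q}$ are honest eigenvectors, including in the degenerate cases $l(z)\in\{1,2,3\}$ of Corollary~\ref{cor:triangles_with_l(z)<4}), and that no arc ever leaves a terminal quadruple. The second delicate point is the $\lambda=-2$ step: correctly identifying the $\rho=4$ SCCs of the $A^{2}$-digraph and ruling out any contribution from transient vertices, which ``leak'' toward the terminal quadruples and therefore cannot sit in a closed, hence $\rho=4$, component of $A^{2}$.
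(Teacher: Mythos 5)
Your proposal is correct and follows essentially the same route as the paper: column sums give part (a), the explicit quadruple-supported vectors $v_Q^{+}$ and $v_Q^{-}$ give the lower bounds, and the combination of Lemma~\ref{lem:non-final_irreducible_blocks} with Theorem~\ref{thm:basis_of_eigenvectors} (applied to $A^2$, whose $\rho=4$ components are the $2q(z)$ halves of the quadruples) gives the matching upper bounds and the absence of nontrivial Jordan blocks. The only differences are cosmetic — you treat $\lambda=2$ directly on $A$ before passing to $A^2$ for $\lambda=-2$, and you explicitly dispose of single-vertex SCCs, which Lemma~\ref{lem:non-final_irreducible_blocks} (requiring $k\ge 2$) does not cover; the paper leaves that case implicit.
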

\begin{proof}
    \begin{enumerate}[(a)]
        \item 
        Since the column sums of $A$ are all equal to $2$, we have that $\lambda = 2$ is an eigenvalue of $A^t$, with the all-ones vector $\mathds{1}$ as an eigenvector. Since the sets of eigenvalues of $A$ and of $A^t$ are equal, we know that $\lambda = 2$ is an eigenvalue of $A$. By Theorem \ref{thm:bounds_on_rho(A)}, we get that $2$ is the spectral radius of $A$ in this case.

        \item 
        Let $1\leq k=q(z)$ be the number of terminal quadruples in $\gleb(z)$, then we can find an $n\times n$ permutation matrix $P$ such that the matrix $P^tAP$ has the following block triangular form: 
        \[
        P^tAP=
        \left(
        \begin{array}{ccc|cccc} 
            & & & 0 & \cdots & \cdots & 0 \\
            & C & & \vdots & & & \vdots \\
            & & & 0 & \cdots & \cdots & 0 \\
            \hline 
            * & \cdots & * & B_1 & & & \\
            \vdots & & \vdots & & B_2 & & \\
            \vdots & & \vdots & & & \ddots & \\
            * & \cdots & * & & & & B_k \\
        \end{array}
        \right)
        \]
 
        where the first columns of $A$ and of $P^tAP$ are the same, and 
        $
        B_i=
        \begin{pmatrix}
            0 & 0 & 1 & 1 \\
            0 & 0 & 1 & 1 \\
            1 & 1 & 0 & 0 \\
            1 & 1 & 0 & 0
        \end{pmatrix}
        $
        for each $1\leq i\leq k$. Denote by $e_j$ the $j$-th vector of the standard basis of $\R^n$. Thus, for every $1\leq i\leq k$ the vector
        \begin{equation}\label{eq:v_i}
        v_i=e_{n-4(k-i)} + e_{n-4(k-i)-1} + e_{n-4(k-i)-2} + e_{n-4(k-i)-3}
        \end{equation}
        is an eigenvector of $P^tAP$ corresponding to the eigenvalue $\lambda = 2$, and the vectors $v_1,\dots,v_k$ are linearly independent, so we obtain
        \[
        \dim\left(E_2\right)\geq k.
        \]
        In a similar way, for every $1\leq i\leq k$, the vector
        \begin{equation}\label{eq:tilde_v_1}
        \tilde v_i=e_{n-4(k-i)} + e_{n-4(k-i)-1} - e_{n-4(k-i)-2} -e_{n-4(k-i)-3}
        \end{equation}
        is an eigenvector of $P^tAP$ corresponding to the eigenvalue $\lambda = -2$, and the vectors $\tilde v_1,\dots,\tilde v_k$ are linearly independent, so
        \[
        \dim\left(E_{-2}\right)\geq k.
        \]
        For the reverse inequality, we apply Theorem \ref{thm:basis_of_eigenvectors} to $A^2$. 
        The vectors $v_1,\ldots,v_k,\tilde v_1,\ldots,\tilde v_k$ are $2k$ linearly independent eigenvectors of $A^2$ that correspond to the eigenvalue $\lambda=4$, and thus its geometric multiplicity is at least $2k$.  

        By Proposition~\ref{prop:flow_to-A}, from every vertex there is a path to a vertex in a terminal quadruple. In particular, any strongly connected component $\alpha_j$ of $\gleb(z)$ that does not correspond to a terminal quadruple has at least one outgoing edge. This implies that its corresponding adjacency matrix $B_j$ is irreducible (since $\alpha_j$ is strongly connected) and has at least one column sum that is strictly smaller than $2$. Applying Lemma \ref{lem:non-final_irreducible_blocks}, we deduce that any such component $\alpha_j$ has $\rho(\alpha_j)<2$. Thus, the strongly connected components of $\gleb(z)$ with spectral radius $2$ are precisely the terminal quadruples. 
        
        Observe that, for the action of $A^2$, each terminal quadruple is partitioned into $2$ sets of two terminal triangles each, which form all the strongly connected components. Hence, the fact that the dimension of the generalized eigenspace of $\rho(A^2) =4$ is $2k$ follows from Theorem \ref{thm:basis_of_eigenvectors}. In particular, $\dim(E_{2})=\dim(E_{-2})=k$ and the eigenvalues $\pm 2$ do not have any non-trivial Jordan blocks.
    \end{enumerate}
\end{proof}

\begin{lemma}\label{lem:perp_to_1}
Every generalized eigenvector $u$ of $A$ that corresponds to an eigenvalue $\mu\neq 2$ satisfies $u\in \mathds{1}^\perp$.
\end{lemma}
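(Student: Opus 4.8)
The plan is to exploit the fact that $A^t$ has the all-ones vector $\mathds{1}$ as an eigenvector for the eigenvalue $\lambda = 2$, together with the standard orthogonality between generalized eigenspaces of a matrix and its transpose belonging to distinct eigenvalues. Concretely, first I would recall from the proof of Lemma~\ref{lem:spectral_radius}(a) that the column sums of $A$ are all equal to $2$, so $A^t\mathds{1} = 2\mathds{1}$; thus $\mathds{1}$ spans a (one-dimensional, by what follows) eigenspace of $A^t$ for the eigenvalue $2$. The key algebraic fact is: if $u$ lies in the generalized eigenspace of $A$ for an eigenvalue $\mu$, and $w$ lies in the generalized eigenspace of $A^t$ for an eigenvalue $\mu' \neq \mu$, then $\langle u, w\rangle = 0$. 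Applying this with $w = \mathds{1}$, $\mu' = 2$, and any $\mu \neq 2$ gives exactly $u \in \mathds{1}^\perp$.

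The main step is therefore to justify that orthogonality. I would prove it directly: write $u \in \mnull\bigl((A - \mu I)^{d}\bigr)$ for suitable $d$. Then for any polynomial identity, observe that $(A^t - 2I)\mathds{1} = 0$, so $\mathds{1}$ is annihilated by the linear factor $(x-2)$, while $u$ is annihilated by $(x-\mu)^d$. Since $\mu \neq 2$, the polynomials $(x-2)$ and $(x-\mu)^d$ are coprime, so there exist polynomials $p,q$ with $p(x)(x-2) + q(x)(x-\mu)^d = 1$. Substituting $x = A$ and pairing against $\mathds{1}$: for any vector $v$,
\[
\langle v, \mathds{1}\rangle = \langle p(A)(A - 2I)v + q(A)(A-\mu I)^d v,\ \mathds{1}\rangle.
\]
Taking $v = u$, the second summand vanishes because $(A-\mu I)^d u = 0$, and the first summand equals $\langle (A-2I)v',\ \mathds{1}\rangle = \langle v',\ (A^t - 2I)\mathds{1}\rangle = 0$ where $v' = p(A)u$. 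Hence $\langle u, \mathds{1}\rangle = 0$, which is the claim.

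I expect the only subtlety — and the ``main obstacle'', though a mild one — to be bookkeeping the transpose conventions: the paper's adjacency matrix $A$ is the transpose of the usual one, and the eigenvalue bound (Theorem~\ref{thm:bounds_on_rho(A)}) and the column-sum observation must be tracked carefully so that it is genuinely $A^t\mathds{1} = 2\mathds{1}$ and not $A\mathds{1} = 2\mathds{1}$ that one uses. Everything else is the standard coprime-polynomials argument for decomposing a space into generalized eigenspaces, so no serious difficulty arises; alternatively one could phrase it via the primary decomposition $\R^n = \bigoplus_\nu \mnull((A-\nu I)^{d(\nu)})$ and note that $\mathds{1}^\perp$, being $A$-invariant (as $A^t\mathds{1}$ is a multiple of $\mathds{1}$), contains every generalized eigenspace of $A$ except the one for $\mu = 2$.
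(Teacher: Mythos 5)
Your proof is correct, and it reaches the conclusion by a different (though closely related) route than the paper. Both arguments hinge on the same adjointness observation: the column sums of $A$ are all $2$, so $A^t\mathds{1}=2\mathds{1}$ and hence $\langle Av,\mathds{1}\rangle=2\langle v,\mathds{1}\rangle$ for every $v$. The paper then works with a Jordan chain $u_1,\dots,u_r$ for $\mu$ and inducts along it: first $\mu\langle u_1,\mathds{1}\rangle=2\langle u_1,\mathds{1}\rangle$ forces $\langle u_1,\mathds{1}\rangle=0$, and then the relation $2\langle u_{i+1},\mathds{1}\rangle=\mu\langle u_{i+1},\mathds{1}\rangle+\langle u_i,\mathds{1}\rangle$ propagates the vanishing up the chain. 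You instead take an arbitrary $u\in\mnull\bigl((A-\mu I)^{d}\bigr)$, use the B\'ezout identity $p(x)(x-2)+q(x)(x-\mu)^{d}=1$ to write $u=(A-2I)\,p(A)u$, and pair against $\mathds{1}$ in a single step; this is the standard proof that generalized eigenspaces of $A$ and of $A^t$ for distinct eigenvalues are orthogonal. What your version buys is independence from any choice of Jordan basis and an immediately general statement; what the paper's version buys is a completely elementary computation with no polynomial algebra, which is essentially your argument unrolled. Your concluding aside (that $\mathds{1}^\perp$ is $A$-invariant and therefore contains every generalized eigenspace except the one for $\mu=2$) is fine as intuition but, taken on its own, would still need a small dimension-counting or non-degeneracy argument to rule out a generalized eigenspace for some $\mu\neq 2$ meeting $\mathds{1}^\perp$ in a proper hyperplane; since your main coprime-polynomial argument is complete, nothing is missing.
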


\begin{proof}
    Let $(u_1,\ldots,u_r)$ be a Jordan chain of $A$ that corresponds to an eigenvalue $\mu\neq 2$. That is, $A u_1 = \mu u_1$, $u_1\neq 0$, and for every $2\le i\le r$ we have $A u_{i}=\mu u_i +u_{i-1}$. 

    We first observe that $u_1\in \mathds{1}^\perp$. Indeed,
    \[
    \mu\inpro{u_1}{\mathds{1}} = \inpro{\mu u_1}{\mathds{1}} = \inpro{A u_1}{\mathds{1}} = \inpro{u_1}{A^t\cdot\mathds{1}} = \inpro{u_1}{2\cdot\mathds{1}} = 2\inpro{u_1}{\mathds{1}}. 
    \]
    Since $\mu\neq 2$ we have $\inpro{u_1}{\mathds{1}} = 0$. 

    We proceed by induction. Suppose that $u_i\in \mathds{1}^\perp$. Since $A u_{i+1}=\mu u_{i+1}+u_i$, we have 
    \[2\inpro{u_{i+1}}{\mathds{1}} = \inpro{u_{i+1}}{A^t \mathds{1}} = \inpro{A u_{i+1}}{\mathds{1}} = \inpro{\mu u_{i+1}+u_i}{\mathds{1}} = \mu\inpro{u_{i+1}}{\mathds{1}} + \underbrace{\inpro{u_i}{\mathds{1}}}_{=0}, \] 
    and therefore $\inpro{u_{i+1}}{\mathds{1}} = 0$.
\end{proof}

The proof of Theorem \ref{thm:occupying_area} is implied by the more detailed proposition below. 

\begin{proposition}
Let $z$ be a triangle and $A$ as above. Then
    \begin{enumerate}[(a)]
        \item    
    The limits 
    $\we = \lim_{m\to\infty}\limits\left(\frac{A}{2}\right)^{2m}e_1$ and $\wo = \lim_{m\to\infty}\limits\left(\frac{A}{2}\right)^{2m+1}e_1$ exist. 
    \item 
    The vectors $\we$ and $\wo$ are probability vectors, and there exist $v\in E_2, \tilde v\in E_{-2}$ such that $\we = v+\tilde v$ and $\wo = v - \tilde v$. 
    \item    
    For $j \in \mathbb{N}$ let $w_j\in \R^{l(z)}$ be the probability vector that describes the partition of the area of $z$ into triangles of different similarity classes, after $j$ steps. Then there exists $\xi\in (0,1)$ such that 
    \[\begin{cases}
        \norm{\wo - w_j} = O(\xi^j),& j \text{ is odd. }\\
        \norm{\we - w_j} = O(\xi^j),& j \text{ is even. }
    \end{cases}\]
    \end{enumerate}
\end{proposition}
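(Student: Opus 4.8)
The plan is to diagonalize the action of $A/2$ on the relevant invariant subspaces using the spectral information already assembled in Lemma~\ref{lem:spectral_radius} and Lemma~\ref{lem:perp_to_1}. First I would decompose $\R^n = (E_2 \oplus E_{-2}) \oplus W$, where $W$ is the sum of all generalized eigenspaces of $A$ for eigenvalues $\mu$ with $|\mu| < 2$; this is a valid direct sum decomposition because, by Lemma~\ref{lem:spectral_radius}(b), the eigenvalues $\pm 2$ have no nontrivial Jordan blocks, so their generalized eigenspaces coincide with $E_2$ and $E_{-2}$, and all other eigenvalues are strictly smaller in modulus since $\rho(A)=2$. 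Both $E_2 \oplus E_{-2}$ and $W$ are $A$-invariant. Write $e_1 = v + \tilde v + r$ accordingly, with $v \in E_2$, $\tilde v \in E_{-2}$, $r \in W$. Then $(A/2)^{2m} e_1 = v + \tilde v + (A/2)^{2m} r$ and $(A/2)^{2m+1} e_1 = v - \tilde v + (A/2)^{2m+1} r$.

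The core estimate is that $\norm{(A/2)^j r} = O(\xi^j)$ for some $\xi \in (0,1)$. This follows from the standard fact that for a matrix $M$ with spectral radius $\rho(M) < 1$, we have $\norm{M^j} \to 0$ at an exponential rate: concretely, pick any $\xi$ with $\rho(M) < \xi < 1$, and use that $M^j$ has entries which are polynomials in $j$ times $j$-th powers of eigenvalues of modulus $\le \rho(M)$ (from the Jordan form), so $\norm{M^j} = O(\xi^j)$. Here $M$ is the restriction of $A/2$ to $W$, whose spectral radius is $\frac{1}{2}\max\{|\mu| : \mu \text{ eigenvalue of } A,\ |\mu|<2\} < 1$. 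This immediately gives part~(c) with $\wo := v - \tilde v$ and $\we := v + \tilde v$, and also gives part~(a): the limits exist and equal $v \pm \tilde v$.

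For part~(b), I first note $\we$ and $\wo$ are probability vectors: each $(A/2)^j e_1$ is, by the discussion preceding the proposition (the $\leb$ process partitions area into equal pieces and $A \cdot v_S$ counts triangles after one step, with $\norm{A^m e_1} = 2^m$), a nonnegative vector whose entries sum to $1$; nonnegativity and the unit $\ell^1$-sum are closed conditions, hence pass to the limit. The decomposition $\we = v + \tilde v$, $\wo = v - \tilde v$ with $v \in E_2$, $\tilde v \in E_{-2}$ is exactly what the decomposition of $e_1$ above produced, so this part is essentially bookkeeping. I would also remark here that the claim in Theorem~\ref{thm:occupying_area}(1) that the nonzero entries of $\we,\wo$ correspond precisely to terminal-quadruple classes follows from the block-triangular structure of $P^t A P$ in Lemma~\ref{lem:spectral_radius}: the coordinates of $e_1$ outside the terminal quadruples lie in the span of generalized eigenspaces of the block $C$, whose spectral radius is $<2$ by Lemma~\ref{lem:non-final_irreducible_blocks}, so they are killed in the limit; combined with the explicit eigenvectors $v_i, \tilde v_i$ from \eqref{eq:v_i}–\eqref{eq:tilde_v_1} being supported on terminal quadruples, the limit is supported there, and positivity on every terminal-quadruple coordinate follows since $e_1$ reaches every terminal quadruple (Perron–Frobenius applied within each $B_i$, or directly since $(A/2)^j e_1 \ge 0$ and the terminal blocks are aperiodic-in-pairs).

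The main obstacle is not any single hard step but making the Jordan-form decay estimate clean and self-contained, and correctly handling the interplay of the period-$2$ behavior with the eigenvalue $-2$. One must be careful that $E_2 \oplus E_{-2}$ is genuinely complementary to $W$ inside $\R^n$ (true since these are all the generalized eigenspaces and $A$ is a real matrix, so complex eigenvalues of modulus $<2$ contribute to $W$ in conjugate pairs), and that the convergence of $(A/2)^{2m} e_1$ and $(A/2)^{2m+1} e_1$ does not require $(A/2)^j e_1$ itself to converge — which it does not, precisely because of the $-2$ eigenvalue. Aside from that, everything reduces to linear algebra already prepared in the preceding lemmas, so I expect the proof to be short.
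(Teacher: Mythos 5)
Your proposal is correct and follows essentially the same route as the paper: both rest on Lemma~\ref{lem:spectral_radius} (so that $\pm 2$ have no nontrivial Jordan blocks), a generalized-eigenspace decomposition of $e_1$, and the standard Jordan-form estimate that the component in the eigenvalues of modulus $<2$ decays like $O(\xi^j)$; the only cosmetic difference is that you decompose with respect to $A$ into $(E_2\oplus E_{-2})\oplus W$, while the paper phrases the same decay through the generalized eigenspaces of $A^2$ (whose top eigenspace is $E_2\oplus E_{-2}$). Your closing remark on the support of $\we,\wo$ is extra material belonging to Theorem~\ref{thm:occupying_area}(1) rather than to this proposition, but it is consistent with the paper's argument.
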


\begin{proof}
  \begin{enumerate}[(a)]
      \item
        It suffices to show that the first limit exists and then use $\wo=\frac{A}{2}\cdot \we$. 

        By Lemma \ref{lem:spectral_radius}, the spectral radius of $A^2$ is equal to $4$. Let $4=\lambda_1>|\lambda_2|\ge\ldots\ge |\lambda_s|$ be the distinct eigenvalues of $A^2$, in a decreasing order in absolute value. Since the space is a direct sum of the generalized eigenspaces, for each $1\le i\le s$ let $u_i$ be a vector in the generalized eigenspace of $\lambda_i$ such that 
        \begin{equation*}
            e_1 = \sum_{i=1}^su_i.
        \end{equation*}
        Since $e_1\notin \mathds{1}^\perp$, by Lemma \ref{lem:perp_to_1}, $u_1\neq 0$. By Lemma \ref{lem:spectral_radius}, the eigenvalues $2$ and $-2$ of $A$ do not have non-trivial Jordan blocks, so we have
        \[
        \left(\frac{A}{2}\right)^{2m}e_1 = \frac{(A^2)^m\cdot \sum_{i=1}^su_i}{4^m} = u_1+\frac{\sum_{i=2}^s (A^2)^m\cdot u_i}{4^m}, 
        \]
        thus
        \begin{equation}\label{eq:rate_of_convergence}
        \norm{\left(\frac{A}{2}\right)^{2m}e_1 -u_1} \le   \sum_{i=2}^s \frac{|\lambda_i|^m\cdot |g_i(m)|\cdot \norm{u_i}}{4^m}, 
        \end{equation}
        where each $g_i(m)$ is a fixed polynomial whose degree is at most the size of the maximal Jordan block of $\lambda_i$. Since $|\lambda_2|,\ldots,|\lambda_s|<4$, we have $\lim_{m\to \infty}\limits\left(\frac{A}{2}\right)^{2m}e_1 = u_1$. 
        \item 
        Since $\frac{A}{2}$ is a non-negative matrix with all column sums equal to $1$, it preserves the set of non-negative probability vectors. In particular, $\left(\frac{A}{2}\right)^j e_1$ is a probability vector for every $j\in\N$, and hence so are $\we$ and $\wo$.
        Note that the eigenspace of $A^2$ corresponding to $\lambda = 4$ is $E_2\oplus E_{-2}$. Since $\we = u_1$, there exist $v\in E_2$, $\tilde v\in E_{-2}$ such that $\we = v+\tilde v$. Furthermore, the equality $\wo = v - \tilde v$ follows from the fact that $\wo=\frac{A}{2}\cdot \we$. 
      \item 
        From the discussion above, it is clear that $w_j = \left(\frac{A}{2}\right)^j e_1$. Then for $j=2m$, we have 
        \[
        \norm{\we - w_j} \stackrel{\eqref{eq:rate_of_convergence}}= \sum_{i=2}^s \frac{|\lambda_i|^m\cdot |g_i(m)|\cdot \norm{u_i}}{4^m} = O(\xi^j), 
        \]
        where $\xi$ is any number in the interval $\left(\frac{\sqrt{|\lambda_2|}}{2},1\right)$. The odd case is proven similarly, and this completes the proof. \qedhere
  \end{enumerate}     
\end{proof}

\section{Concluding remarks and open problems}\label{sec:concluding}

\subsection{A remark on the geometric distribution of $\leb(z)$}\label{subsec:leb_geo_dist}
As we already mentioned in~\S\ref{subsec:geometric_structure}, we have observed several surprising properties of the geometric position of the points of $\leb(z)$ in $\cD$, in particular, Proposition~\ref{prop:terminal_sets_are_on_a_circle} characterized the geometric position of a terminal quadruple. 

Furthermore, as we have seen in the proof of Lemma~\ref{lem:2^m_points_in_B}, the points 
\[\set{ h^j(z_m)}{j\in \{0,\ldots,2^{m-1}\} }\subseteq \V\] 
lie on the circle $C_m$, which is tangent to the real axis at the origin. As in the proof of Theorem~\ref{thm:arbitrarily_many_quadruples}, applying $L^m$ to this set produces $2^{m-1}+1$ distinct triangles in region $\I$ that lie on a circle that is tangent to the real axis at the origin. In fact, these $2^{m-1}+1$ points belong to distinct terminal quadruples: since $L$ in region $\I$ is an inversion with respect to the geodesic $\absolute{z}=\frac{\sqrt2}{2}$, for every $z\in \I$, the only circle that is centered at the imaginary axis and passes through the points $z$ and $L(z)$ is the one centered at $\frac{\sqrt{2}}{2}i$. A direct computation shows that this circle is not tangent to the real axis at $0$, since the circle $C_z$ whose center is at $\frac{\sqrt{2}}{2}i$ does not intersect region $\I$.

We conjecture that the whole orbit $\leb(z)$ can be partitioned into a certain number of subsets, each lying on a certain arc that is tangent to the real axis at $0$ or at $\frac{1}{2}$. The precise number and location of these arcs, and the precise number and location of points on these arcs requires a more refined analysis in the spirit that we have performed. We leave this for future research.

\subsection{A remark on continuity and small perturbations}
Since a small perturbation of the vertices of a triangle produces a nearby triangle, it is natural to consider the relation between $\leb(z)$ and $\leb(z')$ for close triangles $z$ and $z'$. Let $d:\HH\times\HH\to [0,\infty)$ be the hyperbolic metric. For a set $F\subseteq\HH$ and $\varepsilon>0$ we denote by 
    \[
    N_\varepsilon[F] = \bigcup_{x\in F}B(x,\varepsilon),
    \]
the $\varepsilon$-neighborhood of $F$. 
In \cite[Lemma 3.2]{PerdP14}, the following non-increasing property of the metric $d$ was proven. For every $z,w\in \cD$ we have 
    \[
    d(L(w),L(z))\le d(w,z) \quad\text{ and }\quad d(R(w),R(z))\le d(w,z).
    \]
    
We mention the following continuity result of $\leb$, which is an immediate corollary of the non-increasing property.

\begin{corollary}\label{cor:perturbation}
    For every $z\in \cD$ we have the following: for every $\varepsilon>0$ and every $w\in \cD$ with $d(z,w)<\varepsilon$ we have $\leb(w) \subseteq N_\varepsilon[\leb(z)]$. 
\end{corollary}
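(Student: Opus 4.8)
The plan is to derive Corollary~\ref{cor:perturbation} directly from the non-increasing property of the hyperbolic metric under the maps $L$ and $R$ quoted just above, together with a straightforward induction on the length of words in $L$ and $R$. The orbit $\leb(z)$ is exactly the set of all points obtained from $z$ by applying finite compositions of $L$ and $R$; that is, $\leb(z) = \set{g(z)}{g \in \mathcal{W}}$, where $\mathcal{W}$ is the set of all finite words (including the empty word, giving $z$ itself) in the alphabet $\{L, R\}$, interpreted as composition of the corresponding piecewise (anti-)M\"obius maps. The same description applies to $\leb(w)$ with the same index set $\mathcal{W}$.

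First I would fix $z \in D$, let $\varepsilon > 0$, and take $w \in D$ with $d(z, w) < \varepsilon$. The key step is to show that for every word $g \in \mathcal{W}$ we have $d(g(w), g(z)) \le d(z, w) < \varepsilon$. This follows by induction on the length of $g$: the base case (empty word) is the hypothesis $d(z,w) < \varepsilon$, and the inductive step writes $g = f \circ g'$ with $f \in \{L, R\}$ and $g'$ a shorter word, so that
\[
d(g(w), g(z)) = d(f(g'(w)), f(g'(z))) \le d(g'(w), g'(z)) \le d(z,w),
\]
where the first inequality is precisely the non-increasing property from \cite[Lemma 3.2]{PerdP14} (noting that $g'(w), g'(z) \in D$, so the property applies), and the second is the inductive hypothesis. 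Consequently $g(w) \in B(g(z), \varepsilon)$ for every $g \in \mathcal{W}$.

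Finally, since every point of $\leb(w)$ has the form $g(w)$ for some $g \in \mathcal{W}$, and the corresponding point $g(z) \in \leb(z)$, we conclude $g(w) \in B(g(z), \varepsilon) \subseteq N_\varepsilon[\leb(z)]$, hence $\leb(w) \subseteq N_\varepsilon[\leb(z)]$, as claimed. I do not anticipate any real obstacle here; the only point requiring a moment's care is making sure the non-increasing property is applied to points genuinely lying in $D$ at each stage, which holds because $L$ and $R$ map $D$ into $D$. One could also remark that, in the non-generic case where $\gleb(z)$ is defined with split vertices, this statement is about the geometric orbit in $D$ rather than the (formally enlarged) vertex set, so no modification is needed.
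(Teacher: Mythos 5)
Your proof is correct and is precisely the argument the paper intends: the paper presents the corollary as an immediate consequence of the non-increasing property $d(L(w),L(z))\le d(w,z)$, $d(R(w),R(z))\le d(w,z)$, and your induction on the length of words in $\{L,R\}$ is just the explicit write-up of that, using the fact that every element of $\leb(w)$ has the form $g(w)$ with the matching $g(z)\in\leb(z)$. No gaps; your remark that $L$ and $R$ map $D$ into $D$ and that the statement concerns the geometric orbit (not the split vertex set of $\gleb$) is a correct and sufficient justification of the only details that need checking.
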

This phenomenon is exemplified in Figure \ref{fig:perturbation} below, where one can readily see that triangles with low complexity  can have nearby triangles with high complexity.
\begin{figure}[h!]
    \begin{tabular}{@{\hspace{-2.3em}}c@{\hspace{-2.3em}}c}
    \includegraphics[width=0.59\linewidth]{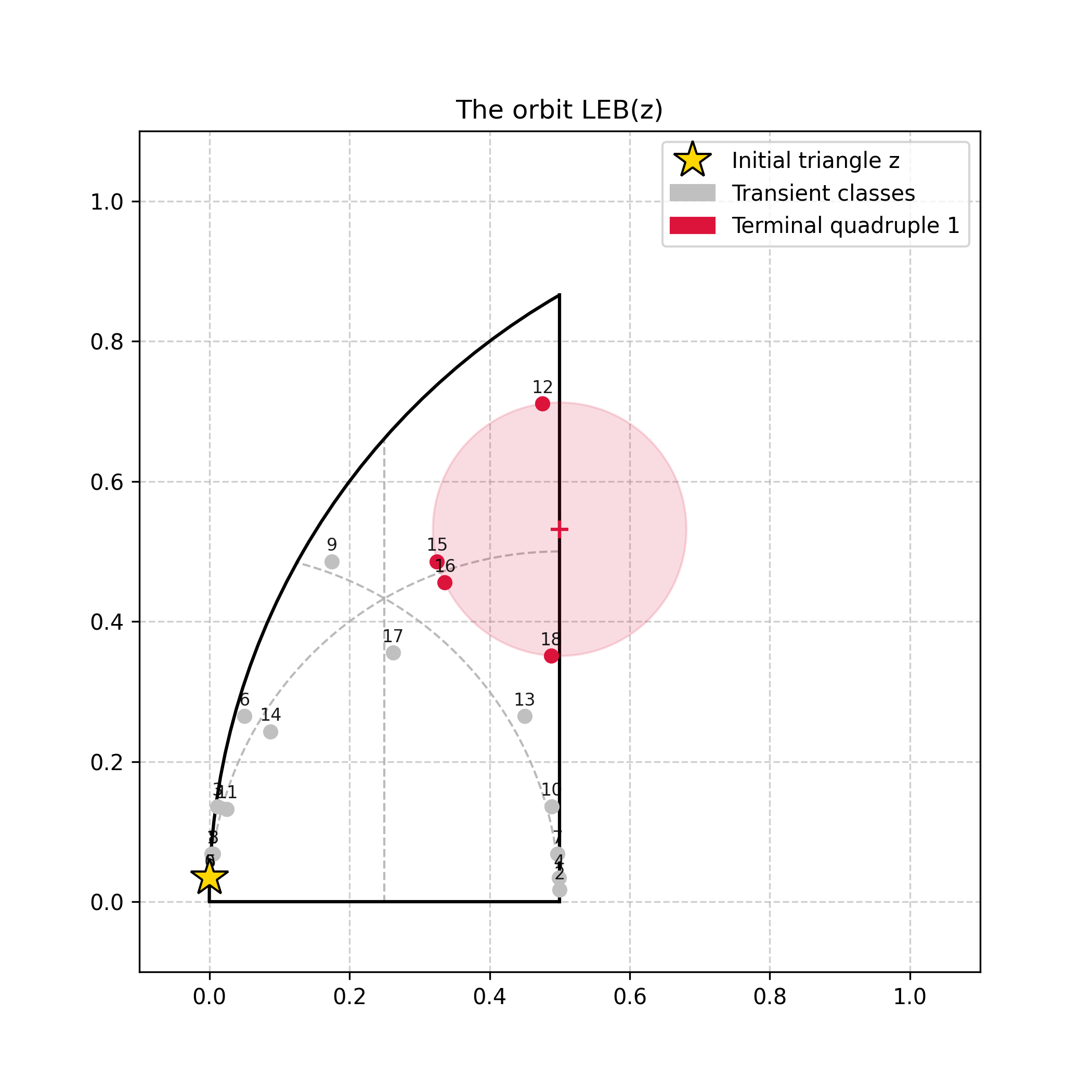}
    &
    \includegraphics[width=0.59\linewidth]{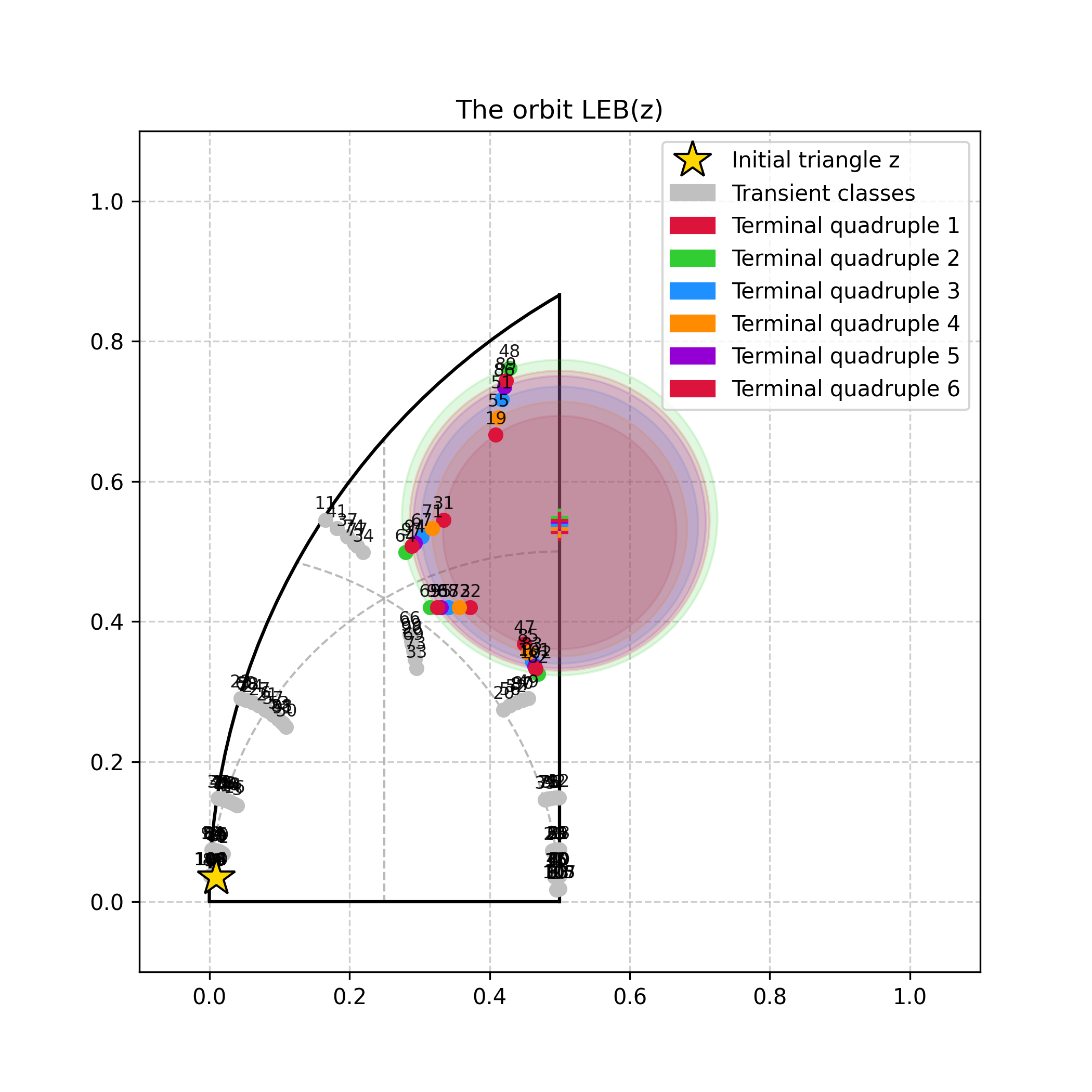}
    \end{tabular}
    \caption{On the left, we see the orbit of the isoceles triangle $z=\frac{1}{1700}+\frac{\sqrt{3399}}{1700}i$, which has a total of $19$ similarity classes and $1$ terminal quadruple. On the right, we see the orbit of the nearby triangle $z'=\frac{1}{100}+\frac{\sqrt{3399}}{1700}i$, which has a total of $109$ similarity classes and $6$ terminal quadruples.}
    \label{fig:perturbation}
\end{figure}

\subsection{Open problems}
In view of computational observations about $\gleb(z)$ we ask the following.
\begin{enumerate}
    \item 
    Is it true that the spectrum of $\gleb(z)$ is contained in $\{\pm 2, \pm\sqrt{2},\pm 1, 0\}$?
    \item Furthermore, is it true that $\gleb(z)$ is bipartite?
    \item 
    Classify the cases where the adjacency matrix of $\gleb(z)$ is diagonalizable.
\end{enumerate}

In view of Theorem~\ref{thm:occupying_area}, an important complexity measure of the mesh is the number $q(z)$ of terminal quadruples, thus we propose:
\begin{enumerate}
    \setcounter{enumi}{3}
    \item Find a formula for $q(z)$ in terms of $z$. 
\end{enumerate}

\subsection*{Acknowledgements} We thank Nati Linial for valuable comments. The first author thanks Karim Adiprasito for introducing him to the $\leb$ process.


\bibliographystyle{alpha}
\bibliography{leb}

\end{document}